\title{Property-Testing in Sparse Directed Graphs: $3$-Star-Freeness and Connectivity}
\author{Frank Hellweg and Christian Sohler\thanks{Research partly supported by DFG grant SO 514/3-2 and ERC Starting Grant 307696}}
\institute{Department of Computer Science, Technische Universit\"{a}t Dortmund \\\email{\{frank.hellweg,christian.sohler\}@tu-dortmund.de}}
\newtheorem{mtheorem}{Theorem}
\newtheorem{observation}{Observation}
\newcommand{\E}{\ensuremath{\mathrm{E}}}
\newcommand{\Ovon}{\ensuremath{\mathcal{O}}}
\newcommand{\ovon}{\ensuremath{\textit{o}}}
\newcommand{\NN}{\mathbb{N}}
\newcommand{\calA}{\mathcal{A}}
\newcommand{\calB}{\mathcal{B}}
\newcommand{\calC}{\mathcal{C}}
\newcommand{\calH}{\mathcal{H}}
\newcommand{\degin}{\deg_{\text{in}}}
\newcommand{\degout}{\deg_{\text{out}}}
\newcommand{\indeg}{\deg_{\text{in}}}
\definecolor{gray70} {gray} {0.7}
\definecolor{darkgreen}{rgb}{0.0,0.6,0.0}
\definecolor{Apricot}     {cmyk}{0,0.32,0.52,0}
\definecolor{Aquamarine}  {cmyk}{0.82,0,0.30,0}
\definecolor{CadetBlue}   {cmyk}{0.62,0.57,0.23,0}
\definecolor{DarkGray}    {gray}{0.2}
\definecolor{DarkGreen}   {rgb}{0,0.5,0}
\definecolor{ForestGreen} {cmyk}{0.91,0,0.88,0.12}
\definecolor{Gold}        {rgb}{1.,0.84,0.}
\definecolor{Goldenrod}   {cmyk}{0,0.10,0.84,0}
\definecolor{IndianRed}   {rgb}{0.8,0.36,0.36}
\definecolor{Lavender}    {cmyk}{0,0.48,0,0}
\definecolor{LemonChiffon}{rgb}{1.,0.98,0.8}
\definecolor{LightBlue}   {rgb}{0.68,0.85,0.9}
\definecolor{LightCyan}   {rgb}{0.88,1.,1.}
\definecolor{LightGray}   {gray}{0.92}
\definecolor{LightYellow} {rgb}{1.,1.,0.88}
\definecolor{Melon}       {cmyk}{0,0.46,0.50,0}
\definecolor{NavyBlue}    {cmyk}{0.94,0.54,0,0}
\definecolor{Orange}      {rgb}{1.,0.65,0.}
\definecolor{PaleGreen}   {rgb}{0.6,0.98,0.6}
\definecolor{PaleGreenB}  {rgb}{0.9,1,0.9}
\definecolor{Peach}       {cmyk}{0,0.50,0.70,0}
\definecolor{PeachPuff}   {rgb}{1.0,0.85,0.73}
\definecolor{PineGreen}   {cmyk}{0.92,0,0.59,0.25}
\definecolor{Pink}        {rgb}{1.,0.75,0.8}
\definecolor{RoyalBlue}   {cmyk}{1,0.50,0,0}
\definecolor{SeaGreen}    {cmyk}{0.69,0,0.50,0}
\definecolor{Salmon}      {cmyk}{0,0.53,0.38,0}
\definecolor{Sepia}       {cmyk}{0,0.83,1,0.70}
\definecolor{SlateBlue}   {rgb}{0.42,0.35,0.8}
\definecolor{Thistle}     {rgb}{0.85,0.75,0.85}
\definecolor{Turquoise}   {cmyk}{0.85,0,0.20,0}
\definecolor{Violet}      {cmyk}{0.79,0.88,0,0}
\definecolor{YellowOrange}{cmyk}{0,0.42,1,0}
\begin{document}
\selectlanguage{english}
\maketitle

\begin{abstract}We study property testing in directed graphs in the bounded degree model, where we assume that an algorithm may only query the outgoing edges of a vertex, a model proposed by Bender and Ron \cite{BR02}. As our first main result, we we present a property testing algorithm for strong connectivity in this model, having a query complexity of $\Ovon(n^{1-\epsilon/(3+\alpha)})$ for arbitrary $\alpha>0$; it is based on a reduction to estimating the vertex indegree distribution. For subgraph-freeness we give a property testing algorithm with a query complexity of $\Ovon(n^{1-1/k})$, where $k$ is the number of connected componentes in the queried subgraph which have no incoming edge. We furthermore take a look at the problem of testing whether a weakly connected graph contains vertices with a degree of least $3$, which can be viewed as testing for freeness of all orientations of $3$-stars; as our second main result, we show that this property can be tested with a query complexity of $\Ovon(\sqrt{n})$ instead of, what would be expected, $\Omega(n^{2/3})$.
\end{abstract}

\section{Introduction}

Property testing is a technique for solving decision problems that sacrifices some accuracy for the benefit of a sublinear time complexity. The sacrifice of accuracy is twofold: On the one hand, we allow property testing algorithms to accept a small margin of inputs that do not have the queried property $\Pi$ but are similar to some inputs that have $\Pi$. More formally, for a proximity parameter $\epsilon<1$, we say that an input is \emph{$\epsilon$-far} from having the property $\Pi$, if one must modify an $\epsilon$-fraction of the input's description in order to construct an input that has $\Pi$. We only require a property testing algorithm for $\Pi$ to give a reliable answer for inputs that either have the property $\Pi$ or are $\epsilon$-far from it.

The second relaxation in accuracy 
is due to the randomized nature of property testing algorithms: All those algorithms are \emph{Monte Carlo} algorithms, which means that they are allowed to have a small constant error probability. 

The most important measure for the performance of a property testing algorithm is its \emph{query complexity}, which is the worst-case number of accesses to the input that it needs for inputs of a given size.
We aim for algorithms that have a query complexity of $\ovon(n)$ or even $\Ovon(1)$.

In this paper we are particularly interested in property testing for sparse directed graphs. Such graphs are assumed to be stored in adjacency list representation and have both an in- and an outdegree of at most some constant $d$; we require the adjacency lists to only contain the outgoing edges of a vertex, a model which has been introduced in \cite{BR02}. This is a quite natural model for directed graphs: For example, the webgraph or, typically, graphs of social networks are sparse graphs which have directed links; in particular, the incoming edges of a vertex of these graphs might not be visible, for example in case of the incoming links of a website during a web crawl. To gain this knowledge, basically the whole graph has to be explored, and since these graphs are typically very large, this may be inappropriate. Property testing algorithms for this graph model can be useful to gain information about the structure of such graphs while exploring only a small portion of it.

Property testing has been introduced by Rubinfeld and Sudan \cite{RS96}, while Goldreich, Goldwasser, and Ron \cite{GGR98} have initiated the study of graph properties. 
In this paper the authors introduced property testing in the dense graph model, where graphs are assumed to be stored as an adjacency matrix. Furthermore, Goldreich and Ron have introduced property testing in the sparse graph model \cite{GR97}. Since then a large variety of graph properties has been studied, including \cite{BSS08,CSS09,HKNO09,NS11} in the
sparse graph model and \cite{AFNS06} in the dense graph model. These papers aim for identifying classes of testable properties: For the sparse graph model, the above series of papers shows that every hyperfinite graph property is testable, as well as every property in hyperfinite graphs; in the dense graph model, a graph property is testable if and only if it can be reduced to a problem of testing for satisfaction of one of a finite number of Szemer\'{e}di-Partitions.

Property testing in directed graphs can also be subdivided into property testing in the dense graph and the sparse graph models. In the dense graph model, Alon and Shapira have studied the property of subgraph-freeness \cite{AS03}. Bender and Ron have studied the property of acyclicity in both the sparse graphs and the dense graph model and the property of strong connectivity for sparse graphs \cite{BR02}. In the sparse graph model, they show that if a property testing algorithm is only allowed to query the outgoing edges of a vertex, there are no such algorithms with a query complexity of $o(n^{1/3})$ for acyclicity and $o(n^{1/2})$ for strong connectivity, where $n$ is the number of vertices of the input graph. The assumption that only the outgoing edges of a vertex may be queried makes testing strong connectivity much harder: As Bender and Ron show, there is a one-sided error property testing algorithm with a query complexity of $\tilde\Ovon(1/\epsilon)$ for strong connectivity without this constraint. Finally, Yoshida and Ito give a constant-time property testing algorithm for $k$-edge connectivity of directed graphs \cite{YI10}, which also relies on the visibility of incoming edges.

\paragraph{Our Results.}In this paper we further study property testing in sparse directed graphs where only the outgoing edges of a vertex may be queried. The first property we study is subgraph-freeness, i.e., to test whether a graph $H$ does not occur as a subgraph of a graph $G$. Let $k$ be the number of connected components of $H$ that have no incoming edge from another part of $H$: Then our algorithm has a query complexity of $\Ovon(n^{1-1/k})$.

A problem connected to subgraph-freeness is testing whether a weakly connected graph is free of all orientations of $3$-stars. Birthday-paradox type arguments would imply a query complexity of $\Omega(n^{2/3})$ for this problem, but we can give an algorithm with one of $\Ovon(n^{1/2})$, which is the first main result of the paper. This algorithm makes use of two facts: The first is that the above mentioned class of forbidden subgraphs induces some strong properties for graphs that are free of them; the second is that, when sampling edges, the probability of hitting a vertex twice as the target vertex of two different edges is disproportionally high if it has many incoming edges. This allows the algorithm to compute a ratio of two estimators, which will be considerably larger if the input graph has many vertices with a degree of at least $3$.

The second main result of this paper is a property testing algorithm for strong connectivity that achieves a query complexity of $\Ovon(n^{1-\epsilon/(3+\alpha)})$ for arbitrary $\alpha>0$. The algorithm is based on a reduction of the strong connectivity problem to a problem of estimating the vertex indegrees of a graph: We show that it is possible to define a locally computable partitioning of the input graph, such that small connected components that have no incoming edges become their own partitions; one can then construct a metagraph in which every partition of the original graph becomes a vertex. If the input graph is far from strongly connected, then the metagraph will contain many vertices with an indegree of $0$, which can indeed be tested by statistics of the vertex indegrees. After first publishing this result at ESA 2012 \cite{HS12} we discovered that there already existed a proof sketch for an algorithm similar to ours, which Oded Goldreich published in the appendix of a survey article about graph property testing \cite{G10}. At the end of the corresponding section we include a detailed discussion about similarities and differences between Goldreich's and our algorithm.

\section{Preliminaries}
The graph model studied in this paper is the sparse graph model. If not explicitly stated else, all graphs in this paper are directed graphs whose vertices have an outdegree which is bounded by a constant $d$, as well as the indegree; this follows the notion in \cite{BR02}. The graphs are assumed to be stored as adjacency lists. We at first define the notion of $\epsilon$-farness:

\begin{definition}
Let $G$, $H$ be directed graphs as above, both having $n$ vertices. We say that $G$ is $\epsilon$-far from $H$, if one has to change more than $\epsilon dn$ entries of the adjacency lists of $G$ to obtain a graph that is isomorphic to $H$.

Let $\Pi$ be a graph property. We say that $G$ is $\epsilon$-far from $\Pi$, if it is $\epsilon$-far from any graph in $\Pi$.
\end{definition}
Note that graphs as defined above have at most $dn$ edges. This implies that changing $\epsilon dn$ entries of adjacency lists means changing an $\epsilon$-fraction of the graph's description. We can now define the way property testing algorithms get access to an input graph:

\begin{definition}Let $G=(V,E)$ be a directed graph with each vertex having an outdegree of at most $d\in\NN$. We define $f_G:V\times \NN\rightarrow V\cup\{+\}$ to be a function that for querying $f(v,i)$ returns the $i$-th neighbour of vertex $v\in V$ in the adjacency list representation of $G$, or $+$, if $v$ has less than $i$ neighbours.
\end{definition}
Property testing algorithms get access to $f_G$ to gain knowledge about the input graph. A call to $f_G$ takes $\Ovon(1)$ time.

\begin{definition}
Let $\calA$ be an algorithm that has parameters $f_G$, $\epsilon$ and $n$. We define the \emph{query complexity} of $\calA$ as the worst case number of calls to $f_G$ it performs for any graph $G$ with $n$ vertices. $\calA$ is a property testing algorithm for a graph property $\Pi$, if:
\vspace{-.1cm}
\begin{enumerate}
	\item The query complexity of $\calA$ is sublinear in $n$.
	\item $\calA$ accepts every graph $G\in\Pi$ with a probability of at least $\frac{2}{3}$.
	\item $\calA$ rejects every graph $G$ that is $\epsilon$-far from $\Pi$ with a probability of at least $\frac{2}{3}$.
\end{enumerate}
If $\calA$ accepts every $G\in\Pi$ with probability $1$, we say it has $1$-sided error, else we say it has $2$-sided error.
\end{definition}
Finally, we define some graph properties that we will need throughout the rest of this paper. Let $G=(V,E)$ and $H=(V',E')$ be directed graphs.

We call $H$ a \emph{subgraph} of $G$, if there exists an injective mapping $g:V'\rightarrow V$ such that $(g(u),g(v))\in E$ for all $(u,v)\in E'$; we say that $G$ is \emph{$H$-free}, if $H$ is not a subgraph of $G$.

We call $G$ \emph{(strongly) connected}, if for all pairs of vertices $u,v\in V$ there is a directed path from $u$ to $v$ in $G$ (we also say that $v$ can be \emph{reached} from $u$). We call $G$ \emph{weakly connected}, if for all $u,v\in V$ there is an undirected path between $u$ and $v$. $U\subseteq V$ is a connected component of $G$, if the subgraph of $G$ induced by $U$ is strongly connected and there is no set of vertices $W\subseteq V-U$ such that the subgraph of $G$ induced by $U\cup W$ is strongly connected; i.e., $U$ is maximal.

We have to distinguish between several types of connected components of a graph that are witnesses to it being not strongly connected: Source and sink components:

\begin{definition}[source and sink components and dead ends]
Let $G=(V,E)$ be a directed graph. A strongly connected compontent $U\subseteq V$ is called a \emph{source component}, if there is no edge from $V-U$ to $U$; $U$ is called \emph{sink component}, if there is no vertex from $U$ to $V-U$. We will call either of those components \emph{dead ends}.
\end{definition}

We next define a special type of undirected graphs, $k$-stars. Such graphs consist of a central vertex that is connected to $k$ other vertices:

\begin{definition}[$k$-star]
	An undirected graph $H=(V,E)$ is called \emph{$k$-star}, if the following holds:
	\begin{itemize}
		\item $H$ has $k+1$ vertices;
		\item there is a vertex $v\in V$, such that for every vertex $u\in V-\{v\}$ there is an edge $\{u,v\}$;
		\item $G$ does not contain any other edges.
	\end{itemize}
\end{definition}

In a directed setting, we will consider orientations of $k$-stars. If a $k$-star orientation occurs as a subgraph of a graph $G$, then we call the central vertex of this occurence \emph{$k$-star vertex}. We call a vertex \emph{incoming} $k$-star vertex if it has at least $k$ incoming edges and we call it \emph{outgoing} $k$-star vertex if it has at least $k$ outgoing edges.

To simplify the analysis of our algorithms, we use a sampling technique that deviates from the usual sampling of vertices (respectively, edges) with replacement. Instead, we sample each vertex (edge) of the input graph with a certain probability $p$. If a fixed number of vertex (edge) samples is exceeded, the algorithm aborts by returning an arbitrary answer; the probability of this event will be small. Note that in our analyses the case that the sample limit is exceeded seperately is considered seperately from the rest of the particular analysis; after that we use a union bound to bound the total error. Thus, in the rest of the analyses we can assume that each vertex (edge) is independently sampled with probability $p$.

\section{Testing $3$-Star-Freeness}\label{kap_subgraphs}

We start by developing a simple property testing algorithm with one sided error for a very basic graph property, \emph{subgraph freeeness}.

\begin{definition}
Let $G$ and $H$ be directed graphs. We call $G$ (induced) $H$-free if $H$ does not appear as an (induced) subgraph of $G$.
\end{definition}

This algorithm has one-sided error and a query complexity of $\Theta(n^{1-1/k})$, if the forbidden subgraph $H$ has $k$ source components. We will later use this algorithm as a subroutine for a more complex algorithm that tests for freeness of a certain class of subgraphs: The class of all orientations of $3$-Stars. By running the simple property tester for subgraph freeness for every possible $3$-star orientation, one would achieve a query complexity of $\Ovon(n^{2/3})$. This is because there is a $3$-star orientation where the central vertex has $3$ incoming edges and thus this graph has $3$ source components.

We give a more refined algorithm that requires the input graph to be weakly connected and that uses statistical measures to distinguish between graphs that have many occurences of such $3$-star orientations with only incoming edges and graphs that are $3$-star-free; these statistical measures include $2$-way collision-statistics on the target vertices of samples edges, which can be done with $\Ovon(n^{1/2})$ queries. For all other types of $3$-star orientations occuring in the input graph, the simple subgraph freeness algorithm is called, each of these calls also having a query complexity of $\Ovon(n^{1/2})$. Hence, our algorithm has a query complexity of $\Ovon(n^{1/2})$ (considering $\epsilon$ as a constant) and thus breaks the trivial barrier of $\Ovon(n^{2/3})$ for this problem. To our knowledge, this is the first algorithm that does so for a nontrivial property of directed graphs.

At the end of the chapter we show that this algorithm is asymptotically optimal for testing $3$-star freeness: We show that any property testing algorithm for this problem has a query complexity of $\Omega(n^{1/2})$, and that even $\Omega(n^{2/3})$ queries are required if the input graph is allowed to be disconnected.

\subsection{A Property Testing Algorithmus for $H$-Freeness}

\begin{Algorithm}[tb]
\noindent\centering
\shadowbox{
	\begin{minipage}{8cm}
		\begin{tabbing}
			~~~~\= ~~~~\= ~~~~\= ~~~~ \= \kill
			\textsc{\textbf{TestSubgraphFreeness}}($G,H,\epsilon$)\\
			\> Sample every vertex of $G$ with probability $p=\left(\frac{6m}{\epsilon n} \right)^{1/k}$;\\
			\> \> \> if more than $4\cdot \left(\frac{6m}{\epsilon} \right)^{1/k}n^{1-1/k}$ are sampled\textbf{then return true}\\
			\> \textbf{Foreach} sampled vertex $v$ {\bf do}\\
			\> \> Start a breadth first search at $v$ having a maximum depth of $m$\\
			\> \> \textbf{If} BFS explores an occurence of $H$ in $G$ or BFS completes\\
			\> \> \> \> an occurence that was partly explored before \textbf{then return false}\\
			\> \textbf{return true}
		\end{tabbing}
	\end{minipage}
}
\caption{{\sc TestSubgraphFreeness}}\label{alg_subgraphfreiheit}
\end{Algorithm}

The algorithm \textsc{TestSubgraphFreeness} is given to directed bounded-degree graphs $G$ und $H$, where $G$ has $n$ vertices and $H$ has $m$ vertices. The algorithm is also given a proximity parameter $\epsilon$. Let $k$ be the number of source components of $H$.

In order to find an occurence of $H$ in $G$, \textsc{TestSubgraphFreiheit} samples every vertex of $G$ with a probability of $p=\left(\frac{6m}{\epsilon n} \right)^{1/k}$, such that the expected number of sampled vertices is $np=\Ovon(\epsilon^{-1}m^{1/k}n^{1-1/k})$. We will see that this is sufficient for with high probability sampling a vertex in every source component of at leastone occurence of $H$, if $G$ is $\epsilon$-far from $H$-free. The maximum depth of $m$ of the breadth first searches started in the sampled vertices ensures that this suffices to completely explore that occurence of $H$.

We note that {\sc TestSubgraphFreeness} can only reject the input, if it actually explores a occurence of $H$; hence, if $G$ is $H$-free, it can never be rejected, and thus {\sc TestSubgraphFreeness} has a one-sided error. It remains to bound the probability for inadvertantly accepting the input, if $G$ is $\epsilon$-far from $H$-free. The following lemma bounds the number of vertex-disjoint occurences of $H$ in $G$ from below in this case:

\begin{lemma}\label{lemma_subgraphcount}
Let $G=(V,E)$ and $H=(V_H,E_H)$ be directed graphs with both the vertex indegrees and outdegrees bounded by $D\in\NN$; let $\epsilon<1$ be a proximity parameter and assume that $G$ is $\epsilon$-far from $H$-free. Let also $|V|=:n$ and $|V_H|=:m$. Then $G$ contains at least $\frac{\epsilon n}{2m}$ vertex-disjoint copies of $H$.
\end{lemma}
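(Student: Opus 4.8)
The plan is to argue by contradiction via a greedy/maximal-packing argument. Suppose $G$ contains a maximal collection $\mathcal{C}$ of pairwise vertex-disjoint copies of $H$, and suppose for contradiction that $|\mathcal{C}| < \frac{\epsilon n}{2m}$. Let $W$ be the set of all vertices that appear in some copy in $\mathcal{C}$; then $|W| \le m\cdot|\mathcal{C}| < \frac{\epsilon n}{2}$. By maximality of $\mathcal{C}$, every copy of $H$ in $G$ must use at least one vertex of $W$ — equivalently, the graph $G' := G[V\setminus W]$ obtained by deleting $W$ (and all incident edges) is $H$-free. The idea is now to turn "delete $W$" into "modify few adjacency-list entries of $G$ to reach an $H$-free graph", contradicting the assumed $\epsilon$-farness.

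First I would bound the number of adjacency-list entries that need to change to realize the deletion of $W$. Deleting a vertex $w$ affects the outgoing edges stored at $w$ (at most $D$ entries) and the outgoing edges stored at in-neighbours of $w$ that point into $w$ (at most $D$ such in-neighbours, contributing at most $D$ entries, but each such entry is counted once); more carefully, every edge incident to $W$ contributes at most one adjacency-list entry change, and the number of edges incident to $W$ is at most $2D|W|$ (each vertex of $W$ has at most $D$ out- and $D$ in-edges). So we can modify at most $2D|W| < 2D\cdot\frac{\epsilon n}{2} = \epsilon D n$ entries to obtain a graph in which the vertices of $W$ are isolated; that graph is $H$-free, since isolated vertices cannot participate in any occurrence of $H$ (as $H$ has at least one edge — here one should note the statement is only interesting when $H$ has an edge; if $H$ is edgeless the claim is trivial or vacuous and can be handled separately, or one assumes $H$ connected as in the intended application). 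This contradicts the hypothesis that $G$ is $\epsilon$-far from $H$-free, i.e.\ that more than $\epsilon D n$ entry changes are required. Hence $|\mathcal{C}| \ge \frac{\epsilon n}{2m}$.

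The main technical obstacle is the bookkeeping in the previous paragraph: being careful that "isolating the vertices of $W$" can genuinely be done within the $\epsilon D n$ budget and genuinely produces an $H$-free graph, and reconciling this with the paper's precise definition of $\epsilon$-farness (which counts adjacency-list entry changes, and also formally asks for a graph \emph{isomorphic} to one in the property on $n$ vertices — isolating vertices keeps the vertex count at $n$, so this is fine). A secondary subtlety is that deleting $W$ might disconnect or otherwise alter $G$, but that is irrelevant since $H$-freeness is monotone under edge deletion. Everything else is a routine union/counting bound, so I would keep the write-up short and focus on making the $2D|W| \le \epsilon D n$ step and the maximality step explicit.
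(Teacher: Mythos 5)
Your proposal is correct and follows essentially the same route as the paper's proof: take a maximal family of vertex-disjoint copies of $H$, delete the fewer than $2Dm\cdot\frac{\epsilon n}{2m}=\epsilon Dn$ adjacency-list entries corresponding to edges incident to its vertex set, and contradict $\epsilon$-farness because maximality forces the resulting graph to be $H$-free. Your caveat about isolated vertices of $H$ matches the paper's own simplifying assumption that every weakly connected component of $H$ has at least two vertices, so nothing essential is missing.
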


\begin{proof}
Let $H$ consist of $l$ weak connected components $H_1,\ldots, H_l$. We can assume that each $H_i$ contains at least two vertices, and, because it is weakly connected, every vertex of $H_i$ is incident to an edge. We assume that $G$ contains less than $\frac{\epsilon n}{2m}$ vertex-disjoint occurences of $H$ lead this to a contradiction.

Let $M$ be any inclusionwise maximal set of vertex-disjoint occurences of $H$ in $G$. Then there is at least one connected componentn $H_i$ of $H$, for which all the remaining occurences are not vertex-disjoint with the occurences of $H$ in $M$. Thus, at least one edge of each of these occurences of $H_i$ is incident to a vertex that belongs to some occurence of $H$ in $M$.

Now consider the graph $G'$ that results from removing any edges that are incident to vertices of occurences in $M$: $G'$ does not contain an occurence of $H_i$ any more, and thus does not contain a copy of $H$ either. Thus, $G'$ is $H$-free. On the other hand, $M$ has a cardinality of less than $\frac{\epsilon n}{2m}$, each occurence of $H$ has $m$ vertices, and the maximum number of edges incident to any of these vertices is $2D$. Thus, the number of edges that are deleted from $G$ to get $G'$ is less than $2Dm\cdot\frac{\epsilon n}{2m}=\epsilon Dn$. This is a contradiction to the assumption that $G$ is $\epsilon$-far from $H$-free.
\end{proof}

It remains to show that the size of the vertex sample in {\sc TestSubgraphFreeness} is sufficiently large for, with high probability, exploring at least one occurence of $H$, if $G$ is $\epsilon$-far from $H$-free. We at first assume that $H$ is weakly connected.

\begin{mtheorem}\label{theorem_subgraphtest}
Let $G=(V,E)$ and $H=(V_H,E_H)$ be directed graphs with both the vertex indegrees and outdegrees bounded by $D\in\NN$ and let $H$ be weakly connected; let $\epsilon<1$ be a proximity parameter and let $k$ be the number of source components of $H$. Let also $|V|=:n$ and $|V_H|=:m$.

Then {\sc TestSubgraphFreeness$(G,H,\epsilon)$} returns \emph{true}, if $G$ is $H$-free and returns \emph{false} with probability at least $2/3$, is $G$ is $\epsilon$-far from $H$-free. {\sc TestSubgraphFreeness} has a query complexity of $\Ovon\left( D^m\left(\frac{m}{\epsilon} \right)^{1/k}n^{1-1/k} \right)$.
\end{mtheorem}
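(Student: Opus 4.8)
The plan is to treat three things in turn: the one-sided error, the detection probability on $\epsilon$-far inputs, and the query count. The one-sided part needs only the observation already made above --- {\sc TestSubgraphFreeness} returns \emph{false} only after it has actually explored (over the course of its BFS runs) a subgraph of $G$ isomorphic to $H$ --- so an $H$-free $G$ is always accepted. The structural fact driving the rest is this: if for some copy $C$ of $H$ in $G$ the sample contains at least one vertex in the image of each of the $k$ source components of $C$, then the union of the depth-$m$ breadth-first searches rooted at those $k$ vertices contains all of $C$. Indeed, in the condensation (the DAG of strongly connected components) of $H$ every vertex is reachable from some source --- i.e. from some source component --- and within a strongly connected source component every vertex reaches every other; moreover any reachability inside $H$ is witnessed by a simple path of length at most $m-1$, which maps under the embedding of $H$ into $G$ to a directed path of the same length inside $C$. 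Hence each of the $k$ searches reaches in $G$ everything its root's source component reaches in $H$, and jointly they cover $C$; at that point the algorithm reconstructs $C$ and rejects.

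For the detection probability, assume $G$ is $\epsilon$-far from $H$-free. By Lemma~\ref{lemma_subgraphcount} there are $t\ge\frac{\epsilon n}{2m}$ vertex-disjoint copies $C_1,\dots,C_t$ of $H$ in $G$. Since the copies are vertex-disjoint and the source components within one copy are distinct strongly connected components, the whole family of source components $\{S_{j,i}: 1\le j\le t,\ 1\le i\le k\}$ is pairwise disjoint, so the events ``the sample meets $S_{j,i}$'' are mutually independent, each of probability $1-(1-p)^{|S_{j,i}|}\ge p$ because $|S_{j,i}|\ge 1$. Therefore ``$C_j$ is explorable'' (its sample meets all $k$ of its source components) has probability at least $p^k=\frac{6m}{\epsilon n}$, and these events are independent over $j$. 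So the probability that no copy is explored is at most $(1-\frac{6m}{\epsilon n})^t\le e^{-\frac{6m}{\epsilon n}\cdot\frac{\epsilon n}{2m}}=e^{-3}$. It remains to absorb the abort. The expected sample size is $np=(6m/\epsilon)^{1/k}n^{1-1/k}$, so by Markov's inequality the algorithm samples more than $4(6m/\epsilon)^{1/k}n^{1-1/k}$ vertices --- and then wrongly returns \emph{true} --- with probability at most $1/4$. Coupling the real run with the abort-free run on the same sampling coins, a wrong \emph{true} on an $\epsilon$-far input happens only when the sample is too large, or when the sample is not too large but still no copy was explored; by the union bound this has probability at most $1/4+e^{-3}<1/3$, so \emph{false} is returned with probability at least $2/3$.

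Finally, the complexity: a depth-$m$ BFS in a digraph of outdegree at most $D$ explores $\Ovon(D^m)$ vertices and hence makes $\Ovon(D^m)$ queries, and at most $4(6m/\epsilon)^{1/k}n^{1-1/k}$ of them are run before the abort triggers, which yields the claimed bound $\Ovon(D^m(m/\epsilon)^{1/k}n^{1-1/k})$; this is sublinear in $n$ for every fixed $k\ge 1$ (and in the degenerate regime $6m\ge\epsilon n$ one may simply read all of $G$, still within the bound). I expect the only genuinely delicate points to be (i) checking that vertex-disjointness of the copies together with distinctness of source components inside a copy really does deliver the independence used in the product $(1-\frac{6m}{\epsilon n})^t$, and (ii) the coupling argument that lets the ``sample too large'' and ``no copy explored'' contributions be combined by a plain union bound even though conditioning on not aborting perturbs the sampling distribution.
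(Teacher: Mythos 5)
Your proposal is correct and takes essentially the same route as the paper's proof: Lemma~\ref{lemma_subgraphcount} to obtain at least $\epsilon n/(2m)$ vertex-disjoint copies, the per-copy exploration events with probability at least $p^k$ made independent by vertex-disjointness, Markov's inequality for the sample-size abort, a union bound combining the two failure modes, and the same $D^m$-per-BFS accounting for the query complexity. The additional details you supply (the condensation argument that the $k$ source components jointly cover a copy within depth $m$, and the coupling remark justifying the union bound with the abort) merely make explicit steps the paper treats implicitly.
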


\begin{proof}
As discussed above,  {\sc TestSubgraphFreeness} can only reject if $G$ contains at least one occurence of $H$; thus, the algorithm always accepts if $G$ is $H$-free.

Now assume that $G$ is $\epsilon$-far from $H$-free and let $M$ be a maximal set of occurences of $H$ in $G$; Lemma \ref{lemma_subgraphcount} ensures that there is a set $M$ with a cardinality of at least $\epsilon n/2m$. Let $X_i$ be the event that the $i$-th occurence in $M$ (according to an arbitrary but fixed ordering of the occurences in $M$) is explored completely by {\sc TestSubgraphFreeness}. Since the BFS traversals have a depth of $m$, this happens if a vertex in each of the source components of this occurence is sampled; since there are $k$ of those, it holds $\Pr[X_i]\geq p^k$ for all $i$.

Since all the occurences in $M$ are vertex-disjoint (and due to the sample process), the events $X_i$ are independent. Hence,

\begin{align*}
	\Pr\left[\bigcap_{1\leq i \leq |M|}\bar X_i\right] & = \prod_{1\leq i \leq |M|}\left(1-\Pr[X_i]\right) \leq \left(1-p^k\right)^{\epsilon n/2m}\\
		&  = \left( 1-\frac{6m}{\epsilon n} \right)^{\epsilon n/2m}\leq e^{-3} < \frac{1}{12}
\end{align*}

gives an upper bound for the probability that none of the occurences in $M$ is explored completely and thus an upper bound for the probability that none of the occurences of $H$ in $G$ is identified.

It remains to bound the probability that the input is inadvertantly accepted in the first line of {\sc TestSubgraphFreeness} due to too many vertices being sampled. The expected number of vertices sampled is $np=\left(3m/\epsilon\right) ^{1/k}\cdot n^{1-1/k}$, and, due to the Markov Inequality, the probability, that more than four times this number of vertices is sampled is at most $1/4$. Together with the above bound on the probability of not exploring a single occurence of $H$, the union bound gives a total probability of at most
$1/4+1/12 = 1/3$ for inatvertantly accepting the input.

The number of vertices sampled by {\sc TestSubgraphFreeness} is at most $4\cdot \left(\frac{3m}{\epsilon} \right)^{1/k}n^{1-1/k}+1$, for at most $4\cdot \left(\frac{3m}{\epsilon} \right)^{1/k}n^{1-1/k}$ of those a BFS traversal of depth $m$ is started. Since the vertex outdegree of $G$ is bounded by $D$, the number of edges queried in each of the BFS traversals is bounded by $D^m$, and thus the total number of queries needed is $\Ovon\left( D^m\left(\frac{m}{\epsilon} \right)^{1/k}n^{1-1/k} \right)$.
\end{proof}

Now we assume that $H$ is disconnected, i.e. consists of $l>1$ weakly connected components $H_1,\ldots ,H_l$. It suffices to identify occurences of each of the $H_i$ independently, all of those pairwise vertex-disjoint. In a maximal set of vertex-disjoint occurences of $H$, as guaranteed by lemma \ref{lemma_subgraphcount}, the occurences of the individual $H_i$ can be combined arbitrarily to get an occurence of $H$.

Thus, {\sc TestSubgraphFreeness} can be run individually for each of the $H_i$. For this purpose, we define the algorithm {\sc TestSubgraphFreenessAmplified}, which uses probability amplification to guarantee a better success probability $p$ instead of $2/3$.

\begin{Algorithm}[tb]
\noindent\centering
\shadowbox{
	\begin{minipage}{8cm}
		\begin{tabbing}
			~~~~\= ~~~~\= ~~~~\= ~~~~ \= \kill
			\textsc{\textbf{TestSubgraphFreenessAmplified}}($G,H,\epsilon,p$)\\
			\> \textbf{For} $i\rightarrow 1$ \textbf{to} $\log_3 \left\lceil \frac{1}{p}\right\rceil$\\
			\> \> \textbf{If not} {\sc TestSubgraphFreeness}($G,H,\epsilon,p$) \textbf{then return false}\\
			\> \textbf{return true}
		\end{tabbing}
	\end{minipage}
}
\caption{{\sc TestSubgraphFreenessAmplified}}
\end{Algorithm}

\begin{lemma}\label{lemma_subgraphtestverstärkt}
Let $G=(V,E)$ and $H=(V_H,E_H)$ be directed graphs with both the vertex indegrees and outdegrees bounded by $D\in\NN$; let $\epsilon<1$ be a proximity parameter and $p\leq\frac{1}{3}$. Then {\sc TestSubgraphFreenessAmplified} returns \emph{true} if $G$ is $H$-free and \emph{false} with probability at least $1-p$ if $G$ is $\epsilon$-far from $H$-free. {\sc TestSubgraphFreenessAmplified} has a query complexity of $\Ovon\left(\log\frac{1}{p} \cdot D^m\left(\frac{m}{\epsilon} \right)^{1/k}n^{1-1/k} \right)$.
\end{lemma}

\begin{proof}
In case $G$ is $H$-free, the correctness follows directly from the correctness of {\sc TestSubgraphFreeness}. Now assume that $G$ is $\epsilon$-far from $H$-free. Theorem \ref{theorem_subgraphtest} guarantees that every call of {\sc TestSubgraphFreeness} returns \emph{false} with probability at least $2/3$. Thus, the probability that none of the calls returns \emph{false} (and thus {\sc TestSubgraphFreenessAmplified} inadvertantly returns true) is at most $(1/3)^{\log_3 \left\lceil 1/p \right\rceil} \leq 3^{-\log_3 1/p} = p$.

Since {\sc TestSubgraphFreeness} is called at most $\log_3 \left\lceil \frac{1}{p}\right\rceil=\Ovon\left(\log \frac{1}{p}\right)$ times in {\sc TestSubgraphFreenessAmplified}, the query complexity of the latter algorithm is $\Ovon\left(\log\frac{1}{p} \cdot D^m\left(\frac{m}{\epsilon} \right)^{1/k}n^{1-1/k} \right)$.
\end{proof}

For testing $H$-freeness for disconnected $H$, we run {\sc TestSubgraphFreenessAmplified} individually for each of the $l$ connected components of $H$ while setting the parameter $p$ to $\frac{1}{3l}$. Assume that $G$ is $\epsilon$-far from $H$-free: Then the probability that at least one of the $l$ calls to {\sc TestSubgraphFreenessAmplified} fails to identify a complete occurence of the corresponding component is at most $l\cdot p=1/3$ due to the union bound. Hence we get the following corollary:

\begin{corollary}
$H=(V_H,E_H)$ be a directed graph with $m$ vertices and with both the vertex indegrees and outdegrees bounded by $D\in\NN$; let $H$ consist of $l> 1$ connected components. Let $\epsilon<1$ be a proximity parameter and let $|V|=n$ and $|V_H|=m$. Let $k_{\text{max}}$ the maximum number of source components of one of the connected components of $H$ and let $k_{\text{min}}$ the minimum number of source components of one of them.

Then there is a property testing algorithm for $H$-freeness, which in Graphs with $n$ vertices and both vertex indegree and outdegree bounded by $D$ has a query complexity of $\Ovon\left( l\log l \cdot D^m\left(\frac{m}{\epsilon} \right)^{1/{k_\text{min}}}n^{1-1/{k_\text{max}}} \right)$.
\end{corollary}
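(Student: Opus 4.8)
The plan is to reduce testing $H$-freeness to testing freeness of the individual connected components $H_1,\dots,H_l$ of $H$, run the already analysed tester {\sc TestSubgraphFreenessAmplified} on each of them, and then stitch the partial findings into a single copy of $H$; the two points that need care are the proximity parameter handed to the single-component tester and the preservation of one-sided error. Write $m_i=|V_{H_i}|$, let $k_i$ be the number of source components of $H_i$, so $k_{\min}=\min_i k_i$ and $k_{\max}=\max_i k_i$, and fix the rescaled parameters $\epsilon_i:=\epsilon\, m_i/m$ (still $<1$). The algorithm runs, for every $i$, the call {\sc TestSubgraphFreenessAmplified}$(G,H_i,\epsilon_i,\tfrac1{3l})$ and records in one subgraph $G'$ every part of $G$ touched by any breadth-first search performed inside any of these $l$ calls; it rejects if and only if $G'$ contains a copy of $H$. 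One-sided error is then immediate: $G'\subseteq G$, so an $H$-free $G$ is always accepted. (The tempting shortcut of rejecting as soon as some single $H_i$ is detected is wrong, since an $H$-free graph may well contain copies of the individual $H_i$.)

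For the $\epsilon$-far case, Lemma~\ref{lemma_subgraphcount} gives a set $M$ of at least $\epsilon n/2m$ pairwise vertex-disjoint copies of $H$ in $G$; splitting each member of $M$ into the parts realising $H_1,\dots,H_l$ yields, for every $i$, a family $M_i$ of at least $\epsilon n/2m$ pairwise vertex-disjoint copies of $H_i$. The combinatorial key — this is the ``combine arbitrarily'' remark preceding the corollary — is that choosing one copy $D_i\in M_i$ for each $i$, independently and even allowing the $D_i$ to come from different members of $M$, always yields a copy $D_1\cup\dots\cup D_l$ of $H$: any two chosen parts are vertex-disjoint, whether they sit inside one common member of $M$ (disjoint there by definition) or inside two distinct members (disjoint by the choice of $M$). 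So it suffices that every call $i$ explores \emph{some} member of $M_i$.

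Next I would rerun the probabilistic analysis of Theorem~\ref{theorem_subgraphtest} with $M_i$ playing the role of its maximal disjoint family. The rescaling is chosen exactly so that $|M_i|\ge\epsilon n/2m=\epsilon_i n/2m_i$, the size Theorem~\ref{theorem_subgraphtest} requires; so one invocation {\sc TestSubgraphFreeness}$(G,H_i,\epsilon_i)$ explores a member of $M_i$ with probability at least $2/3$, the failure splitting (just as there) into at most $e^{-3}<\tfrac1{12}$ for the sample missing all of $M_i$ and at most $\tfrac14$ for the sample growing too large. The $\log_3\lceil 3l\rceil$ independent iterations inside {\sc TestSubgraphFreenessAmplified} then boost this to at least $1-\tfrac1{3l}$, and since the $l$ calls use independent randomness a union bound gives probability at least $1-l\cdot\tfrac1{3l}=\tfrac23$ that every call $i$ explores some $D_i\in M_i$; on that event $D_1\cup\dots\cup D_l\subseteq G'$ is a copy of $H$ and the algorithm rejects.

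Finally I would count queries. Each invocation of {\sc TestSubgraphFreeness}$(G,H_i,\epsilon_i)$ samples $\Ovon\bigl((6m_i/\epsilon_i)^{1/k_i}n^{1-1/k_i}\bigr)=\Ovon\bigl((m/\epsilon)^{1/k_i}n^{1-1/k_i}\bigr)$ vertices (using $m_i/\epsilon_i=m/\epsilon$), and a depth-$m_i$ BFS from each in a graph of outdegree at most $D$ costs $\Ovon(D^m)$ queries, so one invocation costs $\Ovon\bigl(D^m(m/\epsilon)^{1/k_i}n^{1-1/k_i}\bigr)$, which is $\Ovon\bigl(D^m(m/\epsilon)^{1/k_{\min}}n^{1-1/k_{\max}}\bigr)$ because $1/k_i\le 1/k_{\min}$, $1-1/k_i\le 1-1/k_{\max}$ and $m/\epsilon\ge1$. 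Multiplying by the $\Ovon(\log l)$ iterations and the $l$ calls gives the claimed $\Ovon\bigl(l\log l\cdot D^m(m/\epsilon)^{1/k_{\min}}n^{1-1/k_{\max}}\bigr)$, which is sublinear in $n$, so the algorithm is a property tester for $H$-freeness. The step I expect to be the real obstacle is the one-sided-error bookkeeping — making precise that component-copies found in different calls can be glued into one copy of $H$, which needs the ``combine arbitrarily'' property and the recording of $G'$ — together with noticing that Lemma~\ref{lemma_subgraphcount} only supplies $\epsilon n/2m$ rather than $\epsilon n/2m_i$ disjoint copies of $H_i$, which is precisely what forces the rescaling $\epsilon_i=\epsilon m_i/m$.
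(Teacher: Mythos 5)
Your proposal is correct and follows essentially the same route as the paper: decompose $H$ into its weakly connected components, run {\sc TestSubgraphFreenessAmplified} once per component with failure probability $\frac{1}{3l}$, apply a union bound over the $l$ calls, and use the fact that component-copies taken from a single family $M$ of vertex-disjoint $H$-copies can be combined arbitrarily into a copy of $H$. Your two refinements -- the explicit rejection rule (reject iff the union $G'$ of all explored subgraphs contains a copy of $H$, which gives one-sided error) and the rescaling $\epsilon_i=\epsilon m_i/m$ so that the analysis of Theorem~\ref{theorem_subgraphtest} applies to the family $M_i$ of $H_i$-parts of $M$ -- make explicit what the paper leaves implicit; note that the rescaling could alternatively be bypassed by observing that $\epsilon$-farness from $H$-freeness implies $\epsilon$-farness from $H_i$-freeness (every $H_i$-free graph is $H$-free), but your variant is the one that directly yields the pairwise disjointness needed for the gluing argument.
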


\subsection{A Property Testing Algorithm for $3$-Star-Freeness}\label{kap_3stern}

We will now consider the problem of testing for a certain class of forbidden subgraphs, the class of all orientations of $3$-stars. The algorithm that we introduce distinguishes between directed graphs that contain many occurences of $3$-stars, regardless of how the edges of these occurences are oriented. We make two major additional assumptions: The first is that the input graph is weakly connected. The second assumption is that the input graph does not contain any double edges, i.e. pairs of edges in both directions between two vertices; thus, the undirected degree of a vertex is equal to its number of adjacent vertices. The latter assumption is only made for simplicity and can be dropped using a simple reduction, as we will see later. The former assumption is, however, necessary, since without this assumption testing $3$-star-freeness requires $\Omega(n^{2/3})$ queries. 

We will additionally assume that the undirected degree of each vertex of the input graph is at most $D\in\NN_{\geq 3}$. At last, we can assume $n=\omega(\frac{D^4}{\epsilon^3})$: Else an algorithm could read all the input graph's edges with at most $Dn = \Ovon(\frac{D^3}{\epsilon^{3/2}})$ queries and thus solve the problem deterministically.

In the following we will simply use the term $k$-star for any $k$-star orientation in directed graph contexts. We have to distinguish several types of $k$-stars, depending on their number of incoming and outgoing edges:

\begin{definition}
Let $G$ be a directed $k$-star and let $2\leq m\leq k$. We call $G$ \emph{incoming} $m$-star, if for at least $m$ of the edges of $G$ the central vertex of $G$ is the target vertex. We call $G$ \emph{outgoing} $m$-star, if for at least $m$ of the edges of $G$ the central vertex of $G$ is the source vertex. If $G$ is an incoming respectively outgoing $m$-star, we call the central vertex of $G$ outgoing respectively incoming $k$-star vertex.
\end{definition}

As noted above, a simple algorithm for testing $3$-star freeness works as follows: Run the algorithm {\sc TestSubgraphFreenessAmplified} from the previous section for each of the $4$  (neglecting isomorphism) possible orientations of $3$-stars and for a proximity parameter of $\epsilon/4$; if one of the calls to {\sc TestSubgraphFreenessAmplified} rejects, return false; else return true. This algorithm has a query complexity of $\Theta(n^{2/3})$, since an incoming $3$-star without outgoing edges has $3$ source components. The algorithm that is introduced in this chapter will instead have a query complexity of $\Ovon(n^{1/2})$, but will in exchange have a two-sided error.

Since at this point the input graphs are restricted to weakly connected directed graphs without double edges, we can formulate the problem differently: We want to distinguish directed graphs that are orientations of a single circle or a line from those that are $\epsilon$-far from doing so. This is not trivial, since due to the restriction that only outugoing edges of a vertex are possible to query, the portion of the graph explored by a breadth first search may be very small, particularly for vertices that have only incoming edges.

The algorithm makes use of a certain property of directed graphs that contain many occurences of incoming $3$-stars without outgoing edges: In such graphs, the number of edges that are part of incoming $2$-stars is significantly larger than the number of edges that are part of outgoing two-stars (note that every incoming $3$-star vertex is also an incoming $2$-star vertex). Moreover, the probability of sampling two incloming edges of a vertex grows quadratically in the total number of its incoming edges. Thus, in a graph that contains many incoming $3$-stars, the number of incoming $2$-stars found by collisions statistics on the target vertices of edges will ne disproportionately high in comparision to the number of incoming $2$-stars in this graph with high probability; on the other hand, in a graph is $3$-star-free, these values will be roughly the same with high probability, as we will see. Such collision statistics only require $\Ovon(n^{1/2})$ edge samples.

For testing for any $3$-stars that have incoming edges, we can rely on calling {\sc TestSubgraphFreeness} as sketched above. Since we now only have to call this algorithm for orientations with at most $2$ incoming edges, the query complexity for all these calls is $\Ovon(n^{1/2})$. Thus, we get a total query complexity of $\Ovon(n^{1/2})$. For formalizing these ideas, we state the algorithm {\sc Test3StarFreeness}.

\begin{Algorithm}[tb]
\noindent\centering
\shadowbox{
	\begin{minipage}{8cm}
		\begin{tabbing}
			~~~~\= ~~~~\= ~~~~\= ~~~~ \= ~~~~ \= \kill
			\textsc{\textbf{Test3StarFreeness}}($n,G,\epsilon$)\\
					\>\textbf{if} \textsc{EstimateEdgeCount}($n,G,\frac{\epsilon}{16}$)$>n+\frac{\epsilon n}{16}$ \textbf{then return} false\\
					\> \textbf{Foreach} $3$-star orientation $H$ that has at least one outgoing edge \\
					\> \> \textbf{if not} \textsc{TestSubgraphFreenessAmplified}($G,H,\frac{\epsilon}{192D},\frac{1}{6}$)\\
					\> \> \> \textbf{then return} false\\

					\> sample $s_{\ref{alg_3stern}.1}=\frac{48}{\epsilon}$ vertices $v_1, \ldots, v_{\ref{alg_3stern}.1}$ of $G$ u.i.d.\\
					\> $\hat k \leftarrow \frac{n}{s_{\ref{alg_3stern}.1}}$ times the number of outgoing $2$-star vertices in $\{v_1,\ldots, v_{\ref{alg_3stern}.1}\}$\\
					\> $\hat k\leftarrow \hat k + \frac{\epsilon n}{12}$\\

					\> sample each edge of $G$ with probability $p=\frac{128D}{\epsilon^{3/2}\sqrt{n}}$\\
					\> \textbf{if} more than $s_{\ref{alg_3stern}.2}=\frac{2048\cdot D\sqrt{n}}{\epsilon^{3/2}}$ are sampled \textbf{then return} false\\
					\> $\hat c\leftarrow\frac{1}{p^2}$ times the number of collisions on target vertices of sampled edges\\
					\> \textbf{if} $\hat c<\frac{3}{2}\epsilon n$ \textbf{then return} true\\
					\> \textbf{if} $\hat r:=\frac{\hat c}{\hat k}> 1+\frac{\epsilon}{24}$ \textbf{then return} false\\
					\>\textbf{else return} true
		\end{tabbing}
	\end{minipage}
}
\caption{{\sc Test3StarFreeness}}\label{alg_3stern}
\end{Algorithm}

We will at first introduce a couple of Lemmas that state structural properties of weakly connected directed graphs. All these Lemmas are connected to the notion of the \emph{balance} of a directed graph:

\begin{definition}
Let $G=(V,E)$ be a directed graph. The \emph{balance} $\calB(G)$ of $G$ is the difference between the number of outgoing $2$-star vertices and the number of incoming $2$-star vertices in $G$, i.e.:
$$\calB(G)=\left|~\#\{ v\in V| \degout(v)\geq 2 \} - \#\{ v\in V| \degin(v)\geq 2 \}  ~\right|$$
\end{definition}

We will show several upper bounds on the balance of directed graphs, depending on their structure. We will start with graphs that are orientations of an undirected circle graph.

\begin{figure}
	\centering
	\begin{tikzpicture}[thick]
		\node at (-1.2,1.9) {\Large{(a)}};
		\foreach \i/\x/\y in {1/0.98/-1.14, 3/-0.78/-1.28, 4/-1.39/-0.58, 6/-0.98/1.14, 9/1.39/0.58, 10/1.46/-0.35}
			\node (p\i) [draw=black,fill=gray70, minimum size = 0.15cm,inner sep = 0cm,circle] at (\x,\y) {};

		\foreach \i/\x/\y in {2/0.12/-1.50, 7/-0.12/1.50}
			\node (p\i) [draw=black,fill=red, minimum size = 0.15cm,inner sep = 0cm,circle] at (\x,\y) {};

		\foreach \i/\x/\y in {5/-1.46/0.35, 8/0.78/1.28}
			\node (p\i) [draw=black,fill=blue!60, minimum size = 0.15cm,inner sep = 0cm,circle] at (\x,\y) {};

		\foreach \i/\j in {1/2,3/2,4/3,5/4,5/6,6/7,8/7,8/9,9/10,10/1}
			\draw[->] (p\i) -- (p\j);

		\draw[dashed, color=lightgray] (2.2,-1.5) -- (2.2, 1.7);
	\end{tikzpicture}~~~
	\begin{tikzpicture}[thick]
		\node at (.5,1.3) {\Large{(b)}};
		\node at (-.8,.4) {\Large{$e$}};	

		\foreach \i/\x/\y in {2/1/0, 6/-3/1, 7/-2/-2, 8/0/-2, 11/3/-1, 12/3/0, 13/4/1}
			\node (p\i) [draw=black,fill=gray70, minimum size = 0.15cm,inner sep = 0cm,circle] at (\x,\y) {};

		\foreach \i/\x/\y in {5/-2/1, 4/-1/-1}
			\node (p\i) [draw=black,fill=red, minimum size = 0.15cm,inner sep = 0cm,circle] at (\x,\y) {};

		\foreach \i/\x/\y in {1/0/0, 3/-1/1, 10/3/1}
			\node (p\i) [draw=black,fill=blue!60, minimum size = 0.15cm,inner sep = 0cm,circle] at (\x,\y) {};

		\foreach \i/\x/\y in {9/2/0}
			\node (p\i) [draw=black,fill=violet!70, minimum size = 0.15cm,inner sep = 0cm,circle] at (\x,\y) {};

		\foreach \i/\j in {1/2, 1/4, 3/5, 6/5, 7/4, 4/8, 2/9, 10/9, 9/11, 9/12, 10/13}
			\draw[->] (p\i) -- (p\j);

		\draw[->, dashed] (p3) -- (p1);		
	\end{tikzpicture}
	\caption{Orientations of undirected graphs: (a) Circle graph; (b) Tree; a feasible edge $e$ for choosing in the inductive step of the proof of Lemma \ref{lemma_baumbalance} is drawn dashed. In both cases incoming $2$-star vertices are painted in red and outgoing $2$-star vertices are painted in blue.}\label{abb_balance}
\end{figure}
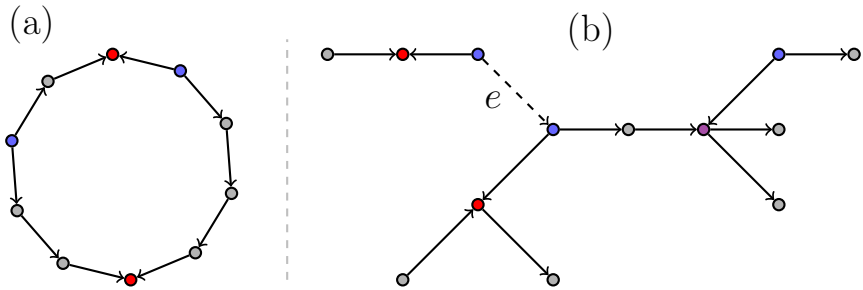

\begin{lemma}\label{lemma_kreisbalance}
Let $G=(V,E)$ be an orientation of an undirected circle graph. Then, $\calB(G)=0$.
\end{lemma}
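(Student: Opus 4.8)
The plan is to exploit the fact that in a cycle every vertex has total (undirected) degree exactly $2$, so in any orientation $G=(V,E)$ we have $\outdeg(v)+\indeg(v)=2$ for every $v\in V$. Consequently the inequalities defining the two quantities in $\calB(G)$ collapse to equalities: $\outdeg(v)\geq 2$ holds iff $\outdeg(v)=2$ iff $\indeg(v)=0$, and symmetrically $\indeg(v)\geq 2$ iff $\indeg(v)=2$ iff $\outdeg(v)=0$. So it suffices to show that the number of ``sources'' (vertices with $\outdeg=2$) equals the number of ``sinks'' (vertices with $\indeg=2$).

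First I would partition $V$ into three sets: the set $S$ of sources, the set $T$ of sinks, and the set $P$ of remaining vertices, which necessarily satisfy $\outdeg(v)=\indeg(v)=1$. Then I count the edges of $G$ in two ways. Summing out-degrees gives $|E|=\sum_{v\in V}\outdeg(v)=2|S|+|P|$, and summing in-degrees gives $|E|=\sum_{v\in V}\indeg(v)=2|T|+|P|$. Since $G$ is an orientation of a cycle, $|E|=|V|$, so both right-hand sides equal $|V|$; subtracting one equation from the other yields $|S|=|T|$, hence $\calB(G)=\bigl|\,|S|-|T|\,\bigr|=0$.

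An alternative route, which generalizes less well but gives useful intuition for the tree case that follows, is to walk around the cycle $v_1,\dots,v_n,v_1$ and record for each edge whether it is oriented ``with'' or ``against'' the chosen traversal direction. A vertex is a source exactly at a position where this orientation pattern switches from ``against'' to ``with'', and a sink exactly where it switches from ``with'' to ``against''; in any cyclic binary string the number of $01$-transitions equals the number of $10$-transitions, giving $|S|=|T|$ again.

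There is essentially no serious obstacle here; the only point requiring a moment's care is the reduction from the ``$\geq 2$'' conditions to exact-degree conditions, which is what makes cycles special (every vertex is either a source, a sink, or a ``pass-through'' with one edge in and one out). This lemma then serves as the base case for the subsequent, genuinely harder bounds on the balance of orientations of trees and of more general sparse graphs.
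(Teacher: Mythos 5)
Your proof is correct and your main argument (classifying each vertex as a source, sink, or pass-through vertex using the degree-$2$ condition, then equating the sums of in- and out-degrees) is essentially the same double-counting argument as the paper's own proof. The alternative traversal-based argument you sketch is a nice additional piece of intuition but not needed.
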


\begin{proof}
Since each vertex in $G$ has an undirected degree of two, incoming $2$-star vertices do not have outgoing edges and outgoing $2$-star vertices do not have incoming edges. Each vertex that is neither of them has exactly one incoming edge and one outgoint edge. Furthermore, the total number of incoming edges in $G$ equals the total number of outgoing edges, and hence, by the above considerations, for each vertex that has two outgoing edges, there must be a vertex that has two incoming edges and vice versa. Thus the number of incoming $2$-star vertices in $G$ equals the number of outgoint $2$-star vertices, i.e. $\calB(G)=0$.
\end{proof}

The following proofs make use of the techniques of contracting and re-expanding and of deleting and re-inserting edges of graphs. Observe that inserting an edge into a graph can only change its balance by one: Only the target vertex of the edge can become a new incoming $2$-star, and only its source vertex can become a new outgoing $2$-star. Of course the same holds true in reverse direction for deleting an edge.

We continue by showing that orientations of line-shaped graphs have a balance of at most $1$, and, if the balance of such a graph is not $0$, both edges at the ends of the graph have the same orientation. A line-shaped graph is a tree that has only two leafs.

\begin{lemma}\label{lemma_linienbalance}
Let $G=(V,E)$ be an orientation of a line-shaped graph. Then, $\calB(G)\leq 1$. If $\calB(G)\neq 0$, one of the following statements holds:
\begin{itemize}
	\item The number of outgoing $2$-star vertices in $G$ exceeds the number of incoming $2$-star vertices by $1$ and both leafs of $G$ have incoming edges.
	\item The number of incoming $2$-star vertices in $G$ exceeds the number of outgoing $2$-star vertices by $1$ and both leafs of $G$ have outgoing edges.
\end{itemize}
\end{lemma}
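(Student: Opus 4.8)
The plan is to translate orientations of a path into binary words and reduce the whole statement to an elementary fact about factors of binary words. Write the line-shaped graph as a path $v_1 v_2 \cdots v_t$ on $t=|V|$ vertices with edges $e_i=\{v_i,v_{i+1}\}$ for $1\le i\le t-1$, and encode an orientation by a word $s\in\{+,-\}^{t-1}$ with $s_i=+$ iff $e_i$ is oriented $v_i\to v_{i+1}$. The first step is a purely local classification of the vertices under this encoding: for an internal vertex $v_i$ ($2\le i\le t-1$), which has undirected degree $2$, one checks directly that $v_i$ is an incoming $2$-star vertex iff $(s_{i-1},s_i)=(+,-)$, an outgoing $2$-star vertex iff $(s_{i-1},s_i)=(-,+)$, and otherwise (when $s_{i-1}=s_i$) has exactly one incoming and one outgoing edge. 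The two leaves $v_1,v_t$ have undirected degree $1$, so they are never $2$-star vertices of either kind; moreover $v_1$ has an outgoing edge iff $s_1=+$ (an incoming edge iff $s_1=-$), and $v_t$ has an incoming edge iff $s_{t-1}=+$ (an outgoing edge iff $s_{t-1}=-$). This bookkeeping at the two ends is the only place where one has to be careful about sign conventions, and it is what ultimately decides which of the two bullets applies.

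Given this dictionary, the number of incoming $2$-star vertices equals the number of occurrences of the factor $+-$ in $s$, and the number of outgoing $2$-star vertices equals the number of occurrences of $-+$. The second step is the combinatorial core: decompose $s$ into its maximal runs of equal symbols; if $s$ has $b$ such blocks it has exactly $b-1$ transitions between consecutive blocks, and these transitions alternate strictly between the patterns $+-$ and $-+$. Hence if $s$ begins and ends with the same symbol the two patterns occur equally often, so $\calB(G)=0$; if $s$ begins with $+$ and ends with $-$ there is exactly one more $+-$ than $-+$, so the number of incoming $2$-star vertices exceeds the number of outgoing ones by exactly $1$; and symmetrically if $s$ begins with $-$ and ends with $+$. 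In every case the difference is at most $1$, giving $\calB(G)\le 1$.

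The third step just reads off the refined claim. If $\calB(G)\ne 0$ with incoming $2$-star vertices in the majority, we are necessarily in the case $s_1=+$, $s_{t-1}=-$; the leaf characterization from Step 1 then says $v_1$ and $v_t$ both have an outgoing edge, which is exactly the second bullet. The symmetric case $s_1=-$, $s_{t-1}=+$ gives the first bullet. As a sanity check one can verify the base cases $t=2$ (single edge, $\calB=0$) and $t=3$ (the four orientations of $v_1v_2v_3$) directly; alternatively, the same result follows by induction on $t$, peeling off the last edge $e_{t-1}$ and tracking how the status of $v_{t-1}$ changes when it turns from a leaf into an internal vertex, but the block argument packages precisely this induction in one shot. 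I do not expect any real obstacle here: the only subtlety is getting the orientation-to-sign conventions at the two endpoints consistent so that ``majority incoming'' lines up with ``both leaves outgoing'' rather than the reverse; once the dictionary in Step 1 is fixed, Steps 2 and 3 are immediate.
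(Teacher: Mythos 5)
Your proof is correct, but it takes a genuinely different route from the paper. The paper proves this lemma by reduction to the circle case: it inserts an (arbitrarily oriented) edge between the two leaves, invokes Lemma~\ref{lemma_kreisbalance} to conclude the resulting circle orientation has balance $0$, and then uses the observation that deleting a single edge changes the balance by at most $1$; the refined statement about the leaf orientations is obtained by analyzing which new $2$-star the inserted edge must have created at one of the leaves when $\calB(G)=1$. Your argument instead encodes the orientation as a word in $\{+,-\}^{t-1}$, identifies incoming (resp.\ outgoing) $2$-star vertices with occurrences of the factor $+-$ (resp.\ $-+$), and counts these via the run decomposition, so you get the exact difference (determined by the first and last letter of the word) rather than merely a bound, and the leaf statement falls out of the same dictionary with no case analysis about what an inserted edge creates. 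Your endpoint conventions check out: majority-incoming forces $s_1=+$, $s_{t-1}=-$, i.e.\ both leaves have outgoing edges, matching the lemma's second bullet, and symmetrically for the first. What the paper's route buys is economy within its broader scheme --- the insert/delete-an-edge technique is reused immediately in the tree lemma (Lemma~\ref{lemma_baumbalance}) and the general bound (Lemma~\ref{lemma_balanceallgemein}) --- whereas your word/run argument is self-contained (it does not even need the circle lemma) and arguably more transparent, though it is specific to paths and would not extend as directly to the later inductive arguments on trees.
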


\begin{proof}
At first we prove $\calB(G)\leq 1$. Consider an arbitratily oriented edge $e$ to be inserted between the two leafs of $G$. The resulting graph $G'$ is an orientation of a circle and, according to lemma \ref{lemma_kreisbalance}, has a balance of $0$. Since by re-deleting $e$ from $G'$ the balance can change by at most $1$, we have $\calB(G)\leq 1$.

Now assume that $\calB(G)=1$. Then the insertion of $e$ must have created a new $2$-star vertex of the type that appears less in $G$ and cannot have created one of the other type, since $G'$ has a balance of $0$. Thus, if $G$ has more incoming $2$-star vertices than outgoing $2$-star vertices, both leafs have an outgoing edge, so that the insertion of $e$ creates a new outgoing $2$-star vertex at one of them and does not create a new $2$-star vertex at the other. Equvialently, if $G$ has more outgoing $2$-star vertices than incoming $2$-star vertices, both leafs have an incoming edge.
\end{proof}

We will now show that the maximum balance of a tree depends on its number of leafs.

\begin{lemma}\label{lemma_baumbalance}
Let $G=(V,E)$ be an orientation of a tree with $k$ leafs. Then $\calB(G)\leq k-1$.
\end{lemma}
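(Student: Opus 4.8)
The plan is to prove the bound by induction on the number $k$ of leaves. The base case is $k=2$: then $G$ is an orientation of a line-shaped graph, and Lemma~\ref{lemma_linienbalance} already gives $\calB(G)\le 1=k-1$. So I would assume $k\ge 3$ and that the statement holds for every orientation of a tree with fewer than $k$ leaves.

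For the inductive step the idea is to peel off a \emph{pendant path}. Starting from an arbitrary leaf $\ell$, walk along the unique path leaving $\ell$ until the first vertex $w$ of undirected degree at least $3$ is reached; such a $w$ exists because a tree with $k\ge 3$ leaves is not a line, and the walk cannot terminate at another leaf for the same reason. This produces a path $P=(w=u_0,u_1,\dots,u_j=\ell)$ with $j\ge 1$, whose interior vertices $u_1,\dots,u_{j-1}$ have degree $2$ in $G$ and whose endpoint $\ell$ has degree $1$. Let $G''$ be the orientation of the tree obtained by deleting $u_1,\dots,u_j$ together with the incident edges. Then $w$ loses exactly the edge $wu_1$, so it still has degree $\ge 2$ in $G''$; every leaf of $G$ other than $\ell$ survives, and no new leaf is created. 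Hence $G''$ is an orientation of a tree with exactly $k-1$ leaves, and the induction hypothesis yields $\calB(G'')\le k-2$, while Lemma~\ref{lemma_linienbalance} applied to the standalone line $P$ yields $\calB(P)\le 1$.

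The heart of the argument is comparing $\calB(G)$ with $\calB(G'')$ and $\calB(P)$. Writing $b(X)=\#\{v\mid\degout(v)\ge2\}-\#\{v\mid\degin(v)\ge2\}$ so that $\calB(X)=|b(X)|$, the key observation is that each $u_i$ with $1\le i\le j$ has exactly the same in- and out-degree in $G$ as in the standalone path $P$, so it contributes identically to $b(G)$ and to $b(P)$; the only vertex whose bookkeeping changes is $w$, which has degree $1$ (hence contributes $0$) in $P$, but carries all its $G''$-edges plus the edge $wu_1$ in $G$. Splitting the vertex set of $G$ into $\{u_1,\dots,u_j\}$, the vertex $w$, and everything else therefore gives $b(G)=b(G'')+b(P)+\Delta_w$, where $\Delta_w\in\{-1,0,1\}$ is the change in $w$'s own contribution ($1$ if $\degout(w)\ge 2$ else $0$, minus $1$ if $\degin(w)\ge 2$ else $0$) when passing from $G''$ to $G$. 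Naively this only delivers $|b(G)|\le(k-2)+1+1=k$, so the main obstacle — and the crux of the proof — is to show that $b(P)$ and $\Delta_w$ can never reinforce each other.

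This is exactly where the \emph{structural} half of Lemma~\ref{lemma_linienbalance} is needed. If $b(P)=+1$, the lemma says both leaves of $P$, in particular $w$, have an incoming edge, so $wu_1$ points into $w$; hence $\degout_G(w)=\degout_{G''}(w)$ while $\degin_G(w)\ge\degin_{G''}(w)$, which forces $\Delta_w\le 0$, so $b(G)\in\{b(G''),b(G'')+1\}$ and $|b(G)|\le|b(G'')|+1$. Symmetrically, if $b(P)=-1$ the edge $wu_1$ points out of $w$, giving $\Delta_w\ge 0$, $b(G)\in\{b(G'')-1,b(G'')\}$, and again $|b(G)|\le|b(G'')|+1$. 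If $b(P)=0$ the bound $|b(G)|\le|b(G'')|+|\Delta_w|\le|b(G'')|+1$ is immediate. In every case $\calB(G)=|b(G)|\le|b(G'')|+1\le(k-2)+1=k-1$, which closes the induction. I expect the only delicate point to be this case analysis at the branching vertex $w$: one must ensure that attaching an unbalanced line to $w$ does not stack its $\pm1$ on top of a freshly created $2$-star at $w$, and the part of Lemma~\ref{lemma_linienbalance} that pins down the orientation of the extreme edges of an unbalanced line is precisely what prevents this.
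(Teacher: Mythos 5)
Your proof is correct and follows essentially the same route as the paper: induction on the number of leaves, peeling off a pendant path up to the first $3$-star vertex, applying the induction hypothesis to the remaining $(k-1)$-leaf tree and Lemma~\ref{lemma_linienbalance} (crucially including its structural statement about the orientation of the leaf edges of an unbalanced line) to the path, and a case analysis at the junction vertex to rule out stacking the line's $\pm1$ on a newly created $2$-star there. The only difference is bookkeeping: the paper cuts the last edge $e$ and bounds the effect of re-inserting it, while you keep the junction vertex inside the path and track its contribution via the signed decomposition $b(G)=b(G'')+b(P)+\Delta_w$, which is an equivalent formulation of the same argument.
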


\begin{proof}
We prove the lemma by induction over the number $k$ of leafs of the tree. For the base case let $G$ be an orientation of a tree with $k=2$ leafs. Then, by Lemma \ref{lemma_linienbalance}, we have $\calB(G)\leq 1=k-1$.

For the induction step we assume that $G$ is an orientation of a tree with $k>2$ leafs and that every orientations of a tree with at most $m<k$ leafs has a balance of at most $m-1$.

Since $G$ has at least three leaf vertices, there exists at least one $3$-star vertex $v$ in $G$. If we delete an edge $e$ incident to $v$, $G$ decomposes into two subgraphs $G_1$ and $G_2$ which are not connected to each other. We can choose $v$ and $e$ in such a way that $G_2$ is a single vertex or an orientation of a line-shaped graph by starting at an arbitrary leaf of $G$ and traversing along the edges of $G$ until the first $3$-star vertex is reached. Let $v$ be a vertex found in such a way and let $e$ be the last edge visited when traversing from the corresponding leaf to $v$ (see figure \ref{abb_balance}).

$G_1$ has $k-1$ leafs and, by the induction hypothesis, is guaranteed to have a balance of at most $k-2$; $G_2$ has a balance of $0$, if it consists of a single vertex, and, by the induction hypothesis, a balance of at most $1$ elsewise.

We now re-insert $e$: We have $\calB(G)\leq \calB(G_1)+\calB(G_2)+1$. If $\calB(G_2)=0$ this immediately yields $\calB(G)\leq k-1$; if the majority of $2$-stars in $G_1$ is incoming and the majority of $2$-stars in $G_2$ is outgoing (or vice versa), $\calB(G)\leq k-1$ follows immediately, too.

Now assume that neither of it is the case, i.e. both $G_1$ and $G_2$ the same type of $2$-stars has the majority and the balance $G_2$ is $1$; without loss of generality, assume that the majority of $2$-stars in both graphs is incoming. Let $u$ be the vertex of $G_2$ that $e$ is connected to. Then, by Lemma \ref{lemma_linienbalance}, $u$ has an outgoing edge. Thus, if $e=(u,v)$, a new outgoing $2$-star at $u$ is created, and at most one new incoming $2$-star at $v$ is created. If $e=(v,u)$, no incoming $2$-star at $u$ is created, since the other edge incident to $u$ is incoming, and, since $e$ is an outgoing edge at $v$, no additional incoming $2$-star at $v$ is created. Thus, the number of incoming $2$-star created by re-inserting $e$ is bounded by the number of outgoing $2$-stars created, and thus we have $\calB(G)\leq\calB(G_1)+\calB(G_2)\leq k-2 + 1 = k-1$.
\end{proof}

Finally we will derive an upper bound on the balance of arbitrary weakly connected graphs. We will need the following observation for this:

\begin{observation}\label{obs_blaetterzahl}
Let $G=(V,E)$ be an orientation of a tree and let $C_3\subseteq V$ be the set of $3$-star vertices of $G$. Then $G$ has exactly $2-2|C_3|+\sum_{v\in C_3}\deg(v)$ leafs.
\end{observation}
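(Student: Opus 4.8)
The statement concerns only the underlying undirected tree, since leaf, degree, and $3$-star vertex all refer to the undirected degree $\deg$; so the plan is to discard the orientation entirely and argue by a pure counting identity. First I would fix the nondegenerate case $|V|\ge 2$ (the one-vertex tree is excluded in every application of this observation, as all trees arising here are weak components with at least two vertices). Since $G$ is then connected with no isolated vertex, I would partition $V$ into the set $L$ of leaves (vertices of degree exactly $1$), the set $P$ of vertices of degree exactly $2$, and the set $C_3$ of vertices of degree at least $3$, so that $|V|=|L|+|P|+|C_3|$.

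The key step is the handshake lemma combined with the fact that a tree on $|V|$ vertices has exactly $|V|-1$ edges, giving $\sum_{v\in V}\deg(v)=2(|V|-1)$. Splitting the left-hand side along the partition yields $|L|+2|P|+\sum_{v\in C_3}\deg(v)=2(|L|+|P|+|C_3|)-2$; the terms $2|P|$ cancel on both sides, and solving for $|L|$ gives $|L|=2-2|C_3|+\sum_{v\in C_3}\deg(v)$, which is exactly the claimed number of leaves.

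There is essentially no obstacle here — the only subtlety is that the degree-$2$ vertices must drop out of the count, which happens automatically through the cancellation of the $2|P|$ term, and the degenerate single-vertex case, which is harmless since it never occurs. As an alternative I could instead run an induction on $|C_3|$ that mirrors the pendant-path peeling used in the proof of Lemma \ref{lemma_baumbalance} (removing a leaf-to-first-$3$-star path decreases $|C_3|$ by one and the leaf count by $\deg(v)-2$ at the affected vertex), but the direct handshake argument is shorter and I would present that.
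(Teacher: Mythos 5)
Your proof is correct, and it takes a genuinely different route from the paper. The paper proves the observation by induction on $|C_3|$: the base cases $|C_3|=0$ (line-shaped graph, two leafs) and $|C_3|=1$ (a $\deg(v)$-leaf star of paths) are handled directly, and the inductive step peels off all pendant paths hanging at a suitably chosen $3$-star vertex $v$, applies the hypothesis to the remaining tree, and re-attaches the removed piece, tracking how the leaf count changes by $\deg(v)-2$ --- essentially the peeling you mention as an alternative. Your argument instead discards the orientation, partitions the vertices into leafs $L$, degree-$2$ vertices $P$, and $C_3$, and combines the handshake lemma with the edge count $|V|-1$ of a tree to get $|L|+2|P|+\sum_{v\in C_3}\deg(v)=2(|L|+|P|+|C_3|)-2$, from which the claimed formula for $|L|$ follows by cancellation. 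This is shorter, purely local in the sense that no structural decomposition of the tree is needed, and it avoids the slightly delicate choice of which vertex and edges to remove in the paper's inductive step; the paper's induction, on the other hand, matches the style of the surrounding arguments (notably the proof of Lemma \ref{lemma_baumbalance}) and needs no appeal to the edge count of a tree. Your remark about the single-vertex tree is the right caveat: the formula gives $2$ there, so the statement implicitly assumes $|V|\geq 2$, an assumption the paper's proof also makes tacitly through its base cases, and which is harmless in every application.
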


\begin{proof}
We show the observation by induction over the number of $3$-star vertices in $G$. For the base case observe that if $|C_3|=0$, $G$ is an orientation of a line-shaped graph and thus has $2$ leafs; hence the observation holds. For $|C_3|=1$, assume that $v$ is the only $3$-star vertex in $G$ and $v$ has an undirected degree of $k$. Then, $G$ consists of orientations of $k$ lines, which are connected in $v$; $G$ thus has $k$ leafs and we have $2-2\cdot |C_3| + \sum_{v\in C_3} = k$; hence the observation holds if $G$ has at most one $3$-star vertex.

For the induction step we assume that $G$ is a graph with $|C_3|=i\geq 2$ $3$-star vertices and that the observation holds for all graphs $G'$ that have $j<i$ $3$-star vertices. Since $G$ is a tree and contains at least $2$ $3$-star vertices, there exists an edge $e$ which is incident to a $3$-star vertex $v$ and lies on a direct path from $v$ to another $3$-star vertex $u$, but all the other edges incident to $v$ lie on paths to leafs of $G$. We delete all the vertices and edges from $G$ that form paths from $v$ to leafs, but we sustain $v$ and $e$. By the choice of $v$, $v$ is a leaf in the resulting graphs $G'$, and the set of $3$-star vertices in $G'$ is $C_3'=C_3\backslash\{v\}$. Thus, $G'$ has less than less than $i$ $3$-stars and by the induction hypothesis has at most $2-2|C_3'|+\sum_{v\in C_3'}\deg(v)$ leafs; furthermore, the degree of every vertex $v\in C_3'$ in $G'$ equals its degree in $G$.

Now let $\tilde G$ be the subgraph of $G$ induced by the deleted vertices and $v$: Since by construction $\tilde G$ can have no $3$-star vertices other than $v$, $\tilde G$ contains at most a single $3$-star. Thus, by the base case, the number of leafs of $\tilde G$ equals the undirected degree of $v$ in $\tilde G$; let $k$ be this number. Now we join $G'$ and $\tilde G$: $v$ now has a degree of $k+1$, and in the resulting graph $G$, $v$ is not a leaf any more compared to $G'$, but $G$ contains $k$ additional leafs from $\tilde G$; thus, the number of leafs in $G$ is larger than that of $G'$ by $\deg(v)-2$ and hence equals

\begin{align*}
	\deg(v)-2 + 2-2|C_3'|+\sum_{v\in C_3'}\deg(v) &= 2-2|C_3'\cup\{v\}|+\sum_{v\in C_3'\cup\{v\}}\deg(v)\\
		 &= 2-2|C_3|+\sum_{v\in C_3}\deg(v).
\end{align*}
\end{proof}

Now we can proceed by proving an upper bound on the balance of an arbitrary weakly connected graph:

\begin{lemma}\label{lemma_balanceallgemein}
Let $G=(V,E)$ be a weakly connected graph that has $(1+\delta)|V|$ edges, $\delta>0$. Also let $C_3\subseteq V$ be the set of $3$-star vertices of $G$. Then $G$ has a balance of at most $2+\delta |V| - 2|C_3| + \sum_{v\in C_3}\deg(v)$.
\end{lemma}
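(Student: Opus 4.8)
The plan is to reduce $G$ to a spanning tree by deleting edges, bound the tree's balance via Observation~\ref{obs_blaetterzahl} and Lemma~\ref{lemma_baumbalance}, and charge one unit of balance for each deleted edge.

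First I would exploit that $G$ is weakly connected with $|V|$ vertices and $(1+\delta)|V|$ edges: its underlying undirected graph is connected, hence contains a spanning tree, and removing the non-tree edges deletes exactly $(1+\delta)|V|-(|V|-1)=\delta|V|+1$ of them. Let $T$ be the resulting orientation of a tree. Deleting these edges one at a time and invoking the observation stated just before Lemma~\ref{lemma_linienbalance} (inserting or deleting a single edge changes the balance by at most one, since only the new edge's head can become a fresh incoming $2$-star vertex and only its tail a fresh outgoing one), we obtain $\calB(G)\le\calB(T)+\delta|V|+1$. Note that the intermediate graphs in this deletion sequence need not be trees or connected — the per-edge bound on the balance change holds unconditionally.

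Next, writing $\deg_T$ for the undirected degree in $T$ and $C_3'\subseteq V$ for the set of $3$-star vertices of $T$, Observation~\ref{obs_blaetterzahl} tells us that $T$ has exactly $2-2|C_3'|+\sum_{v\in C_3'}\deg_T(v)$ leaves, and Lemma~\ref{lemma_baumbalance} then gives $\calB(T)\le 1-2|C_3'|+\sum_{v\in C_3'}\deg_T(v)$. Combining this with the previous inequality yields $\calB(G)\le 2+\delta|V|-2|C_3'|+\sum_{v\in C_3'}\deg_T(v)$.

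It remains to pass from the tree's parameters back to those of $G$, and this is the only delicate point of the argument. Since deleting edges only decreases degrees, we have $C_3'\subseteq C_3$ and $\deg_T(v)\le\deg(v)$ for every $v$; moreover $\deg(v)-2\ge 1\ge 0$ for each $v\in C_3$. Hence $\sum_{v\in C_3'}(\deg_T(v)-2)\le\sum_{v\in C_3'}(\deg(v)-2)\le\sum_{v\in C_3}(\deg(v)-2)$, i.e. $-2|C_3'|+\sum_{v\in C_3'}\deg_T(v)\le -2|C_3|+\sum_{v\in C_3}\deg(v)$, which turns the bound above into the claimed $\calB(G)\le 2+\delta|V|-2|C_3|+\sum_{v\in C_3}\deg(v)$. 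The main obstacle is precisely making sure this last inequality points the right way: one has to check that replacing $G$ by a spanning tree cannot spuriously improve the right-hand side, which works because the "budget" $-2|C_3|+\sum_{v\in C_3}\deg(v)$ is monotone under the edge deletions once one observes that $\deg(v)-2$ is nonnegative on $C_3$. Everything else — the cycle-rank count and the appeals to the two preceding results — is routine.
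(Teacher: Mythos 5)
Your proof is correct and takes essentially the same route as the paper's: delete $\delta|V|+1$ edges to reach a spanning-tree orientation, bound its balance by its number of leaves via Observation~\ref{obs_blaetterzahl} and Lemma~\ref{lemma_baumbalance}, and re-insert the deleted edges at a cost of at most one unit of balance each. The only difference is that you explicitly justify the passage from the tree's $3$-star vertices and degrees back to those of $G$ (using $C_3'\subseteq C_3$, $\deg_T\le\deg$, and $\deg(v)-2\ge 0$ on $C_3$), a monotonicity step the paper's proof silently elides by stating the leaf bound directly in terms of $C_3$ and $\deg$ in $G$.
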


\begin{proof}
Delete $\delta |V|+1$ edges from $G$ in such a way that $G$ remains weakly connected; Since the resulting graph $G'$ is weakly connected and has $V-1$ edges, $G'$ is an orientation of a tree. By  Lemma \ref{lemma_baumbalance} the balance of $G'$ is bounded by its number of leafs, which, by observation \ref{obs_blaetterzahl}, is $2-2|C_3| +\sum_{v\in C_3}\deg(v)$; hence it holds $\calB(G')\leq 1-2|C_3| +\sum_{v\in C_3}\deg(v)$.

By re-inserting all the deleted edges into $G'$, we get $G$. Since every re-inserted edge can only alter the balance by $1$, we have $\calB(G)\leq \calB(G_1) + \delta |V| + 1\leq 2 + \delta |V|- 2|C_3| +\sum_{v\in C_3}\deg(v)$.
\end{proof}

The following simple Lemma states that every weakly connected graph whose number of edges is larger than its number of vertices, contains at least one $3$-star vertices. Hence, a graph can be rejected if it contains too many edges.

\begin{lemma}\label{lemma_kanten_3sterne}
Let $G=(V,E)$ be a weakly connected directed graph with $n$ vertices and $m>n$ edges. Then, $G$ contains at least $1$ $3$-star vertex.
\end{lemma}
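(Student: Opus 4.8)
The plan is to argue by contradiction via a double-counting of undirected degrees. Recall that throughout this section $G$ is assumed to have no double edges, so the undirected degree $\deg(v)$ of a vertex equals its number of distinct neighbours, and $v$ is a $3$-star vertex precisely when $\deg(v)\geq 3$: indeed, any three of the edges incident to such a $v$ form some orientation of a $3$-star centred at $v$. Hence it suffices to exhibit a vertex of undirected degree at least $3$.

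Assume for contradiction that every $v\in V$ has $\deg(v)\leq 2$. The step I would carry out is the handshake identity on the underlying undirected graph: $\sum_{v\in V}\deg(v)=2m$. Combined with the assumption this gives $2m=\sum_{v\in V}\deg(v)\leq 2|V|=2n$, so $m\leq n$, contradicting the hypothesis $m>n$. Therefore some vertex has undirected degree at least $3$, and by the remark above it is a $3$-star vertex.

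Alternatively, and more in the spirit of the structural lemmas preceding it, one can invoke weak connectedness directly: if every vertex has undirected degree at most $2$, then the underlying undirected graph is a disjoint union of simple paths and cycles, and since $G$ is weakly connected it is a single path (with $n-1$ edges) or a single cycle (with $n$ edges); either way $m\leq n$, again a contradiction. I would present the short counting version and, if desired, note this structural reading as well.

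I do not expect any genuine obstacle here. The only points deserving a word of care are (i) that the statement really relies on the standing no-double-edge assumption of the section — otherwise two mutually parallel edges between two vertices would be a counterexample with $n=2<m$ — so this dependence should be made explicit; and (ii) the (harmless) observation that the counting argument does not actually need weak connectedness, which is simply inherited from the setting and can be mentioned in passing.
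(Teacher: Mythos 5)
Your argument is correct and is essentially the paper's own proof: the paper also applies the handshake identity to get an average undirected degree of $2m/n>2$, concludes that some vertex has degree at least $3$, and invokes the section's no-double-edge assumption to call it a $3$-star vertex. Your contradiction phrasing and the side remarks (path/cycle structure, dispensability of weak connectedness) are harmless variations of the same counting argument.
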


\begin{proof}
Since every edge of $G$ has two incident vertices, the sum of all undirected vertex degrees in $G$ is $2m$. Hence, the average undirected vertex degree in $G$ is $2m/n>2$; thus, there exists at least one vertex that has an undirected degree of more than $2$ and therefore is a $3$-star vertex (note that, by the assumption from the beginning of the section, $G$ does not contain any double-edges).
\end{proof}

We will use the fact stated by Lemma \ref{lemma_kanten_3sterne} to reject every input graph, whose edge count exceeds a certain number. To estimate the edge count of a given directed graph, we use the algorithm {\sc EstimateEdgeCount}. The algorithm samples $\Ovon(D/\epsilon)$ pairs $(v,i)\in V(G)\times \{1,\ldots, D\}$ of a vertex and the number of a slot of its adjacency list uniformly and independently distributed at random; it then checks for all those pairs whether the corresponding edge exists and extrapolates the fraction of edges found. Note that a similar technique has already been used in a property testing algorithm for acyclicity in bounded-degree graphs by Goldreich and Ron \cite{GR97}. 

\begin{Algorithm}[tb]
\noindent\centering
\shadowbox{
	\begin{minipage}{8cm}
		\begin{tabbing}
			~~~~\= ~~~~\= ~~~~\= ~~~~ \= \kill
			\textsc{\textbf{EstimateEdgeCount}}($n,G,\epsilon$)\\
					\> $\hat m \leftarrow 0$\\
					\> Sample $s_{\ref{alg_kantenzahl}}=\frac{2D}{\epsilon}$ pairs $(v_1,k_1), \ldots, (v_{s_{\ref{alg_kantenzahl}}}, k_{s_{\ref{alg_kantenzahl}}})$ of a vertex\\
					\>\>\>\> and a number of an adjacency list slot, i.e. $(v_i,k_i)\in V\times\{1\ldots D\}$\\									
	\> \textbf{for} $i\leftarrow 1$ \textbf{to} $s_{\ref{alg_kantenzahl}}$ \textbf{do}\\
					\>\> \textbf{if} $v_i$ has a $k_i$-th edge \textbf{then} $\hat m\leftarrow\hat m + \frac{Dn}{s_{\ref{alg_kantenzahl}}}$\\
					\> \textbf{return} $\hat m$
		\end{tabbing}
	\end{minipage}
}\\
\caption{{\sc EstimateEdgeCount}}\label{alg_kantenzahl}
\end{Algorithm}

\begin{lemma}\label{lemma_kantenzahl}
Let $G=(V,E)$ be a directed graph with $n$ vertices and $m$ edges and let the undirected degree of every vertex of $G$ be bounded by $D\in\NN$; let $\epsilon<1$ be a proximity parameter. Then {\sc EstimateEdgeCount} returns a value $\hat m$ that, with probability at least $1-2e^{-4}$, satisfies
$$m-\epsilon n \leq \hat m \leq m+\epsilon n.$$
The running time of {\sc EstimateEdgeCount} is $\Ovon(D/\epsilon)$.
\end{lemma}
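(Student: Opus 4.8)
The plan is to analyze \textsc{EstimateEdgeCount} as a straightforward sampling estimator and apply a Chernoff-type concentration bound. First I would set up the random variables: for each sampled pair $(v_i,k_i)\in V\times\{1,\ldots,D\}$, let $Y_i$ be the indicator that $v_i$ has a $k_i$-th outgoing edge. Since each pair is drawn uniformly and independently from the $Dn$ slots, and exactly $m$ of these slots correspond to actual edges, we have $\Pr[Y_i=1]=m/(Dn)$, so $\E[Y_i]=m/(Dn)$. The algorithm outputs $\hat m=\frac{Dn}{s}\sum_{i=1}^{s}Y_i$ with $s=s_{\ref{alg_kantenzahl}}=2D/\epsilon$, hence $\E[\hat m]=m$. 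The claim $m-\epsilon n\le\hat m\le m+\epsilon n$ is therefore a deviation of at most $\epsilon n$ around the mean, i.e. a deviation of at most $\epsilon n\cdot\frac{s}{Dn}=\epsilon s/D = 2$ (in expectation-scaled units) in $\sum Y_i$, or equivalently an additive error of $\epsilon n/(Dn)=\epsilon/(Dn)\cdot n$... more cleanly, I would phrase the event as $|\sum_i Y_i - \E[\sum_i Y_i]| \le \epsilon n \cdot \frac{s}{Dn} = \frac{\epsilon s}{D}$.

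Next I would invoke an additive Hoeffding/Chernoff bound for the sum $S=\sum_{i=1}^s Y_i$ of independent $\{0,1\}$ variables with mean $\mu=\E[S]=sm/(Dn)$. We want $\Pr[|S-\mu|>\epsilon s/D]\le 2e^{-4}$. Using the additive Chernoff–Hoeffding inequality $\Pr[|S-\mu|>t]\le 2\exp(-2t^2/s)$ with $t=\epsilon s/D$, the exponent becomes $2(\epsilon s/D)^2/s = 2\epsilon^2 s/D^2$. Plugging in $s=2D/\epsilon$ gives $2\epsilon^2\cdot(2D/\epsilon)/D^2 = 4\epsilon/D$. Since $\epsilon<1$ and $D\ge 1$ this exponent could be smaller than $4$, so a naive application is slightly too weak; I would instead use the multiplicative/relative Chernoff bound in the regime where $t = \epsilon s/D$ is large relative to $\mu$ (this happens exactly when $m$ is small, e.g. $m<n$, which is the only case the lemma is actually used in via Lemma~\ref{lemma_kanten_3sterne}), or more robustly observe that since each slot is equally likely and we only need the bound for the purpose of thresholding near $m\approx n$, the relevant range of $\mu$ is bounded and the standard Chernoff bound $\Pr[|S-\mu|>\delta\mu]\le 2e^{-\delta^2\mu/3}$ yields the claim with the stated constant. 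The cleanest route is: set $\delta$ so that $\delta\mu = \epsilon s/D$, i.e. $\delta = \epsilon n/m$; then $\delta^2\mu/3 = (\epsilon n/m)^2\cdot (sm/(Dn))/3 = \epsilon^2 n s/(3Dm)$, and with $s=2D/\epsilon$ this is $2\epsilon n/(3m)$, which is $\ge 2/3$ only when $m\le n$ — matching precisely the use-case, but not giving $e^{-4}$ in general. I expect the paper simply picks the sample size $s_{\ref{alg_kantenzahl}}$ to be whatever constant multiple of $D/\epsilon$ makes the additive Hoeffding bound give $2e^{-4}$; so I would just state $s=cD/\epsilon$ for a suitable absolute constant $c$ and carry the additive bound $\Pr[|\hat m - m|>\epsilon n]=\Pr[|S-\E S|>\epsilon s/D]\le 2\exp(-2\epsilon^2 s/D^2)\le 2e^{-4}$, which holds as soon as $s\ge 2D^2/\epsilon^2$; if the paper really wants $s=\Ovon(D/\epsilon)$ rather than $\Ovon(D^2/\epsilon^2)$ one must use the relative bound and the hidden assumption that $m=\Ovon(n)$, which is exactly the regime of interest.

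The main obstacle, then, is reconciling the claimed $\Ovon(D/\epsilon)$ query/running-time bound with the additive error guarantee: a purely additive Hoeffding bound over $Dn$ slots with $m$ of them "good" needs $\Omega(D^2/\epsilon^2)$ samples for an $\epsilon n$ additive error, whereas $\Ovon(D/\epsilon)$ samples only suffice once one exploits that $m\le Dn$ and, crucially, that in all applications $m = \Theta(n)$ (so the sampled fraction $m/(Dn)=\Theta(1/D)$ and a relative-error Chernoff bound converts to the desired additive error cheaply). I would therefore structure the proof around the multiplicative Chernoff bound applied to $S$ with mean $\mu=sm/(Dn)$, handle the two cases $m\le 2n$ and $m>2n$ separately (in the latter case one only needs the one-sided statement that $\hat m$ is large, which is even easier), and conclude that with probability $\ge 1-2e^{-4}$ we have $|\hat m-m|\le\epsilon n$. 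The running-time bound $\Ovon(D/\epsilon)$ is then immediate since the algorithm makes $s_{\ref{alg_kantenzahl}}=\Ovon(D/\epsilon)$ queries and does $\Ovon(1)$ work per query.
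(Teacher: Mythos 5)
Your setup is exactly the paper's (indicators for the sampled slots, $\E[\hat m]=m$, reduction to a deviation of $t=\epsilon s/D$ in the raw count), and your worry about the exponent is precisely where the paper's own proof goes wrong: the paper claims $\Pr\left[|X-\E[X]|>\epsilon s/D\right]\leq 2e^{-\epsilon^2 s^2/D^2}=2e^{-4}$, i.e.\ an exponent scaling with $s^2$, whereas the additive Chernoff--Hoeffding bound gives $2e^{-2t^2/s}=2e^{-2\epsilon^2 s/D^2}$, which with $s=2D/\epsilon$ is only $2e^{-4\epsilon/D}$ -- vacuous for small $\epsilon$. This is not merely a loose bound: with $s=2D/\epsilon$ samples the estimator $\hat m=\frac{Dn}{s}X$ has standard deviation $\sqrt{\tfrac{Dnm}{s}(1-p)}=\Theta(\sqrt{\epsilon}\,n)$ when $m=\Theta(n)$, which exceeds the allowed additive error $\epsilon n$, so the stated guarantee fails with probability bounded away from zero; the lemma is simply not correct with that sample size. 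Your repair -- take $s=\Theta(D^2/\epsilon^2)$ so that the additive Hoeffding bound yields $2e^{-4}$ -- is the right one, and it is harmless downstream: the running time of the subroutine becomes $\Ovon(D^2/\epsilon^2)$, which is still dominated by the $\Theta(\sqrt{n}/\epsilon^{3/2})$ edge sample of {\sc Test3StarFreeness} under the paper's standing assumption $n=\omega(D^4/\epsilon^3)$.

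The one part of your proposal that does not survive scrutiny is the suggested alternative salvage, namely that $\Ovon(D/\epsilon)$ samples suffice via a multiplicative bound once $m=\Theta(n)$. Your own computation already refutes it: the required relative deviation is $\delta=\epsilon n/m=\Theta(\epsilon)$ and $\mu=sm/(Dn)$, so with $s=2D/\epsilon$ the exponent $\delta^2\mu/3=\tfrac{2\epsilon n}{3m}$ is $\Theta(\epsilon)$ for $m=\Theta(n)$, giving failure probability $1-e^{-\Theta(\epsilon)}$, i.e.\ essentially $1$ for small $\epsilon$ (also note the slip: $\tfrac{2\epsilon n}{3m}\geq\tfrac23$ requires $m\leq\epsilon n$, not $m\leq n$). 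Even exploiting $m=\Ovon(n)$, additive error $\epsilon n$ with constant confidence needs $\delta^2\mu=\Omega(1)$, i.e.\ $s=\Omega(D/\epsilon^2)$; there is no regime in which $\Theta(D/\epsilon)$ samples suffice. So the honest conclusion is that the lemma must be restated with a larger sample ($\Theta(D^2/\epsilon^2)$ unconditionally, or $\Theta(D/\epsilon^2)$ using $m\leq Dn$ and a Bernstein-type argument) or a weaker error guarantee; modulo that parameter change, your argument coincides with the paper's intended proof.
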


\begin{proof}
Let $X_i$ be an indicator random variable for the event that $v_i$ has a $k_i$-th neighbor, $i=1,\ldots ,s_{\ref{alg_kantenzahl}}$; let $X=\sum_{1\leq i\leq s_{\ref{alg_kantenzahl}}} X_i$. Each of the $m$ edges of $G$ is stored in exactly one adjacency list slot as an outgoing edge and there are $Dn$ adjacency list slots in total. Since $(v_i, k_i)$ is chosen u.i.d. at random among all pairs of a vertex and an adjacency list slot, it holds
$$\E[X_i]=\Pr[X_i=1]=\frac{m}{Dn}$$
and, by the linearity of expectation,
\begin{align*}
\E[\hat m] & = \E\left[\frac{Dn}{s_{\ref{alg_kantenzahl}}}\cdot X\right] = \E\left[\frac{Dn}{s_{\ref{alg_kantenzahl}}}\cdot \sum_{1\leq i\leq s_{\ref{alg_kantenzahl}}} X_i\right] \\
	&=\frac{Dn}{s_{\ref{alg_kantenzahl}}}\cdot \sum_{1\leq i\leq s_{\ref{alg_kantenzahl}}} E[X_i]= \frac{Dn}{s_{\ref{alg_kantenzahl}}}\cdot \sum_{1\leq i\leq s_{\ref{alg_kantenzahl}}} \frac{m}{Dn} = m.
\end{align*}

It remains to show that the probability that $\hat m$ differs from $m$ by more than $\epsilon n$ is at most $2e^{-4}$. For this purpose, we use an additive Chernoff Bound:

$$\Pr\left[\left|\hat m - \E[\hat m]\right| > \epsilon n\right] = \Pr\left[ |X-E[X]| > \frac{\epsilon s_{\ref{alg_kantenzahl}}}{D}  \right]\leq 2e^{-\epsilon^2s_{\ref{alg_kantenzahl}}^2/D^2} = 2e^{-4}.$$

The running time of {\sc EstimateEdgeCount} directly follows from the number of samples taken.
\end{proof}

Analogously to the above proof one can show that the value $\hat k$ that is computed in algorithm {\sc Tes\-te\-3\-Stern\-Frei\-heit} is a good estimate of the number of outgoing $2$-stars in the input graph:

\begin{lemma}\label{lemma_aus2sternzahl}
Let $G=(V,E)$ be a directed graph with $n$ vertices and a bounded undirected vertex degree of $D\in\NN$. Let $\epsilon$ be a proximity parameter and assume that $G$ contains $k\leq n$ outgoing $2$-star vertices. Then, for the value $\hat k$ computed by {\sc Test3StarFreeness}($n,G,\epsilon$) it holds
$$k\leq \hat k \leq k+ \frac{\epsilon}{6}n$$
with a probability of at least $1-2e^{-16}$.
\end{lemma}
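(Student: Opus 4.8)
The plan is to mirror the proof of Lemma~\ref{lemma_kantenzahl} almost verbatim; the only genuine differences are the identity of the quantity being estimated (now the number of outgoing $2$-star vertices, read off directly from the sampled vertices rather than from adjacency-list slots) and the bookkeeping around the additive shift by $\frac{\epsilon n}{12}$, which is exactly what turns a two-sided estimate into the claimed one-sided sandwich $k \le \hat k \le k + \frac{\epsilon}{6}n$.

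First I would fix notation. Write $s := s_{\ref{alg_3stern}.1} = \frac{48}{\epsilon}$ for the number of vertices sampled uniformly and independently at random (hence with replacement), and for $i = 1,\dots,s$ let $X_i$ be the indicator of the event that the $i$-th sampled vertex $v_i$ is an outgoing $2$-star vertex; put $X = \sum_{i=1}^{s} X_i$. Since each $v_i$ is drawn uniformly from the $n$ vertices and exactly $k$ of them are outgoing $2$-star vertices, $\Pr[X_i = 1] = k/n$ — this is the only place the hypothesis $k \le n$ enters, guaranteeing that $k/n$ is a legitimate probability — and the $X_i$ are mutually independent. With this notation the value computed by the algorithm is $\hat k = \frac{n}{s} X + \frac{\epsilon n}{12}$, and by linearity of expectation $\E\!\left[\frac{n}{s} X\right] = \frac{n}{s}\cdot s \cdot \frac{k}{n} = k$, so $\E[\hat k] = k + \frac{\epsilon n}{12}$.

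Next I would observe that the asserted inclusion $k \le \hat k \le k + \frac{\epsilon}{6}n$ is exactly equivalent to $\bigl|\frac{n}{s}X - k\bigr| \le \frac{\epsilon n}{12}$, i.e.\ to $|X - \E[X]| \le \frac{\epsilon s}{12} = 4$: the additive term $\frac{\epsilon n}{12}$ is chosen precisely so that a deviation of the unbiased estimator $\frac{n}{s}X$ from $k$ by at most $\frac{\epsilon n}{12}$ in either direction forces $\hat k$ into $[k, k + \frac{\epsilon}{6}n]$. Applying the same additive Chernoff bound that was used in the proof of Lemma~\ref{lemma_kantenzahl}, now to the sum $X$ of $s = 48/\epsilon$ independent Bernoulli variables, then gives
\[
\Pr\!\left[\,\bigl|X - \E[X]\bigr| > \tfrac{\epsilon s}{12}\,\right]
 \;=\; \Pr\!\left[\,\bigl|X - \E[X]\bigr| > 4\,\right]
 \;\le\; 2e^{-\epsilon^2 s^2/144}
 \;=\; 2e^{-16},
\]
and combining this with the equivalence above finishes the proof.

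There is no real obstacle here: this is deliberately the ``analogously to Lemma~\ref{lemma_kantenzahl}'' lemma advertised in the surrounding text. The only two points that need a moment of attention are (i) getting the direction and size of the shift right, i.e.\ verifying that adding exactly $\frac{\epsilon n}{12}$ (rather than subtracting it, or shifting by some other amount) is what converts the two-sided bound $\bigl|\frac{n}{s}X - k\bigr| \le \frac{\epsilon n}{12}$ into the desired one-sided sandwich, and (ii) making sure the sampling in this step is genuinely i.i.d.\ (with replacement), so that the independence required by the Chernoff bound is available directly, without invoking the ``sample each element with probability $p$'' device used elsewhere in the paper.
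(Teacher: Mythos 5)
Your proposal follows essentially the same route as the paper's proof: the same indicator variables $X_i$ for sampled vertices being outgoing $2$-star vertices, the same computation $\E[\hat k]=k+\frac{\epsilon}{12}n$, and the same additive Chernoff bound $\Pr\bigl[|X-\E[X]|>\frac{\epsilon}{12}s_{\ref{alg_3stern}.1}\bigr]\leq 2e^{-\epsilon^2 s_{\ref{alg_3stern}.1}^2/144}=2e^{-16}$. The only difference is that you spell out explicitly that a deviation of at most $\frac{\epsilon n}{12}$ around the shifted expectation is equivalent to the one-sided sandwich $k\leq\hat k\leq k+\frac{\epsilon}{6}n$, which the paper leaves implicit.
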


\begin{proof}
We consider the vertex sample drawn in the fifth line of the algorithm: Let $X_i$ be an indicator random variable for the event that the vertex $v_i$ is an outgoing $2$-star vertex. Both the probability for this event and the expected value of $X_i$ are $k/n$. Let $X=\sum_{1\leq i\leq s_{\ref{alg_3stern}.1}}X_i$; then, $\hat k=\frac{\epsilon}{12}n+\frac{n}{s_{\ref{alg_3stern}.1}}X$. By the linearity of expectation we get
$$\E\left[\hat k\right] =  \frac{\epsilon}{12}n+\frac{n}{s_{\ref{alg_3stern}.1}}\E[X] = \frac{\epsilon}{12}n+\frac{n}{s_{\ref{alg_3stern}.1}}\left( \sum_{1\leq i\leq s_{\ref{alg_3stern}.1}}\frac{k}{n} \right) = k + \frac{\epsilon}{12}n.$$
To bound the probability of $\hat k$ deviating from its expectation by more than $\frac{\epsilon}{12}n$ we again use an additive Chernoff Bound:
$$\Pr\left[\left| \hat k - \E[\hat k]\right| >\epsilon n/12 \right] = \Pr\left[\left| X - \E[X]\right| > \frac{\epsilon}{12} s_{\ref{alg_3stern}.1} \right]\leq 2e^{-\epsilon^2s_{\ref{alg_3stern}.1}^2/144}=2e^{-16}.$$
\end{proof}

We can now start with proving the correctness of {\sc Test3StarFreeness}. We divide the proof into four parts: The first part handles the case that the input graph $G$ (having $n$ vertices) is $3$-star free, the remaining three parts the case that $G$ ist $\epsilon$-far from $3$-star free. In this case we consider the following three subcases:
\begin{itemize}
	\item $G$ has more than $n+\frac{\epsilon}{8}n$ edges;
	\item $G$ has at most $n+\frac{\epsilon}{8}n$ edges and contains at least $\frac{\epsilon}{16D}n$ $3$-star vertices that have at least one outgoing edge;
	\item $G$ has at most $n+\frac{\epsilon}{8}n$ edges and contains less than $\frac{\epsilon}{16D}n$ $3$-star vertices that have at least one outgoing edge (and thus there are many $3$-star vertices that have only incoming edges).
\end{itemize} 

For every possible input one of the above cases holds. We start our analysis with the three subcases where we assume $G$ to be $\epsilon$-far from $3$-star free. In case $G$ has more than $n+\frac{\epsilon}{8}n$ edges, {\sc Test3StarFreeness} will reject in the first line with probability at least $2e^{-4}$; this follows directly from Lemma \ref{lemma_kantenzahl}:

\begin{lemma}
Let $G=(V,E)$ be a weakly connected directed graph with $n$ vertices and a bounded undirected vertex degree of $D\in\NN$. Let $\epsilon<1$ be a proximity parameter and assume that $G$ has $m> n+\frac{\epsilon}{8}n$ edges. Then, {\sc Test3StarFreeness}($n,G,\epsilon$) returns \emph{false} with a probability of at least $1-2e^{-4}$ \emph{false}.
\end{lemma}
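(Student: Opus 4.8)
The plan is to derive this statement as an immediate consequence of Lemma~\ref{lemma_kantenzahl}. Observe that the very first action of {\sc Test3StarFreeness}($n,G,\epsilon$) is to call {\sc EstimateEdgeCount}($n,G,\frac{\epsilon}{16}$) and to return \emph{false} whenever the returned value exceeds $n+\frac{\epsilon n}{16}$. So it suffices to show that, with probability at least $1-2e^{-4}$, the estimate produced in this line is strictly larger than $n+\frac{\epsilon n}{16}$; in that case the algorithm never reaches any of its later steps.

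First I would invoke Lemma~\ref{lemma_kantenzahl} with proximity parameter $\epsilon/16$ (which is admissible since $\epsilon/16<1$): it guarantees that, with probability at least $1-2e^{-4}$, the value $\hat m$ returned by {\sc EstimateEdgeCount}($n,G,\frac{\epsilon}{16}$) satisfies $m-\frac{\epsilon}{16}n\le \hat m\le m+\frac{\epsilon}{16}n$. Conditioning on this event and using the hypothesis $m> n+\frac{\epsilon}{8}n$, we obtain
$$\hat m\ \ge\ m-\frac{\epsilon}{16}n\ >\ n+\frac{\epsilon}{8}n-\frac{\epsilon}{16}n\ =\ n+\frac{\epsilon}{16}n,$$
so the condition tested in the first line of {\sc Test3StarFreeness} is met and the algorithm returns \emph{false}. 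Since this conclusion holds on the whole (sole) good event of Lemma~\ref{lemma_kantenzahl}, whose probability is at least $1-2e^{-4}$, the claimed bound follows, and the running time of this line is $\Ovon(D/\epsilon)$ as recorded there.

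There is essentially no obstacle here; the only point worth a moment's care is the bookkeeping of the slack parameters. The estimator is run with precision $\epsilon/16$, which is exactly half of the gap $\epsilon/8$ between $m$ and $n$ assumed in the hypothesis, so even a one-sided underestimate of size $\frac{\epsilon}{16}n$ still leaves $\hat m$ above the rejection threshold $n+\frac{\epsilon}{16}n$. (In the remaining three subcases one instead works with the complementary assumption $m\le n+\frac{\epsilon}{8}n$ together with Lemma~\ref{lemma_kanten_3sterne} and the balance estimates, but that analysis lies outside the scope of this lemma.)
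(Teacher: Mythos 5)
Your proof is correct and matches the paper exactly: the paper also derives this lemma directly from Lemma~\ref{lemma_kantenzahl} applied to the call {\sc EstimateEdgeCount}($n,G,\epsilon/16$) in the first line of the algorithm, with the same slack calculation showing $\hat m > n+\frac{\epsilon}{16}n$ under the hypothesis $m> n+\frac{\epsilon}{8}n$. Your write-up simply makes the arithmetic explicit, which the paper leaves implicit.
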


The next case we consider is that $G$ has at most $n+\frac{\epsilon}{8}n$ edges and contains at least $\frac{\epsilon}{16D}n$ $3$-star vertices that have an outgoing edge. Each of these $3$-star vertices is the central vertex of an occurence of a $3$-star orientation in $G$, that has at least one incoming edge; there are (not considering isomorphism) $3$ such orientations, and hence one of them has at least $\frac{\epsilon}{48D}n$ occurences in $G$.

Since deleting an edge from $G$ can at most remove two of these occurences, at least $\frac{\epsilon}{96D}n\geq \frac{\epsilon}{192D}n + 1$ edge modifications in $G$ are necessary to remove all the occurences of this orientation\footnote{$n\geq \frac{192D}{\epsilon}$ can be assumed since we have assumed $n=\omega\left(\frac{D^4}{\epsilon^2}\right)$ above.}, and thus $G$ is $\frac{\epsilon}{192D}$-far from freeness of this $3$-star orientation. Hence and by Lemma \ref{lemma_subgraphtestverstärkt}, the corresponding call to {\sc TestSubgraphFreenessAmplified} in the third line of {\sc Test3StarFreeness} returns \emph{false} with a probability of at least $5/6$ \emph{false}. These considerations yield the following Lemma:

\begin{lemma}\label{lemma_ausg3sterne}
Let $G=(V,E)$ be a directed graph with $n$ vertices and a bounded undirected vertex degree of $D\in\NN$. Let $\epsilon<1$ be a proximity parameter and assume that $G$ has $m\leq n+\frac{\epsilon}{8}n$ edges and contains at least $\frac{\epsilon}{16D}n$ $3$-star vertices that have at least one outgoing edge each. Then, {\sc Test3StarFreeness}($n,G,\epsilon$) returns \emph{false} with a probabilty of at least $5/6$.
\end{lemma}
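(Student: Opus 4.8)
The plan is to reduce the claim to the correctness guarantee of {\sc TestSubgraphFreenessAmplified} (Lemma~\ref{lemma_subgraphtestverst�rkt}). First I would observe that {\sc Test3StarFreeness}$(n,G,\epsilon)$ outputs \emph{false} as soon as any single one of its tests fails, and that the coins spent by {\sc EstimateEdgeCount} in the first line are independent of those spent in the second loop. Writing $A$ for the event that the first line already rejects and $B$ for the event that the second loop, if executed, would reject, independence gives $\Pr[\text{reject}]\ge\Pr[A]+\Pr[A^c]\cdot\Pr[B]\ge\Pr[B]$, so it suffices to prove that the second loop rejects with probability at least $5/6$; an early rejection in line one can only help. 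Since the second loop calls {\sc TestSubgraphFreenessAmplified}$(G,H,\tfrac{\epsilon}{192D},\tfrac16)$ once per $3$-star orientation $H$ with an outgoing edge, and rejects if \emph{any} of these calls returns \emph{false}, it is in turn enough to exhibit one such orientation $H^\ast$ for which that call returns \emph{false} with probability at least $5/6$.

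To find $H^\ast$ I would argue combinatorially. Up to isomorphism there are exactly three orientations of a $3$-star with at least one outgoing edge (one, two, or three of the edges leaving the centre). Each of the $\ge\frac{\epsilon}{16D}n$ vertices of undirected degree at least $3$ that possesses an outgoing edge is the central vertex of an occurrence of one of these three orientations — take any three of its incident edges, at least one of them outgoing. By pigeonhole a single orientation $H^\ast$ is realised, centred at a set of $\ge\frac{\epsilon}{48D}n$ pairwise distinct vertices; fixing one $H^\ast$-occurrence per such centre produces a family $\calF$ of $\ge\frac{\epsilon}{48D}n$ occurrences of $H^\ast$ with pairwise distinct central vertices. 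Now I would bound the farness: each occurrence in $\calF$ consists of three edges, all incident to its (distinct) centre, so a given edge of $G$ can belong to at most two members of $\calF$; since a single adjacency-list modification deletes at most one edge (and an inserted edge cannot destroy an existing occurrence), it destroys at most two members of $\calF$. Hence removing all occurrences of $H^\ast$ from $G$ requires more than $\frac{\epsilon}{192D}n$ modifications — the assumption that $n$ is large, made at the start of the section, absorbs the additive constant — which is precisely the level of farness anticipated by the parameter $\tfrac{\epsilon}{192D}$ in the algorithm, i.e.\ $G$ is $\tfrac{\epsilon}{192D}$-far from $H^\ast$-free.

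Putting the pieces together, Lemma~\ref{lemma_subgraphtestverst�rkt} applied to $H^\ast$ with proximity parameter $\tfrac{\epsilon}{192D}$ and success parameter $p=\tfrac16$ guarantees that the corresponding call in the second loop of {\sc Test3StarFreeness} returns \emph{false} with probability at least $1-\tfrac16=\tfrac56$; by the first step, {\sc Test3StarFreeness}$(n,G,\epsilon)$ therefore returns \emph{false} with probability at least $\tfrac56$. I expect the farness step to be the main obstacle: here the occurrences in $\calF$ are only distinct-centred rather than vertex-disjoint (we cannot get disjointness from the hypothesis), so the ``one modification destroys at most two occurrences'' bound has to be run through the edges, and one has to verify that the resulting lower bound on the number of required modifications really clears the threshold tied to the constant $\tfrac{\epsilon}{192D}$. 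Combining the error contributions of the algorithm's several sequential rejection points, by contrast, is routine once independence of the samples is invoked.
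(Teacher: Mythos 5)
Your proposal is correct and follows essentially the same route as the paper's own argument: pigeonhole over the three $3$-star orientations with an outgoing edge to get at least $\frac{\epsilon}{48D}n$ occurrences of one orientation $H^\ast$ with distinct centres, observe that each adjacency-list modification destroys at most two of them (so $G$ is $\frac{\epsilon}{192D}$-far from $H^\ast$-freeness, using the lower bound on $n$ to absorb the additive constant), and then invoke the guarantee of {\sc TestSubgraphFreenessAmplified} with $p=\frac16$. The only difference is cosmetic: you spell out explicitly that an earlier rejection (in the edge-count test or another orientation's call) can only help, which the paper leaves implicit.
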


For the case of the input graph $G$ being $\epsilon$-far from $3$-star free, it remains to show that {\sc Test3StarFreeness} rejects with high probability if $G$ contains less than $\frac{\epsilon}{16D}n$ $3$-stars that have outgoing edges.

\begin{figure}
	\centering
	\begin{tikzpicture}[thick]
		\node at (0,.4) {\Large{$v$}};	
		\node at (-.8,.4) {\Large{$e_1$}};	
		\node at (.25,-.65) {\Large{$e_2$}};	

		\foreach \i/\x/\y in {6/-3/1, 7/-2/-2, 8/0/-2, 11/3/-2, 12/3/-1, 13/4/0, 14/1/1}
			\node (p\i) [draw=black,fill=gray70, minimum size = 0.15cm,inner sep = 0cm,circle] at (\x,\y) {};

		\foreach \i/\x/\y in {5/-2/1, 4/-1/-1, 1/0/0}
			\node (p\i) [draw=black,fill=red, minimum size = 0.15cm,inner sep = 0cm,circle] at (\x,\y) {};

		\foreach \i/\x/\y in {2/1/-1, 3/-1/1, 10/3/0}
			\node (p\i) [draw=black,fill=blue!60, minimum size = 0.15cm,inner sep = 0cm,circle] at (\x,\y) {};

		\foreach \i/\x/\y in {9/2/-1}
			\node (p\i) [draw=black,fill=violet!70, minimum size = 0.15cm,inner sep = 0cm,circle] at (\x,\y) {};

		\foreach \i/\j in {1/4, 3/5, 6/5, 7/4, 4/8, 2/9, 10/9, 9/11, 9/12, 10/13, 14/1}
			\draw[->] (p\i) -- (p\j);

		\draw[->, dashed] (p3) -- (p1);		
		\draw[->, dashed] (p2) -- (p1);	
	\end{tikzpicture}

	\caption{Two-way collision at an incoming $3$-star vertex $v$; the sampled edges $e_1$ and $e_2$ are highlighted and the vertex colouring follows that of figure \ref{abb_balance}.}
\end{figure}
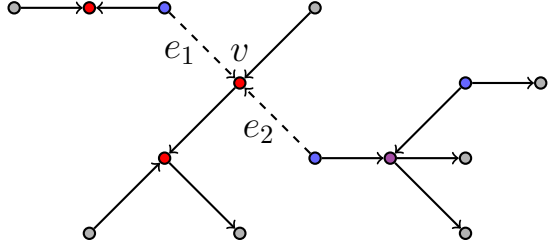

\begin{lemma}\label{lemma_eing3sterne}
Let $G=(V,E)$ be a weakly connected directed graph with $n$ vertices and a bounded undirected vertex degree of $D\in\NN$. Let $\epsilon<1$ be a proximity parameter and assume that $G$ is $\epsilon$-far from $3$-star free, has $m\leq n+\frac{\epsilon}{8}n$ edges and contains less than $\frac{\epsilon}{16D}n$ $3$-star vertices that have at least one outgoing edge each. Then, {\sc Test3StarFreeness}($n,G,\epsilon$) returns \emph{false} with a probabilty of at least $5/6$.
\end{lemma}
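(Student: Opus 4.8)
The plan is to show that in this regime the two estimators $\hat c$ and $\hat k$ computed by {\sc Test3StarFreeness} are, with high probability, far enough apart that the test $\hat r = \hat c/\hat k > 1 + \epsilon/24$ fires. The conceptual heart of the argument is the following chain: (i) since $G$ is $\epsilon$-far from $3$-star-free but has few outgoing $3$-star vertices, it must contain \emph{many} incoming $3$-star vertices, hence many incoming $2$-star vertices; (ii) by the balance Lemma~\ref{lemma_balanceallgemein} — applied with the edge bound $m \le n + \tfrac{\epsilon}{8}n$, so $\delta|V| \le \tfrac{\epsilon}{8}n$, and the bound $\sum_{v\in C_3}\deg(v) \le D|C_3|$ together with $|C_3| \le \tfrac{\epsilon}{16D}n$ coming only from outgoing $3$-stars... wait, here one must be careful: $C_3$ in Lemma~\ref{lemma_balanceallgemein} is \emph{all} $3$-star vertices, so instead the plan is to bound the number of incoming $2$-star vertices below and the number of outgoing $2$-star vertices above directly. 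The number of outgoing $2$-star vertices $k$ is small: every outgoing $2$-star vertex either is a $3$-star vertex (of which, in this case, there are few with outgoing edges — but an outgoing $2$-star vertex with exactly undirected degree $2$ and both edges outgoing has \emph{no} incoming edge, so it is consistent). The cleanest route is: use Lemma~\ref{lemma_balanceallgemein} to show $\#\{\text{incoming }2\text{-stars}\} \le \#\{\text{outgoing }2\text{-stars}\} + 2 + \tfrac{\epsilon}{8}n - 2|C_3| + \sum_{v\in C_3}\deg(v)$ is the \emph{wrong direction}; rather, bound $\#\{\text{outgoing }2\text{-stars}\} \ge \#\{\text{incoming }2\text{-stars}\} - \calB(G)$ and separately lower-bound $\#\{\text{incoming }2\text{-stars}\}$ by a constant times $\epsilon n/D$ using Lemma~\ref{lemma_subgraphcount} on the many incoming $3$-stars. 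Combined with an \emph{upper} bound on $\calB(G)$ and on $\sum_{v\in C_3}\deg(v)$ (where now $C_3$ must be split: the outgoing-edge $3$-star vertices contribute $\le D\cdot\tfrac{\epsilon}{16D}n = \tfrac{\epsilon}{16}n$, and the purely-incoming $3$-star vertices each have $\deg(v)\ge 3$ but also $2|C_3^{\text{in}}|$ is subtracted, giving net contribution $\sum_{v\in C_3^{\text{in}}}(\deg(v)-2) \le$ (number of incoming edges at such vertices beyond the first two)). Actually the key cancellation is that $-2|C_3| + \sum_{v \in C_3}\deg(v) = \sum_{v\in C_3}(\deg(v)-2)$, and for purely-incoming $3$-star vertices this counts "excess incoming edges," which also relates to $\hat c$ — so this term is benign because it shows up on \emph{both} sides.

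Next I would connect $\hat c$ to the incoming-$2$-star structure. The number of collisions on target vertices of sampled edges is $\sum_v \binom{\text{number of sampled incoming edges of }v}{2}$, and after dividing by $p^2$ its expectation is $\tfrac{1}{2}\sum_v \indeg(v)(\indeg(v)-1) = \tfrac{1}{2}\sum_v \indeg(v)^2 - \tfrac{1}{2}m$. The point is that $\sum_v \indeg(v)^2$ is sharply governed by the number of incoming $2$-star vertices and their excess degrees: for a vertex of indegree $d$, $d^2 = d + 2\binom{d}{2}$, so $\sum_v\binom{\indeg(v)}{2}$ counts exactly the (weighted) incoming $2$-star structure, and since undirected degrees are bounded by $D$, $\sum_v\binom{\indeg(v)}{2} \ge \#\{\text{incoming }2\text{-stars}\}$ and $\le \binom{D}{2}\#\{\text{incoming }2\text{-stars}\}$. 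Meanwhile $\hat k$ estimates the outgoing $2$-star count plus the deliberate additive slack $\tfrac{\epsilon}{12}n$. So the plan is: (a) show the true value $c^* := \tfrac{1}{2}\sum_v\indeg(v)(\indeg(v)-1)$ satisfies $c^* \ge \tfrac{3}{2}\epsilon n$ comfortably (else $G$ could be made $3$-star-free cheaply, contradicting $\epsilon$-farness, via Lemma~\ref{lemma_subgraphcount} applied to incoming $3$-stars — each such copy contributes $\ge 3$ to $c^*$), so the early-exit test $\hat c < \tfrac{3}{2}\epsilon n$ does not fire; (b) show $c^* > (1 + \tfrac{\epsilon}{12})\,k^*$ where $k^*$ is the outgoing $2$-star count — i.e., $c^* - k^* \ge \tfrac{\epsilon}{12}k^* + (\text{something }\Omega(\epsilon n))$ — using that $c^* \ge k^* + \#\{\text{incoming }2\text{-stars}\} - \#\{\text{outgoing }2\text{-stars, weighted}\} \ge k^* + \Omega(\epsilon n/D)$ from the balance bound and the abundance of incoming $3$-stars; (c) then combine with concentration.

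For concentration I would: bound $|\hat k - (k^* + \tfrac{\epsilon}{12}n)| \le \tfrac{\epsilon}{12}n$ with probability $\ge 1 - 2e^{-16}$ by Lemma~\ref{lemma_aus2sternzahl}; bound the probability the edge-sample limit $s_{\ref{alg_3stern}.2}$ is exceeded (so that "return false" there is harmless — note it returns \emph{false}, which is the desired answer here anyway, so this event only helps); and show $\hat c$ concentrates around $c^*$. The last is the main technical obstacle: $\hat c = \tfrac{1}{p^2}\sum_v \binom{Y_v}{2}$ where $Y_v \sim \Bi(\indeg(v), p)$ and $p = \tfrac{128D}{\epsilon^{3/2}\sqrt n}$, so $\E[\hat c] = c^*$ but the variance must be controlled; because indegrees are $\le D = O(1)$ the contribution of each vertex is $O(p^2)$ in expectation with bounded higher moments, and the number of "active" target vertices with sampled in-edges is $O(pm) = O(\sqrt n \cdot \text{poly}(1/\epsilon, D))$, so a variance computation (pairwise independence of the $Y_v$ across distinct $v$, plus $\Var(\binom{Y_v}{2}) = O(p^2)$ when $p\indeg(v) = o(1)$) gives $\Var(\hat c) = O(n/p^2 \cdot p^2) = $ hmm — one wants $\Var(\hat c) = o((\epsilon n)^2)$; since $\Var(\tfrac{1}{p^2}\binom{Y_v}{2}) = \tfrac{1}{p^4}O(p^2\cdot(p\indeg(v))^{\,2}) = O(\indeg(v)^2/p^2) \cdot O(p^2\indeg(v)^2)$... the clean bound is $\Var(\hat c) \le \tfrac{1}{p^4}\sum_v \E[\binom{Y_v}{2}^2]$ and $\E[\binom{Y_v}{2}^2] \le \binom{D}{2} p^2 \E[\binom{Y_v}{2}]$ roughly, giving $\Var(\hat c) = O(p^{-2} c^*) = O(\tfrac{\epsilon^{3/2}\sqrt n}{D}\cdot c^*)$, and since $c^* = O(D^2 n)$ this is $O(\epsilon^{3/2} D\, n^{3/2}) = o((\epsilon n)^2)$ precisely because $n = \omega(D^4/\epsilon^3)$ — this is exactly where that assumption is spent. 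Chebyshev then yields $|\hat c - c^*| \le \tfrac{\epsilon}{100}n$ with probability $\ge 5/6$ minus the other error terms. A union bound over the $O(1)$ bad events (edge-count overflow, $\hat k$ deviation, $\hat c$ deviation) leaves total failure probability below $1/6$; on the good event, $\hat c \ge \tfrac{3}{2}\epsilon n$ so no early accept, and $\hat r = \hat c/\hat k \ge (c^* - \tfrac{\epsilon}{100}n)/(k^* + \tfrac{\epsilon}{6}n) > 1 + \tfrac{\epsilon}{24}$ by the gap from step (b), so the algorithm returns \emph{false}. The delicate bookkeeping — matching the constants $\tfrac{\epsilon}{12}, \tfrac{\epsilon}{16}, \tfrac{\epsilon}{24}, \tfrac{\epsilon}{96D}, \tfrac{128D}{\epsilon^{3/2}}$ so that the gap survives all the slacks — will be the fussiest part, but the structural input (balance bound $\Rightarrow$ outgoing $2$-stars cannot keep up with incoming $2$-stars once there are $\Omega(\epsilon n/D)$ incoming $3$-stars) is what makes it go through.
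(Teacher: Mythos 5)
Your plan follows essentially the same route as the paper's proof: use $\epsilon$-farness to get at least $\epsilon n$ many $3$-star vertices, almost all of which are purely incoming; bound the number of outgoing $2$-star vertices via the balance lemma by the number of incoming $2$-star vertices plus $\sum_{v}(\deg(v)-2)+O(\epsilon n)$; exploit that each purely incoming $3$-star vertex contributes at least $3(\deg(v)-2)$ to the rescaled collision count (the quadratic boost), so that the true value behind $\hat c$ exceeds the true value behind $\hat k$ by $\Omega(\epsilon n)$; then concentrate both estimators, rule out the early exit $\hat c<\tfrac{3}{2}\epsilon n$, and conclude $\hat r>1+\tfrac{\epsilon}{24}$. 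The paper does exactly this, only with per-vertex ``at least two sampled in-edges'' indicators, multiplicative Chernoff bounds, and a case split on whether $c_2$ is large or small, whereas you count sampled pairs and invoke Chebyshev; that variation is legitimate.

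Two quantitative steps in your write-up do need repair. First, the structural gap must be $\Omega(\epsilon n)$, not the $\Omega(\epsilon n/D)$ you state at one point: since farness is measured against $\epsilon Dn$ adjacency-list modifications and each vertex has at most $D$ incident edges, the number of $3$-star vertices is at least $\epsilon n$ (the $D$'s cancel), and this matters, because a gap of only $\epsilon n/D$ would, for large $D$, be swallowed by the deliberate $+\tfrac{\epsilon}{12}n$ slack in $\hat k$ (which costs $\tfrac{\epsilon}{6}n$ after Lemma~\ref{lemma_aus2sternzahl}) together with the $\tfrac{\epsilon}{24}k^*\le\tfrac{\epsilon}{24}n$ term in the threshold; your later claim $c^*>(1+\tfrac{\epsilon}{12})k^*$ implicitly uses the correct $\Omega(\epsilon n)$ bound, so this is a slip rather than a dead end. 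Second, the variance argument is not right as written: with $\hat c=p^{-2}\sum_v\binom{Y_v}{2}$ the clean bound is $\Var(\hat c)\le p^{-4}\sum_v\binom{D}{2}\E\bigl[\binom{Y_v}{2}\bigr]=\binom{D}{2}c^*/p^2=\Theta\bigl(D\epsilon^3n^2/128^2\bigr)$ (your intermediate expressions are off by powers of $p$), and its ratio to $(\epsilon n)^2$ is $\Theta(D\epsilon/128^2)$, which is independent of $n$ --- so the assumption $n=\omega(D^4/\epsilon^3)$ is not what makes Chebyshev work; what makes it work is the large constant $128$ in $p$ (and $D\epsilon$ not being huge), or, more robustly, a Chernoff bound on the bounded independent summands as in the paper, whose exponent even improves with $D$. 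With these two corrections your argument goes through and coincides in substance with the paper's proof.
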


\begin{proof}
We will show that under the given assumptions $\hat r> 1+\frac{\epsilon}{24}$ holds in the first but las line of {\sc Test3StarFreeness} with a probability of at least $1-e^{-6}$; hence the algorithm rejects the input with high probability. The basic idea of this proof is the following: By Lemma \ref{lemma_balanceallgemein}, the maximum Balance of $G$ has a linear dependence on the number of its edges that are incident to $3$-star vertices; the probability of sampling at least two of the incoming edges of a $3$-star vertex, however, grows quadratically in the number of incoming edges that this vertex has. Since there are (as we will show) at least $\epsilon n$ $3$-star vertices in $G$ and almost all of them do not have outgoing edges, there are nearly $\epsilon n$ $3$-star vertices that have at least $3$ incoming edges each. Hence the probability of sampling a collision at one of them is relatively large: Counting the number of collisions and grossing up overestimates the number of outgoing $2$-star vertices (to which the outgoing $3$-star vertices belong) by more than the outgoing $2$-star vertices can outnumber the incoming $2$-star vertices due to the balance of $G$. Thus, the expected value of $\hat c$ will be much larger than that of $\hat k$ and the expected value of $\hat c$ will be much larger than $1$ hence.

We will need some additional definitions to concretize this idea. At fist, let $C_2$ be the set of incoming $2$-stars vertices that have exactly $2$ incoming edges and let $C_3$ be the set of incoming $3$-stars in $G$. We denote by $C_3'\subseteq C_3$ the set of incoming $3$-stars that do not have an outgoing edge and by $\tilde C_3\supseteq C_3$ the set of all $3$-star vertices in $G$. For the cardinalities of these sets we define  $c_2:=|C_2|$, $c_3:=|C_3|$, $c_3':=|C_3'|$ and $\tilde c_3:=|\tilde C_3|$. Then it holds $C:=C_2+C_3$ for the set of all incoming $2$-star vertices $C$ and $c:=c_2+c_3$ for its cardinality. Analogously to the last equation, we also say $\hat c = \hat c_2 + \hat c_3$, where $\hat c_2$ denotes the contribution of collisions on vertices in $C_2$ and $\hat c_3$ that of collisions on vertices in $C_3$.

$G$ contains at least $\epsilon n$ $3$-star vertices: If the number of $3$-star vertices was smaller, there would be less than $\epsilon Dn$ edges adjacent to them, and deleting all of these edges would remove every $3$-star vertex from $G$; this would be a contradiction to the assumption that $G$ is $\epsilon$-far from $3$-star free.

Since by the assumption of the lemma less than $\frac{\epsilon}{16D}n$ of these $3$-star vertices have incoming edges and $D\geq 3$, there are at least $c_3'>\left(1-\frac{1}{16D}\right)\epsilon n\geq \frac{47}{48}\epsilon n\label{eq_c3strichnachn}$ $3$-star vertices without outgoing edges. Since $C_3'$ is a subset of  $C_3$, it also holds \begin{equation}c_3\geq c_3'> \left(1-\frac{1}{16D}\right)\epsilon n\geq \frac{47}{48}\epsilon n.\label{eq_c3nachn}\end{equation}

We define $$l:=\sum_{v\in C_3}(\deg(v)-2),$$ and in the following we will derive an upper bound for $\hat k$ and a lower bound for $\hat c$; both bounds will relate on $l$, which will help us bounding $\hat r = \frac{\hat c}{\hat k}$ from below. Furthermore, it holds \begin{equation}l\geq c_3,\label{eq_c3nachl},\end{equation} since every vertex in $C_3$ has an undirected degree of at least $3$.

We start by proving an upper bound for $\hat k$. Since $G$ has at most $n+\frac{\epsilon}{8}n$ edges by the assumption of the Lemma, the balance of $G$ is at most
\begin{align*}
	\calB(G) & \leq \frac{\epsilon}{8}n + 2 + \sum_{v\in\tilde C_3}(\deg(v)-2)\\
			& = \frac{\epsilon}{8}n + 2 + \sum_{v\in C_3}(\deg(v)-2) + \sum_{v\in\tilde C_3\backslash C_3}(\deg(v)-2) \\
			& < \frac{\epsilon}{8}n + l + \frac{\epsilon}{16}n < l + \frac{\epsilon}{4}n,
\end{align*}
by Lemma \ref{lemma_balanceallgemein} and because $\tilde C_3 \backslash C_3$ only contains $3$-star vertices that have at least one outgoing edge; there are at most $\frac{\epsilon}{16D}n-1$ such $3$-star vertices by the assumption of the Lemma, and thus there are at most $D\left(\frac{\epsilon}{16D}n-1\right)< \frac{\epsilon}{16}n-2$ edges that are incident to one of them. 

Since $c$ is the number of incoming $2$-star vertices and the balance of $G$ is at most $l+\frac{\epsilon}{4}n$, the number of outgoing $2$-star vertices in $G$ is ar most $c+l+\frac{\epsilon}{4}n$. By Lemma \ref{lemma_aus2sternzahl} we get that, with probability at least $1-2e^{-16}$, it holds
$$\hat k\leq c+l+\frac{\epsilon}{4}n+\frac{\epsilon}{6}n< c_2 + c_3 + l + \frac{1}{2}c_3 \leq c_2 + \frac{5}{2}l;$$
the second inequality follows from the inequality $\frac{5}{12}\epsilon n< \frac{5}{12}\cdot\frac{48}{47}c_3< \frac{1}{2}c_3$, that we get from \eqref{eq_c3nachn}.

We will now bound $\hat c$ from above. Let $X_v$ be an indicator random variable for the event that for a vertex $v$ at least $2$ incoming edges are contained in the edge sample of {\sc Test3StarFreeness}, i.e., we have a ($2$-way) collision at $v$. Let $X^{(2)}=\sum_{v\in C_2}X_v$ und $X^{(3)}=\sum_{v\in C_3}X_v$. It holds
$$\E[X_v]=\Pr[X_v=1]=p^2(1-p)^{\indeg(v)-2}{\indeg(v) \choose 2}\geq \alpha p^2{\indeg(v) \choose 2}$$
for $\alpha:=\frac{47}{48}$ due to the minimum number of vertices we have assumed for $G$ and by the Bernoulli Inequality: $(1-p)^{\indeg(v)-2}\geq (1-p)^D\geq 1-pD =1- \frac{D^2}{\epsilon^{3/2}\sqrt{n}} \geq \frac{47}{48}$ for $n\geq \frac{2304D^4}{\epsilon^3}$.

Hence, for the contribution of collisions on incoming $2$-star vertices to $\hat c$,
\begin{align*}
	\E[\hat c_3] & = \frac{1}{p^2}\E\left[X^{(3)}\right] = \frac{1}{p^2}\sum_{v\in C_3}\E[X_v] \geq \alpha\sum_{v\in C_3}{\indeg(v) \choose 2}\geq \alpha\sum_{v\in C_3'}{\deg(v) \choose 2} \\
			& = \frac{\alpha}{2}\sum_{v\in C_3'}\deg(v)^2-\deg(v) \geq 3\alpha\sum_{v\in C_3'}\deg(v)-2 
\end{align*}
due to the linearity of expectation and since for every $3$-star vertex it holds $\deg(v)^2-\deg(v)\geq 6\cdot (\deg(v)-2)$. We want to relate the above sum to $l$, but in this sum only the vertices in $C_3'$ are considered. However, by the assumption of the Lemma there are at most $\frac{1}{16D}\epsilon n$ vertices in $C_3\backslash C_3'$ -- all of these vertices are $3$-star vertices that have at least one outgoing edge. The sum over the degrees of these vertices is at most $\frac{1}{16}\epsilon n\leq \frac{3}{47}l$ since it holds \begin{equation}l\geq c_3\geq \frac{47}{48}\epsilon n\label{eq_lnachn}\end{equation} by the inequalities \eqref{eq_c3nachl} and \eqref{eq_c3nachn}. Hence we can conclude

$$\sum_{v\in C_3'}\deg(v)-2 > \sum_{v\in C_3}\deg(v) - 2 ~~ - \sum_{v\in C_3\backslash C_3'}\deg(v) \geq l - \frac{3}{47}l = \frac{44}{47}l.$$

We can use this in the above bound for $\E[\hat c_3]$ and get $\E[\hat c_3] > 3\alpha l = \frac{11}{4}l\geq \frac{11}{4}\cdot\frac{47}{48}\epsilon n$ since $\alpha=\frac{47}{48}$ and due to inequality \eqref{eq_lnachn}.
Now we can apply a multiplicative Chernoff Bound, which yields an upper bound for the probability that $\hat c_3$ is smaller than this value by a factor of more than $(1-\epsilon/44)$:
\begin{align*}
	\Pr\left[\hat c_3<(1-\epsilon/44)\E[\hat c_3]\right] & = \Pr\left[X^{(3)}<(1-\epsilon/44)p^2\E\left[\hat c_3\right]\right]\leq \exp\left(-\frac{\epsilon^2p^2}{2\cdot 44^2}\E[\hat c_3]\right)\\&
	\leq \exp\left(-\frac{128^2 \epsilon^2D^2}{2\cdot 44^2\epsilon^3 n}\cdot\frac{11}{4}\cdot\frac{47}{48}\epsilon n\right) \leq \exp\left(-\frac{8\cdot 47 D^2}{33} \right)< e^{-100}.
\end{align*}
Analogously we can conclude $\Pr\left[\hat c_3<(1-\epsilon/24)\E[\hat c_3]\right]<e^{-600}$. In both cases we use the fact $D\geq 3$. 

For estimating $\hat c$ it remains to calculate the expected number of collisions on vertices in $C_2$. Since these vertices have exactly $2$ incoming edges, it holds
$$\E[\hat c_2] = \frac{1}{p^2}\sum_{v\in C_2}\E[X_v] = \frac{1}{p^2}\sum_{v\in C_2} p^2 = c_2.$$

Now we distinguish between two cases, $c_2\geq \frac{1}{16}l$ and $c_2< \frac{1}{16}l$. In the former case, $c_2$ is relatively large, and we can use a Chernoff Bound to show that $\hat c_2$ is a good estimate for $c_2$; then we can argue that the vertices in $C_3$ form at least roughly an $\epsilon$-fraction of the vertices in $C$ and thus $\hat c = \hat c_2 + \hat c_3$ is significantly larger than $c$. 

In the latter case, i.e., if $c_2$ is small, we will simply neglect the contribution of the vertices in $C_2$ to $\hat c$.

We start with the first case, i.e., we have $c_2\geq \frac{1}{16}l\geq \frac{47}{16\cdot 48}\epsilon n$ by inequality \eqref{eq_lnachn}, and by applying a multiplicative Chernoff Bound we conclude
\begin{align*}
	\Pr\left[\hat c_2<(1-\epsilon/24)\E[\hat c_2]\right] & = \Pr\left[X^{(2)}<(1-\epsilon/24)\E\left[X^{(2)}\right]\right]\leq \exp\left(-\frac{\epsilon^2p^2}{2\cdot 24^2}c_2\right)\\&
	\leq \exp\left(-\frac{128^2 \epsilon^2D^2}{2\cdot 24^2\epsilon^3 n}\cdot\frac{47\epsilon n}{16\cdot 48}\right) < e^{-7}.
\end{align*}

By adding the lower bounds for $\hat c_3$ and $\hat c_2$, we gain a lower bound for $\hat c$, i.e., with high probability it holds
\begin{align*}
	\hat r &= \frac{\hat c}{\hat k} = \frac{\hat c_2 + \hat c_3}{\hat k} \geq \frac{\left(1-\frac{\epsilon}{24}\right)\left(c_2+\frac{11}{4}c_3\right)}{c_2 + \frac{5}{2}c_3} = \left( 1 - \frac{\epsilon}{24} \right) \left( 1 + \frac{\frac{1}{4}c_3}{c_2+\frac{5}{2}c_3}  \right)\\
	& = \left( 1 - \frac{\epsilon}{24} \right) \left( 1 + \frac{1}{10}\left(\frac{\frac{5}{2}c_3}{c_2+\frac{5}{2}c_3} \right) \right).
\end{align*}
Furthermore we have $c_2\leq n-\left( 1-\frac{1}{16D}\right)\epsilon n$, since $G$ contains at least $\left( 1-\frac{1}{16D}\right)\epsilon n$ $3$-star vertices that do not have an incoming edge, and $c_3\geq \left( 1-\frac{1}{16D}\right)\epsilon n$ by inequality \eqref{eq_c3nachn} and since every such vertex is contained in $C_3$. We are allowed to insert this lower bound for $c_3$ into the fraction $\frac{5c_3/2}{c_2+5c_3/2}$ since the fraction is smaller than $1$ and by inserting both nominator and denominator are decreased by the same value. Hence we can conclude
\begin{align*}
	\hat r &\geq \left( 1 - \frac{\epsilon}{24} \right) \left( 1 + \frac{1}{10}\left(\frac{\frac{5}{2}\left( 1-\frac{1}{16D}\right)\epsilon n}{n - \left( 1-\frac{1}{16D}\right)\epsilon n + \frac{5}{2}\left( 1-\frac{1}{16D}\right)\epsilon n} \right) \right) \\
	& = \left( 1 - \frac{\epsilon}{24} \right) \left( 1 + \frac{\frac{1}{4}\epsilon}{\frac{1}{1-\frac{1}{16D}} + \frac{3}{2}\epsilon} \right).
\end{align*}
Since $1-\frac{1}{16D}\geq \frac{47}{48}$ and $\epsilon<1$ -- from which in particular follows $\epsilon>\epsilon^2$ --, we finally get
\begin{align*}
	\hat r &\geq  \left( 1 - \frac{\epsilon}{24} \right) \left( 1 + \frac{\frac{1}{4}\epsilon}{\frac{48}{47} + \frac{3}{2}\epsilon} \right) 
	\geq \left( 1 - \frac{\epsilon}{24} \right) \left( 1 + \frac{\epsilon}{\frac{192}{47} + 6} \right) = \left( 1 - \frac{\epsilon}{24} \right) \left( 1 + \frac{47\epsilon}{474} \right)\\
	& = 1 + \frac{(47\cdot 24 - 474)\epsilon - 47\epsilon^2}{24\cdot 474}\geq 1+\frac{607}{24\cdot 474}\epsilon > 1+\frac{1}{24}\epsilon.
\end{align*}
The probability that one of the bounds for $\hat c_2$ and $\hat c_3 $ that we have assumed here does not hold hold is, by the Union Bound, at most $e^{-600}+e^{-7}< e^{-6}$.

We will now consider the case that there are only very few incoming $2$-stars that are not incoming $3$-stars, i.e., $c_2<\frac{1}{16}l$. Since, by inequality \eqref{eq_c3nachl}, it holds $c_3\leq l$, we can conclude
\begin{align*}
	\hat r &= \frac{\hat c}{\hat k} \geq \frac{\hat c_3}{\hat k} \geq \frac{\left(1-\frac{\epsilon}{44}\right)\cdot \frac{11}{4}l}{c_2+\frac{5}{2}c_3} = \left(1-\frac{\epsilon}{44}\right)\cdot \frac{\frac{11}{4}l}{\frac{1}{16}l+\frac{5}{2}l} = \left(1-\frac{\epsilon}{44}\right)\cdot \frac{44}{41}\\
	&= \frac{44-\epsilon}{41} \geq \frac{41+2\epsilon}{41} > 1 + \frac{1}{24}\epsilon,
\end{align*}
with a probability of at least $e^{-100}<e^{-6}$.

We have by now proven that $\hat r>1+\frac{1}{24}\epsilon$ holds with a probability of $1-e^{-6}$ if the assumptions of the lemma are fulfilled and the algorithm {\sc Test3StarFreeness} gets to the first but last line; hence, the algorithm rejects the input with at least the above probability in this case. It remains to show that {\sc Test3StarFreeness} does not return true in the second but last line -- this would happen in case $\hat c\leq \frac{3}{2}\epsilon n$. We have, however, assumed that $\hat c_3\geq \left(1-\frac{\epsilon}{24}\right)\frac{11}{4}l$ holds, and by inequality \eqref{eq_lnachn} we can conclude $$\hat c\geq \hat c_3 \geq \left(1-\frac{\epsilon}{24}\right)\frac{11}{4}l \geq \frac{47}{48}\cdot\frac{11}{4}\epsilon n\geq 2\epsilon n$$ from this. Thus {\sc Test3StarFreeness} does not return true in the second but last line if $\hat r$ is estimated correctly.

Hence the probability that {\sc Test3StarFreeness} correctly returns \emph{false} is at least $1-e^{-6}>\frac{5}{6}$.
\end{proof}

It remains to show that {\sc Test3StarFreeness} works correctly if the input graph is $3$-star free.

\begin{lemma}\label{lemma_keine3sterne}
Let $G=(V,E)$ be a weakly connected directed $3$-star free graph with $n$ vertices and a bounded undirected vertex degree of $D\in\NN$. Let $\epsilon<1$ be a proximity parameter. Then, {\sc Test3StarFreeness}($n,G,\epsilon$) returns \emph{true} with a probabilty of at least $3/4$.
\end{lemma}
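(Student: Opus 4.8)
The plan is to bound the probability that {\sc Test3StarFreeness}$(n,G,\epsilon)$ returns \emph{false} on a $3$-star-free input by at most $1/4$. There are only four places where the algorithm can reject, and three of them are straightforward. First, since $G$ is $3$-star-free and has no double edges, every vertex has undirected degree at most $2$, so $2m=\sum_v\deg(v)\le 2n$ and hence $m\le n$; by Lemma~\ref{lemma_kantenzahl} (applied with proximity parameter $\epsilon/16$) the value returned by {\sc EstimateEdgeCount} is then at most $m+\tfrac{\epsilon}{16}n\le n+\tfrac{\epsilon}{16}n$ except with probability $2e^{-4}$, so the first line does not reject. Second, $G$ is $H$-free for every $3$-star orientation $H$, and {\sc TestSubgraphFreenessAmplified} inherits the one-sided error of {\sc TestSubgraphFreeness} (Theorem~\ref{theorem_subgraphtest}), so none of the calls in the \textbf{foreach}-loop can reject. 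Third, the expected number of sampled edges is $mp\le np=\tfrac{1}{16}s_{\ref{alg_3stern}.2}$, so by Markov's inequality the sample-size abort after the edge sampling happens with probability at most $1/16$; conditioned on its not occurring we may treat every edge as sampled independently with probability $p$.

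It remains to analyse the test $\hat r>1+\tfrac{\epsilon}{24}$ in the last line. The structural fact I would use is that a weakly connected, double-edge-free, $3$-star-free graph has all undirected degrees at most $2$, so its underlying undirected graph is a simple path or a simple cycle; by Lemmas~\ref{lemma_kreisbalance} and~\ref{lemma_linienbalance} this gives $\calB(G)\le 1$. Writing $k$ for the number of outgoing $2$-star vertices and $c$ for the number of incoming $2$-star vertices, we thus have $k\ge c-1$, and moreover every incoming $2$-star vertex has \emph{exactly} two incoming edges. Hence $p^2\hat c$ equals the number of vertices both of whose two incoming edges were sampled, which is a sum of independent indicators (the in-edge pairs of distinct such vertices are disjoint) with mean $cp^2$, so $\E[\hat c]=c$. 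Also, by Lemma~\ref{lemma_aus2sternzahl}, $\hat k\ge k\ge c-1$ except with probability $2e^{-16}$.

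I would then split on the size of $c$. If $c\le\epsilon n$, then reaching the last line at all requires $\hat c\ge\tfrac32\epsilon n$, i.e.\ $p^2\hat c$ must exceed its mean $cp^2\le\tfrac{128^2D^2}{\epsilon^2}$ by an additive term of order $\Omega(D^2/\epsilon^2)$; a Chernoff bound makes this probability $e^{-\Omega(D^2/\epsilon^2)}$, so the algorithm returns \emph{true} in the line ``\textbf{if} $\hat c<\tfrac32\epsilon n$'' with overwhelming probability. If $c>\epsilon n$, then $\E[p^2\hat c]=cp^2>\tfrac{128^2D^2}{\epsilon^2}$ is large, so a multiplicative Chernoff bound with deviation $\epsilon/48<1$ gives $\hat c\le(1+\tfrac{\epsilon}{48})c$ except with probability $\exp\!\big(-\tfrac{\epsilon^2}{3\cdot 48^2}\,cp^2\big)\le\exp\!\big(-\tfrac{128^2D^2}{3\cdot 48^2}\big)\le e^{-21}$; since $n$ is large enough (by the standing assumption $n=\omega(D^4/\epsilon^3)$) that $c-1\ge(1-\tfrac{\epsilon}{100})c$, this yields $\hat r=\hat c/\hat k\le\tfrac{1+\epsilon/48}{1-\epsilon/100}\le 1+\tfrac{\epsilon}{24}$, so the test does not reject. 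The one genuinely delicate point is this case split: the multiplicative concentration of $\hat c$ around $c$ degrades as $c\to 0$, so small $c$ cannot be handled through $\hat r$ and must instead be absorbed by the explicit cutoff $\tfrac32\epsilon n$ in the second-to-last line — which is exactly why that line is there and why the cutoff $\epsilon n$ for the case split is matched to it. The remaining work is purely arithmetic verification of the constants $1+\epsilon/48$, $1-\epsilon/100$, $1+\epsilon/24$, using $\epsilon<1$ and $D\ge 3$.

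Finally I would union-bound the failure probabilities: $2e^{-4}$ for the edge-count test, $1/16$ for the sample-size abort, $2e^{-16}$ for the event $\hat k<k$, at most $e^{-21}$ for the last line rejecting when $c>\epsilon n$ and $\hat k\ge k$, and $e^{-\Omega(D^2/\epsilon^2)}$ for $\hat c$ overshooting when $c\le\epsilon n$. The total is well below $1/4$, so {\sc Test3StarFreeness}$(n,G,\epsilon)$ returns \emph{true} with probability at least $3/4$, which proves the lemma.
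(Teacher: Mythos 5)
Your proof is correct and follows essentially the same route as the paper's: the same decomposition into rejection points, the same use of Lemmas \ref{lemma_kantenzahl} and \ref{lemma_aus2sternzahl}, the balance bound $\calB(G)\leq 1$ giving $\hat k\geq c-1$, the identity $\E[\hat c]=c$, and a case split on $c$ with a Chernoff bound in the large-$c$ case. The only deviation is that for small $c$ you use a Chernoff-type tail bound (with cutoff $\epsilon n$) where the paper applies Markov's inequality with cutoff $\epsilon n/4$, which if anything gives you a more comfortable final union bound; this is a cosmetic difference, not a different approach.
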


\begin{proof}
There are three possibilities for {\sc Test3StarFreeness} inadvertantly rejecting the input: By considerably overestimating the edge count in the first line, by drawing an edge sample of a size of more than $\frac{2048D\sqrt{n}}{\epsilon^{3/2}}$ and by computing a value $\hat r>1+\frac{\epsilon}{24}$. Since {\sc TestSubgraphFreenessAmplified} has an one-sided error and $G$ is $3$-star free, the subgraph freeness tests in the third line will always return true.

Since $G$ is weakly connected and does not contain any $3$-star vertex, $G$ contains at most $n$ edges (note that we still assume that there are no double edges in $G$). By Lemma \ref{lemma_kantenzahl} the probability that {\sc EstimateEdgeCount}($n,G,\epsilon/16$) returns a value larger than $n+\frac{\epsilon}{16}n$ is at most $2e^{-4}$; hence {\sc Test3StarFreeness} will not return false in the first line with a probability of at least $1-2e^{-4}$.

The expected number of edges that are drawn in the eigth line of {\sc Test3StarFreeness} is $pn=\frac{128D\sqrt{n}}{\epsilon^{3/2}}$; thus, by Markov's Inequality, the number of edges drawn will not be larger than $\frac{2048D\sqrt{n}}{\epsilon^{3/2}}=16pn$ with a probability of at least $\frac{15}{16}$. Hence {\sc Test3StarFreeness} will not return false in the eigth line with at least this probability.

It remains to show that the estimator $\hat e$ does not exceed $1+\frac{\epsilon}{24}$ with high probability. Let $k$ be the number of outgoing $2$-star vertices in $G$ and let $c$ be the number of incoming $2$-star vertices. By Lemma \ref{lemma_aus2sternzahl} it holds $\hat k\geq k$ with a probability of at least $1-2e^{-16}$, and, since $G$ is an orientation of a circle or a line-shaped graph, the balance of $G$ is at most $1$ by lemma \ref{lemma_kreisbalance} respectively by Lemma \ref{lemma_baumbalance}. Thus, we have $\hat k\geq c-1$ with a probability of at least $1-2e^{-16}$.

Now let $C_2$ be the set of incoming $2$-star vertices; let, for $v\in C_2$, $X_v$ be an indicator random variable for the event that both incoming edges of $v$ are in the edge sample of {\sc Test3StarFreeness}. Let $X:=\sum_{v\in C_2}X_v$. Es gilt $\E[X_v]=\Pr[X_v=1]=p^2$. For the expected value of $\hat c$ we conclude
$$\E[\hat c] = \frac{1}{p^2}X = \frac{1}{p^2}\sum_{v\in C_2}\E[X_v]=|C_2| = c. $$
Now at first assume $c\geq \frac{\epsilon}{4} n$. Then, by a multiplicative Chernoff Bound
\begin{align*}
	\Pr\left[\hat c\geq \left(1+\frac{\epsilon}{32}\right)c\right]&=\Pr\left[X\geq \left(1+\frac{\epsilon}{32}\right)E[X]\right]\leq \exp\left(-\frac{\epsilon^2 p^2c}{3\cdot 1024}\right)\\
	& = \exp\left(-\frac{128^2\epsilon^3D^2n}{3\cdot 4\cdot 1024\epsilon^3n}\right) = e^{-2D^2/3}\leq e^{-6}.
\end{align*}
Thus, by the Union Bound, with at least a probability of $1-2e^{-16}-e^{-6}>\frac{5}{6}$:
$$\hat r\leq\frac{\left(1+\frac{\epsilon}{32}\right)c}{c-1} = \left(1+\frac{\epsilon}{24}\right) + \frac{\left(1+\frac{\epsilon}{24}\right) - \frac{\epsilon}{96}c}{c-1} \leq \left(1+\frac{\epsilon}{24}\right) + \frac{\frac{25}{24} - \frac{\epsilon}{96\cdot 4}\epsilon n}{c-1},$$
and the last fraction is smaller or equal zero if $n\geq\frac{400}{\epsilon^2}$; however, this holds because of the minimum number of vertices of $G$ we have assumed above. Hence it holds $\hat r\leq 1+\frac{\epsilon}{24}$ with a probability of at least $\frac{5}{6}$.

Finally consider the second case for $c$, $c<\frac{\epsilon}{4}$. By Markov's Inequality we have $\Pr\left[\hat c>\frac{3}{2}\epsilon n\right]=\Pr\left[\hat c>6\E[c]\right]\leq \frac{1}{6}$, and hence {\sc Test3StarFreeness} returns true in the second but last line with a probability of at least $\frac{5}{6}$.

Thus, the probability that {\sc Test3StarFreeness}($n,G,\epsilon$) inadvertantly returns \emph{false} is at most $2e^{-4}+\frac{1}{16}+\frac{1}{6}<\frac{1}{4}$ by the Union-Bound.
\end{proof}

The correctness of {\sc Teste\-3\-Stern\-Frei\-heit} directly follows from Lemmas \ref{lemma_ausg3sterne}, \ref{lemma_eing3sterne} and \ref{lemma_keine3sterne}. The size of the edges sample drawn dominates the query complexity of {\sc Teste\-3\-Stern\-Frei\-heit}, which is, hence, $\Ovon\left(\frac{\sqrt{n}}{\epsilon^{3/2}}\right)$.

\begin{mtheorem}
Let $G=(V,E)$ be a weakly connected directed graph with $n$ vertices and a bounded undirected vertex degree of $D\in\NN$. Let $\epsilon<1$ be a proximity parameter and assume that $G$ does not have any double edges. Then, {\sc Test3StarFreeness}($n,G,\epsilon$) returns \emph{true} with a probabilty of at least $3/4$ if $G$ is $3$-star free, and returns \emph{false} with a probability of at least $\frac{5}{6}$ if $G$ is $\epsilon$-far from $3$-star free. The query complexity of the algorithm is $\Ovon\left(\frac{\sqrt{n}}{\epsilon^{3/2}}\right)$.
\end{mtheorem}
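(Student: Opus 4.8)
The plan is to assemble this theorem directly from the lemmas already established in this section, since each branch of {\sc Test3StarFreeness} has been tailored so as to be covered by exactly one of them. The completeness direction is immediate: if $G$ is $3$-star free, Lemma~\ref{lemma_keine3sterne} already states that {\sc Test3StarFreeness}($n,G,\epsilon$) returns \emph{true} with probability at least $3/4$, and there is nothing further to do.

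For the soundness direction I would first argue that, when $G$ is $\epsilon$-far from $3$-star free, it falls into exactly one of the three subcases listed before the lemmas: more than $n+\frac{\epsilon}{8}n$ edges; at most $n+\frac{\epsilon}{8}n$ edges together with at least $\frac{\epsilon}{16D}n$ $3$-star vertices having an outgoing edge; or at most $n+\frac{\epsilon}{8}n$ edges together with fewer than $\frac{\epsilon}{16D}n$ such vertices. Exhaustiveness is clear. I would then invoke, respectively, the (unnamed) lemma following the discussion of Lemma~\ref{lemma_kantenzahl}, then Lemma~\ref{lemma_ausg3sterne}, then Lemma~\ref{lemma_eing3sterne}; in each case the rejection probability is at least $5/6$ (for the first case, $1-2e^{-4}>\tfrac56$, using that {\sc EstimateEdgeCount} run with parameter $\epsilon/16$ has additive error at most $\frac{\epsilon}{16}n$, so that $\hat m>n+\frac{\epsilon}{16}n$ and the first-line test rejects). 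Taking the worst of the three bounds yields the claimed $5/6$.

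The remaining task is the query-complexity bound, which I would obtain by tallying calls to $f_G$ branch by branch, treating $D$ as a constant: {\sc EstimateEdgeCount}($n,G,\epsilon/16$) costs $\Ovon(D/\epsilon)$ by Lemma~\ref{lemma_kantenzahl}; the constantly many calls to {\sc TestSubgraphFreenessAmplified} are made only for $3$-star orientations with an outgoing edge, which (being acyclic) have at most two source components — a $3$-star has three only when the centre has three incoming edges — so each call costs $\Ovon(\sqrt{n/\epsilon})$ by Lemma~\ref{lemma_subgraphtestverst�rkt} with $k\ge 1$; classifying the $s_{\ref{alg_3stern}.1}=\Ovon(1/\epsilon)$ sampled vertices as outgoing $2$-star vertices costs $\Ovon(D/\epsilon)$; and realizing the edge sample probes $\Ovon(pDn)=\Ovon(D^2\sqrt n/\epsilon^{3/2})$ adjacency-list slots, of which at most $s_{\ref{alg_3stern}.2}=\Ovon(D\sqrt n/\epsilon^{3/2})$ turn out to be edges, with collision detection among them requiring no further queries. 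The edge-sample term dominates, giving $\Ovon(\sqrt n/\epsilon^{3/2})$.

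I do not expect a genuine obstacle here — the theorem is essentially a collation of the preceding lemmas — but two points need a little care. First, one must check that the $\frac{\epsilon}{8}n$ edge-surplus threshold used to carve out the first subcase is far enough from the $\frac{\epsilon}{16}n$ accuracy/threshold used inside {\sc EstimateEdgeCount} for the $1-2e^{-4}$ rejection guarantee to go through, and that the three rejection events are analysed as separate subcases rather than summed. Second, the no-double-edge hypothesis in the statement should be flagged as removable by the reduction promised earlier, which is orthogonal to the argument above.
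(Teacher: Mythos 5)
Your proposal is correct and follows essentially the same route as the paper, whose proof of this theorem is exactly the collation you describe: the three $\epsilon$-far subcases are discharged by the unnamed edge-count lemma, Lemma~\ref{lemma_ausg3sterne} and Lemma~\ref{lemma_eing3sterne}, completeness by Lemma~\ref{lemma_keine3sterne}, and the query complexity is dominated by the edge sample of size $\Ovon(D\sqrt{n}/\epsilon^{3/2})$. Your branch-by-branch tally (including the observation that only orientations with at most two source components are passed to {\sc TestSubgraphFreenessAmplified}) is a more explicit version of what the paper leaves implicit.
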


Finally we will reconsider the restriction that the input graphs are not allowed to have double edges. Assume that a graph $G$ has double edges: Then let $G'$ be the graph that results from deleting from every double edge in $G$ the edge whose starting vertex has the larger vertex number. Since in this way we delete exactly one edge from every double edge, $G'$ does not have any double edges; on the other hand, the number of adjacent vertices did not change for any vertex in $G'$. Hence, $G'$ contains exactly the same number of $3$-star vertices as $G$ and thus is $3$-star free if $G$ is $3$-star free and is $\epsilon$-far from $3$-star free if $G$ is $\epsilon$-far from $3$-star free. This reduction can be computed locally: For any sampled edge $(u,v)$ it has to be checked whether there is an edge $(v,u)$; if this is the case and $u$ has a larger vertex number than $v$, $(u,v)$ is considered non-existent and the sampled edge is dismissed and, instead, a new edge is drawn uniformly by random.

We can now run {\sc Test3StarFreeness} on $G'$ while construction $G'$ from $G$ locally. The expected number of edges we have to draw until the sample does not get dismissed is at most $2$, and by Markov's Inequality the number of edges sampled in $G$ is with high probability larger than the number of those returned to {\sc Test3StarFreeness} only by a constant number. If not so, the algorithm can abort computation by returning \emph{true}, which will result in a small increase of the error probability in case the input graph is $\epsilon$-far from $3$-star free:

\begin{corollary}
Let $G=(V,E)$ be a weakly connected directed graph with $n$ vertices and a bounded undirected vertex degree of $D\in\NN$. Let $\epsilon<1$ be a proximity parameter. Then, {\sc Test3StarFreeness}($n,G,\epsilon$) returns \emph{true} with a probabilty of at least $3/4$ if $G$ is $3$-star free, and returns \emph{false} with a probability of at least $\frac{3}{4}$ if $G$ is $\epsilon$-far from $3$-star free. The query complexity of the algorithm is $\Ovon\left(\frac{\sqrt{n}}{\epsilon^{3/2}}\right)$.
\end{corollary}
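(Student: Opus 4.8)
The plan is to reduce the general case to the double-edge-free case handled by the theorem immediately preceding this corollary: given $G$, let $G'$ be the graph obtained by deleting from each double edge the edge whose starting vertex has the larger number, and run {\sc Test3StarFreeness} on $G'$, constructing $G'$ on the fly while sampling in $G$. Removing one edge of a double edge leaves its endpoints adjacent, so $G'$ has no double edges, is still weakly connected, keeps the undirected degree bound $D$, and --- crucially --- every vertex has the same set of adjacent vertices in $G$ and in $G'$; hence $G$ and $G'$ have exactly the same set of $3$-star vertices, so $G$ is $3$-star-free iff $G'$ is, and moreover $|E(G')|\ge|E(G)|/2$, since each double edge loses exactly one of its two edges. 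If $G$ is $\epsilon$-far from $3$-star-free, then (exactly as in the proof of Lemma \ref{lemma_eing3sterne}) $G$, and therefore $G'$, has at least $\epsilon n$ $3$-star vertices --- otherwise deleting the fewer than $\epsilon D n$ edges incident to them yields a $3$-star-free graph. I would then check that this is all that is needed: in Lemmas \ref{lemma_ausg3sterne}, \ref{lemma_eing3sterne} and \ref{lemma_keine3sterne} the hypothesis ``$\epsilon$-far'' is used only to deduce ``$\ge\epsilon n$ $3$-star vertices'' (in Lemma \ref{lemma_eing3sterne}), whereas every other hypothesis of those lemmas --- weak connectedness, the degree bound, the edge-count thresholds, the number of $3$-star vertices with an outgoing edge --- is a property of $G'$ as it actually is, and the three lemmas together form an exhaustive trichotomy. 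Hence the preceding theorem, re-run verbatim on $G'$ with ``$G$ is $\epsilon$-far'' replaced by the weaker ``$G'$ has $\ge\epsilon n$ $3$-star vertices'', shows that {\sc Test3StarFreeness} on $G'$ accepts with probability $\ge 3/4$ when $G$ is $3$-star-free and rejects with probability $\ge 5/6$ when $G$ is $\epsilon$-far from $3$-star-free.

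Next I would handle the on-the-fly simulation and the error/query bookkeeping. Each primitive the algorithm performs on $G'$ --- reading a slot of an adjacency list, enumerating out-neighbours during a BFS, or testing whether a sampled edge exists --- is answered by the corresponding queries in $G$ together with a constant number of extra queries, where any queried out-edge $(x,y)$ of $G$ with $(y,x)\in E$ and $x>y$ is treated as absent in $G'$; uniform edge samples of $G'$ are produced by rejection sampling from $G$. Since $|E(G')|\ge|E(G)|/2$, the number of draws per accepted edge is stochastically dominated by a geometric variable of success probability $\ge 1/2$, so the $s_{\ref{alg_3stern}.2}=\Ovon(D\sqrt n/\epsilon^{3/2})$ edges the algorithm needs cost $\Ovon(D\sqrt n/\epsilon^{3/2})$ draws in expectation; by a Chernoff bound more than four times this many are required only with probability $e^{-\Omega(\sqrt n/\epsilon^{3/2})}=o(1)$, since $n=\omega(D^4/\epsilon^3)$. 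I let the simulated algorithm abort and return \emph{true} once this draw budget is exceeded: this never hurts a $3$-star-free input and raises the error on an $\epsilon$-far input by only $o(1)$, so the error stays below $1/4$ on both sides --- meeting the $3/4$ guarantee claimed --- and the query complexity equals that of {\sc Test3StarFreeness} on $G'$ up to the $\Ovon(1)$ simulation overhead, i.e. $\Ovon(\sqrt n/\epsilon^{3/2})$.

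The point I expect to require care, and would write out in full, is that $G\mapsto G'$ does \emph{not} preserve $\epsilon$-farness literally --- it can drop the edit distance to $3$-star-freeness well below the $\epsilon D n$ threshold, and it can turn a $3$-star vertex with an outgoing edge into one with only incoming edges --- so the corollary is not a black-box consequence of the preceding theorem. The statement that does transfer is the ``$\ge\epsilon n$ $3$-star vertices'' consequence of $\epsilon$-farness, and the work is to confirm that this is the only channel through which $\epsilon$-farness enters the analysis of {\sc Test3StarFreeness}; the rest is routine.
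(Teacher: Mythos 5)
Your proposal follows the paper's own route: delete from each double edge the copy whose source has the larger number, observe that adjacency sets (hence the set of $3$-star vertices), weak connectivity and the degree bound are preserved, simulate $G'$ locally by dismissing such edges and resampling, and absorb the small probability of exceeding the sampling budget by aborting with \emph{true}. The one place where you deviate is in fact the one place where the paper is glib: the paper simply asserts that $G'$ is $\epsilon$-far whenever $G$ is, which as you note is not literally justified (collapsing double edges can halve the edit distance to $3$-star-freeness). Your fix --- that $\epsilon$-farness of $G$ enters the analysis of {\sc Test3StarFreeness} only through the conclusion that there are at least $\epsilon n$ $3$-star vertices (in Lemma \ref{lemma_eing3sterne}), that this count is computed in $G$ with its degree bound $D$ and transfers verbatim to $G'$, and that the hypotheses of Lemmas \ref{lemma_ausg3sterne} and \ref{lemma_eing3sterne} are otherwise intrinsic properties of $G'$ forming an exhaustive case split --- is correct and is exactly what is needed to make the corollary rigorous; alternatively one could settle for the factor-$2$ loss and state the corollary with $\epsilon/2$. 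So your argument is sound, matches the paper's construction, and is somewhat more careful than the paper's own half-page justification.
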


\section{Lower Bounds for Testing $3$-Star Freeness}\label{kap_3starlowerbound}

In this section we give two lower bounds on the query complexity of property-testing algorithms for $3$-star freeness: A lower bound of $\Omega(\sqrt{n})$ queries for testing $3$-star freeness in weakly connected graphs, and a lower bound of $\Omega(n^{2/3})$ queries if weak connectivity of the input graphs is not required. This means that the algorithm given in the last section is asymptotically optimal in the number $n$ of vertices of the input graph; moreover, it means that testing $3$-star freeness is easier if the input graphs are guaranteed to be weakly connected. This brings up the question whether there are more problems for which this holds true.

The first lower bound, that of $\Omega(\sqrt{n})$ for testing in weakly connected graphs, follows from graph classes given by Bender and Ron for their lower bound for testing strong connectiviy \cite{BR02}: Their class of strongly connected graphs consists of orientations of circles, where all edges have the same orientation. Such graphs are $3$-star free. Their class of graphs that are $\epsilon$-far from strongly connected consists of orientations of circles where all edges have the same orientation, but additionally there are more than $\epsilon Dn$ outer vertices that have an edge towards a circle vertex. Each circle vertex at most one incoming edge from an outer vertex, and since a circle vertex that has such an edge possesses three neighbours, it is a $3$-star vertex. By construction there are more than $\epsilon Dn$ such vertices, and since converting such a graph into a $3$-star free graph requires deletion of one incedent edge for each of these vertices, those graphs are $\epsilon$-far from $3$-star free. Since Bender and Ron show that the two classes of graphs cannot be distinguished with $o(\sqrt{n})$ queries, we can conclude the following corollary:

\begin{corollary}\label{kor_3starlower1}
In the adjacency list model for directed graphs where algorithms cannot query the incoming edges of a vertex, every property testing algorithm for $3$-star freeness has a query complexity of $\Omega(\sqrt{n})$, where $n$ is the number of vertices of the input graph. This even holds if the possible input graphs are restricted to be weakly connected.
\end{corollary}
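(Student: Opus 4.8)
The plan is to reduce from Bender and Ron's $\Omega(\sqrt n)$ lower bound for testing strong connectivity in the outgoing-edges model \cite{BR02}. I would take their two hard families of graphs and observe that the first is $3$-star free while the second is $\epsilon$-far from $3$-star free; since these families are indistinguishable to any algorithm making $o(\sqrt n)$ outgoing-edge queries, no such algorithm can test $3$-star freeness either.

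Concretely, the ``yes'' family consists of consistently oriented simple cycles on $n$ vertices: every vertex has undirected degree exactly $2$, so there is no $3$-star vertex, and the graph is $3$-star free. The ``no'' family is obtained from a consistently oriented cycle on a linear number of vertices by attaching more than $\epsilon D n$ further vertices, each via a single edge directed into a distinct cycle vertex; each such cycle vertex then has a second incoming edge, hence undirected degree $3$, and is a $3$-star vertex. Both families are weakly connected (each attached vertex hangs off the cycle) and respect the undirected degree bound $D \geq 3$. For $\epsilon$-farness I would argue that the attached edges are pairwise non-adjacent at their $3$-star endpoints, so each of the more than $\epsilon D n$ $3$-star vertices needs the deletion of an incident edge in order to drop below undirected degree $3$; hence more than $\epsilon D n$ adjacency-list changes are required to reach a $3$-star-free graph (absorbing the usual small constant-factor slack by shrinking $\epsilon$ if necessary).

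It then remains to invoke the indistinguishability part of \cite{BR02}: there are distributions supported on the two families that no algorithm querying only outgoing edges can separate with $o(\sqrt n)$ queries. If a property tester $\mathcal{A}$ for $3$-star freeness had query complexity $o(\sqrt n)$, it would accept every graph of the first family with probability at least $2/3$ and reject every graph of the second with probability at least $2/3$, thereby separating the two distributions — a contradiction. The only genuine obstacle is bookkeeping: verifying that Bender and Ron's instances can be taken weakly connected and within our degree bound, and that the ``no'' instances are really $\epsilon$-far from $3$-star-free (not merely far from strongly connected), which comes down to reconciling the constants relating the number of extra vertices, $D$, and $\epsilon$.
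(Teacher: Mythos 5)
Your proposal is correct and follows essentially the same route as the paper: it uses exactly Bender and Ron's two graph classes (consistently oriented cycles versus cycles with more than $\epsilon D n$ outer vertices attached by edges into distinct cycle vertices), observes that the first is $3$-star free and the second $\epsilon$-far from $3$-star free since each of the many $3$-star vertices needs an incident edge deleted, and then invokes their $\Omega(\sqrt{n})$ indistinguishability bound in the outgoing-edges model.
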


In the remainder of this section we will derive a lower bound of $\Omega(n^{2/3})$ for the query complexity of every property testing algorithm for $3$-star freeness that does not require the input graph to be weakly connected. Note that for the above bound of $\Omega(n^{1/2})$ both graph classes considererd consist of weakly connected graphs. We will make use of a technique invented by Raskhodnikova et al. \cite{RRSS09} and construct two classes of problem instances -- one consisting of $3$-star free graphs, the other of $\epsilon$-far ones -- such that the distributions of incoming vertex degrees in graphs of these two classes have $2$ proportional moments. By the results of \cite{RRSS09} that means that every algorithm that measures a significant difference between these distributions uses $\Omega(n^{2/3})$.

The difficulty here is to show that measuring vertex in-degrees is the only thing an algorithm can do to compare graphs from these classes. Since this is hard to argue even for graphs that only consist of isolated $k$-stars, we will define a helper problem that is very similar to the distinct elements problem examined in \cite{RRSS09}. We will then reduce testing this helper problem to testing $3$-star freeness.

The helper problem is defined as follows: We are given a sequence $A$ of $m$ integers $A_i\in\NN$, $i=1,\ldots m$; each value occurs in at most $3$ elements of $A$. Let $l$ be the number of distinct values that occur in $A$: We assume that all these values are from $\{1,\ldots, l\}$. If there is no value $a\in\{1,\ldots, l\}$ that occurs in $3$ elements of $A$, we call $A$ $3$-value free. Note that this problem is completely characterized by the sequence $A$.

For property testing we call a sequence $A$ $\epsilon$-far from $3$-value free, if more than $\epsilon m$ elements of $A$ have to be changed to establish $3$-value freeness; for that matter it is allowed to assign values that are not yet assigned to one of the elements of $A$, i.e., numbers that are larger than $l$. A property testing algorithm knows the number $m$ of elements and may query the value of the $i$-th element of the sequence for $i=1,\ldots ,m$.

We call an algorithm poisson-$s$ algorithm if it determines randomly by the poisson distribution how many random samples it draws. The analysis in \cite{RRSS09} requires poisson-$s$ algorithms that only get access to the histograms of their random samples; i.e., the algorithms get the information how many values in the sample occur once, twice, thrice, etc., but not the numbers of the elements or their values itself:

\begin{definition}
Let $S$ be a multiset. The histogram $\calH$ of $S$ is a function that assigns to each integer $i\in\NN$ the number of elements of $S$ that occur exactly $i$ times in $S$; i.e.,
$$\calH(i):=\left| \{s\in S| s\text{ is contained in }S\text{ exactly }i\text{ times}\}\right|.$$
\end{definition}

We make use of two Lemmas from \cite{RRSS09} in order to show that for a lower bound for testing $3$-value freeness it suffices to consider poisson-$s$ algorithms that only get access to the histogram of their samples. The following Lemma is a direct corollary of Lemma 3.1\footnote{Cited by Raskhodnikova et al. from \cite{BKS01}.} and of Lemma 5.3 a) and c) from \cite{RRSS09}:

\begin{lemma}\label{lemma_poisalg}
Let $\calA$ be an arbitrary sampling-based algorithm that draws $t$ elements of a sequence $A$; assume that $\calA$ accepts with a probability of at least $\frac{3}{4}$ if $A$ is $3$-value free and rejects with a probability of at least $\frac{3}{4}$ if $A$ is $\epsilon$-far from $3$-value free.

Then there is a poisson-$s$ algorithm $\calA'$ that has an expected number of $s=\Ovon(t)$ queries and only gets access to the histogram of these queries and that accepts with a probability of at least $\frac{3}{4}-o(1)$ if $A$ is $3$-value free and rejects with a probability of at least $\frac{3}{4}-o(1)$ if $A$ is $\epsilon$-far from $3$-value free.
\end{lemma}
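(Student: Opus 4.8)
The plan is to read the lemma, as its phrasing already suggests, as a straightforward composition of two results imported from \cite{RRSS09}: the Poissonization lemma (Lemma~3.1 there, going back to \cite{BKS01}) and the symmetrization lemma (Lemma~5.3 there, parts a) and c)). Neither of these is tied to $3$-value freeness; both only require the property under test to be \emph{symmetric}, i.e.\ invariant both under permuting the indices $1,\ldots,m$ of the sequence and under applying an injective relabelling to the values occurring in it. The first thing I would do is record that $3$-value freeness has exactly these invariances --- whether some value occurs in three positions is unaffected by reordering the positions or by renaming the values --- and that uniform sampling of positions is equivariant for the same group action. Hence both cited lemmas apply to $A$ and to the given algorithm $\calA$ essentially verbatim.

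Next I would Poissonize the sample size. Fix an absolute constant $c>1$, put $s:=ct$, and let $\calA_1$ be the algorithm that draws $N\sim\Pois(s)$ independent uniform positions of $A$, runs $\calA$ on the first $t$ of the sampled values when $N\geq t$ and copies its verdict, and returns an arbitrary answer when $N<t$. A Chernoff-type tail bound for the Poisson distribution gives $\Pr[N<t]=o(1)$ once $c$ is large enough, so on every input $\calA_1$ and $\calA$ agree with probability $1-o(1)$; in particular $\calA_1$ accepts $3$-value free $A$ with probability $\tfrac{3}{4}-o(1)$ and rejects $\epsilon$-far $A$ with probability $\tfrac{3}{4}-o(1)$, while its expected number of queries is $s=\Ovon(t)$. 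This is precisely what Lemma~3.1 of \cite{RRSS09} provides.

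Finally I would pass from $\calA_1$ to an algorithm $\calA'$ that sees only the histogram $\calH$ of the multiset of sampled values. Because the tested property is symmetric and the positions are sampled uniformly and independently, the joint law of the pair (property value, multiset of samples) is invariant under the relabelling action above; parts a) and c) of Lemma~5.3 of \cite{RRSS09} then yield a poisson-$s$ algorithm $\calA'$ whose decision is a function of $\calH$ alone and whose acceptance probability on $3$-value free inputs and rejection probability on $\epsilon$-far inputs are each at least that of $\calA_1$, hence at least $\tfrac{3}{4}-o(1)$; a union bound over the $o(1)$ error contributions of the two steps completes the proof. The one point I would be careful about --- and the only place where the argument is more than bookkeeping --- is checking that the symmetrization is compatible with a Poisson-sized rather than a fixed-size sample: one needs that conditioning on $\calH$ already pins down the sample size, which it does since $\sum_{i\geq 1} i\,\calH(i)=N$, so that the histogram-only algorithm produced by Lemma~5.3 is still a bona fide poisson-$s$ algorithm with the stated expected query bound.
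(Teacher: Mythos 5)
Your proposal is correct and takes essentially the same route as the paper: the paper likewise verifies that $3$-value freeness (and $\epsilon$-farness from it) is invariant under permuting the sequence elements and then invokes Lemma~3.1 and Lemma~5.3~a),~c) of \cite{RRSS09} to obtain a histogram-only poisson-$s$ algorithm at the cost of an $o(1)$ loss in the success probabilities. The only detail the paper adds beyond your argument is a footnote justifying that the error terms are indeed $o(1)$ by noting one may assume $s=\Omega(\sqrt{n})$ (via Corollary~\ref{kor_3starlower1}), a point your Chernoff/statistical-distance estimates implicitly rely on as well.
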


Note that the property of $3$-value freeness of a sequence $A$ is closed under permutation of the sequence elements, as is the property of $\epsilon$-farness of $3$-value freeness; thus these properties fulfill the premises of Lemma 3.1 and 5.3 c) in \cite{RRSS09}. The probability guarantees of $\calA'$ follow from those that Lemma 5.3 a) in \cite{RRSS09}, since the statistical distance between the distributions of the results of $\calA$ and $\calA'$ is at most $\frac{4}{s}=o(1)$\footnote{Due to corollary \ref{kor_3starlower1} we can assume $s=\Omega(\sqrt{n})$, and hence it holds $\frac{4}{s}=o(1)$.}.

Analogously to \cite{RRSS09} define a frequency variable $X_A$ as a random variable for the number of occurences in $A$ that a value drawn randomly uniformly distributed from $\{1,\ldots, l\}$ has. The following Lemma is a also a direct corollary from \cite{RRSS09}, namely from Corollary 5.7:

\begin{lemma}\label{lemma_propmom}
Seien $A$ und $B$ Probleminstanzen von $3$-Wert-Freiheit und seien $X_A$ und $X_B$ die zugehörigen Frequenzvariablen. Falls $X_A$ und $X_B$ $k-1$ proportionale Momente haben, dann gilt für jeden Poisson-$s$-Algorithmus $\calA$ mit $s=o(n^{1-1/k})$, der nur die Histogramme seiner Samples sieht
$$\left|\Pr[\calA(A)=\emph{true}] - \Pr[\calA(B)=\emph{true}]\right| = o(1).$$
\end{lemma}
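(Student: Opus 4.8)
The plan is to obtain the statement directly from Corollary~5.7 of \cite{RRSS09} by matching the objects of the helper problem with those in their framework. First I would set up the dictionary between the two settings: a problem instance $A$ of $3$-value freeness is, in the terminology of \cite{RRSS09}, just a sequence whose $l$ distinct values each occur between $1$ and $3$ times; picking a value uniformly at random from $\{1,\ldots,l\}$ and returning its number of occurrences in $A$ is exactly their frequency variable, so our $X_A$ coincides with the one used there. In the same way, a Poisson-$s$ algorithm that only sees the histograms of its samples is precisely their model of an algorithm that observes only the fingerprint of a Poisson sample, and our notion of $k-1$ proportional moments of $X_A$ and $X_B$ is the one from \cite{RRSS09}. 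Hence all hypotheses of Corollary~5.7 become available once the proportional-moments assumption is granted.

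Then I would invoke Corollary~5.7 itself: translated into this language, it asserts that if $X_A$ and $X_B$ have $k-1$ proportional moments, then for $s=o(n^{1-1/k})$ the distribution of the histogram seen by a Poisson-$s$ algorithm run on $A$ is at total variation distance $o(1)$ from the one seen when it is run on $B$. Since the output of such an algorithm depends on its input only through the observed histogram, its acceptance probability is a function of the histogram distribution alone, and because total variation distance cannot increase under post-processing we get
\[
\left|\Pr[\calA(A)=\text{true}]-\Pr[\calA(B)=\text{true}]\right|\;\leq\; d_{\mathrm{TV}}\bigl(\calH_A,\calH_B\bigr)\;=\;o(1),
\]
where $\calH_A$ and $\calH_B$ denote the two histogram distributions. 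This is exactly the claim.

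The step that needs the most care is the bookkeeping of parameters. In \cite{RRSS09} the threshold on the sample size is phrased in terms of the number of distinct values of the sequence (resp.\ its length), whereas here it is phrased in terms of $n$; I would make sure that the sequences used in the subsequent reduction have $\Theta(n)$ elements and $\Theta(n)$ distinct values, so that $s=o(n^{1-1/k})$ is equivalent to the bound required by Corollary~5.7. One also has to note that $A$ and $B$ may have different numbers of distinct values and that ``proportional'' allows a common multiplicative constant between the moments; this constant only rescales the fingerprint expectations and is absorbed in the analysis of \cite{RRSS09}. With these correspondences in place the lemma follows verbatim from their proof.
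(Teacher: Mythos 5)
Your proposal is correct and follows the same route as the paper: the paper gives no separate argument but states the lemma as a direct corollary of Corollary~5.7 of \cite{RRSS09}, after observing that the frequency variables and the Poisson-$s$, histogram-only access model match that framework. Your added remarks on the data-processing step and the parameter bookkeeping are just an elaboration of that same citation.
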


Two random variables $X_1$ and $X_2$ are said to have $k-1$ proportional moments, if $\frac{\E[X_1]}{\E[X_2]} =\frac{\E[X_1^2]}{\E[X_2^2]} = \ldots = \frac{\E[X_1^{k-1}]}{\E[X_2^{k-1}]}$. 

We can now use the preceeding Lemmas in order to prove that any property-testing algorithm for $3$-value freeness needs at least $\Omega(n^{2/3})$ samples.

\begin{lemma}\label{lemma_3wertlowerbound}
Any property-testing algorithm for $3$-value freeness needs at least $\Omega(n^{2/3})$ samples, where $n$ is the length of the input sequence.
\end{lemma}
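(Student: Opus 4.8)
The plan is to combine the two imported black boxes, Lemma~\ref{lemma_poisalg} and Lemma~\ref{lemma_propmom}, with one explicit pair of hard instances, in the spirit of the distinct-elements lower bound of~\cite{RRSS09}. Throughout I write $m$ for the length of the input sequence, so that $n=m$, and I will argue by contradiction: suppose some algorithm $\calA$ tests $3$-value freeness using $t=o(n^{2/3})$ queries with the usual $3/4$ success guarantees. First I would observe that, after padding $\calA$ with dummy queries, we may assume $t=\Omega(\sqrt n)$ without violating $t=o(n^{2/3})$ (since $n^{2/3}/\sqrt n\to\infty$); this is what makes the $o(1)$ error terms below genuinely vanish. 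Since both $3$-value freeness and $\epsilon$-farness from it are invariant under permutations of the sequence, Lemma~\ref{lemma_poisalg} applies and produces a poisson-$s$ algorithm $\calA'$ with $s=\Ovon(t)=o(n^{2/3})$ that sees only the histogram of its samples and that accepts $3$-value-free sequences with probability at least $3/4-o(1)$ and rejects $\epsilon$-far sequences with probability at least $3/4-o(1)$.

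Next I would construct the two instances (assume $6\mid m$; otherwise adjust by $\Ovon(1)$ elements). Let $A$ consist of $m/2$ distinct values, each appearing in exactly two elements of the sequence; then $A$ has length $m$ and is trivially $3$-value free. Let $B$ consist of $m/6$ distinct values, each appearing in exactly three elements, together with $m/2$ further distinct values each appearing in exactly one element; then $B$ has length $3\cdot\tfrac m6+\tfrac m2=m$, and since establishing $3$-value freeness forces at least one change for each of the $m/6$ triples of $B$, the sequence $B$ is $\epsilon$-far from $3$-value free for every constant $\epsilon<1/6$. The decisive computation is that the associated frequency variables have two proportional moments: $X_A\equiv 2$, so $\E[X_A]=2$ and $\E[X_A^2]=4$, while $X_B$ has $l_B=\tfrac m6+\tfrac m2=\tfrac{2m}{3}$ distinct values, of which a $\tfrac14$ fraction occur three times and the rest once, so $\E[X_B]=3\cdot\tfrac14+1\cdot\tfrac34=\tfrac32$ and $\E[X_B^2]=9\cdot\tfrac14+1\cdot\tfrac34=3$; hence $\E[X_A]/\E[X_B]=\E[X_A^2]/\E[X_B^2]=4/3$.

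Finally I would invoke Lemma~\ref{lemma_propmom} with $k=3$: since $X_A$ and $X_B$ have $k-1=2$ proportional moments and $s=o(n^{2/3})=o(n^{1-1/k})$, we get $\bigl|\Pr[\calA'(A)=\text{true}]-\Pr[\calA'(B)=\text{true}]\bigr|=o(1)$. On the other hand, by the properties of $\calA'$ above, $\Pr[\calA'(A)=\text{true}]\ge \tfrac34-o(1)$ while $\Pr[\calA'(B)=\text{true}]\le \tfrac14+o(1)$, so the difference is at least $\tfrac12-o(1)=\Omega(1)$, a contradiction. Hence $t=\Omega(n^{2/3})$. The one genuinely creative step is choosing the instance pair: one must exhibit a $3$-value-free sequence and an $\epsilon$-far sequence whose frequency variables agree in their first two moments, and then separately verify both the $\epsilon$-farness and the moment identity; once the pair is in hand, everything else is straightforward bookkeeping on top of the two cited lemmas and the padding observation.
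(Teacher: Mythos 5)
Your proposal is correct and takes essentially the same route as the paper: reduce to a poisson-$s$ algorithm that sees only sample histograms via Lemma~\ref{lemma_poisalg}, exhibit a $3$-value-free instance and an $\epsilon$-far instance whose frequency variables have two proportional moments, and conclude with Lemma~\ref{lemma_propmom} for $k=3$. Your instance pair (all values twice vs.\ a $\tfrac14$/$\tfrac34$ mix of triples and singletons, moment ratio $4/3$, $\epsilon<\tfrac16$) differs only cosmetically from the paper's $\tfrac{1}{32}n$ triples, $\tfrac{13}{32}n$ pairs, $\tfrac{3}{32}n$ singletons construction (ratio $17/16$, $\epsilon<\tfrac{1}{32}$), and your padding argument to ensure $t=\Omega(\sqrt n)$ is a legitimate substitute for the paper's appeal to its Corollary~\ref{kor_3starlower1} in making the $o(1)$ error terms vanish; both the farness and the moment computations check out.
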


\begin{proof}
At first note that, due to Lemma \ref{lemma_poisalg}, every property testing algorithm for $3$-value freeness can be replaced by a poisson-$s$ algorithm with an asymptotically equal query complexity, which only gets access to the histogram of its queries -- for sufficiently large sequence length $n$ the additional error probability is neglegible. Hence it suffices to show that every poisson-$s$ algorithm for $3$-value freeness needs to take at least an expected number of $s=\Omega(n^{2/3})$ queries. Furthermore, due to Lemma \ref{lemma_propmom}, every such poisson-$s$-algorithm needs at least $\Omega(n^{2/3})$ queries in expectation to distinguish two sequences $A$ and $B$ whose frequency variables $X_A$ and $X_B$ have to proportional moments. We will now give two such classes, $\calC_A$ being a class of $3$-value free sequences and $\calC_B$ a class of sequences that are $\epsilon$-far from $3$-value free. From the above considerations we can then conclude that testing $3$-value freeness requires $\Omega(n^{2/3})$ queries.

Let  $\calC_A$ be the class of all sequences $A$ of length $n$, such that $n$ is a multiple of $32$ and $A$ contains $\frac{1}{2}n$ distinct values, each of them two times. Every sequence in $A$ is $3$-value free.

Let $\calC_B$ be the class of all sequences $B$ of length $n$, such that $n$ is a multiple of $32$ and the values that occur in $B$ are distributed as follows: There are $\frac{17}{32}n$ distinct values in total, of whom  $\frac{1}{32}n$ values occur $3$ times, $\frac{13}{32}n$ values $2$ times and $\frac{3}{32}n$ values once. For $\epsilon<\frac{1}{32}$, all of the sequences in $\calC_B$ are $\epsilon$-far from $3$-value free.

We will now show that the frequency variables of two sequences of equal length from $\calC_A$ and $\calC_B$ have two proportional moments. Let, for an arbitrary fixed $n$, $A\in\calC_A$ and $B\in\calC_B$ with length $n$ each and let $X_A$ and $X_B$ be the corresponding frequency variables. It holds $\E[X_A] = 2$ and $\E[X_A^2]=4$. Furthermore, we have $$\E[X_B] = \frac{1}{17}\cdot 3 + \frac{13}{17}\cdot 2 + \frac{3}{17}\cdot 1 = \frac{32}{17}$$ and $$\E[X_B] = \frac{1}{17}\cdot 9 + \frac{13}{17}\cdot 4 + \frac{3}{17}\cdot 1 = \frac{64}{17}.$$ We can conclude $$\frac{\E[X_A]}{\E[X_B]} = \frac{2\cdot 17}{32} = \frac{4\cdot 17}{64} = \frac{\E[X_A^2]}{\E[X_B^2]},$$ and hence $X_A$ and $X_B$ have two proportional moments.
\end{proof}

Wir werden nun durch Reduktion zeigen, dass jeder Property-Testing-Algorithmus für $3$-Stern-Freiheit $\Omega(n^{2/3})$ Anfragen benötigt, wenn dies auch eine untere Schranke für das Testen von $3$-Wert-Freiheit ist.

\begin{mtheorem}
In the adjacency list model for directed graphs where algorithms cannot query the incoming edges of a vertex, every property testing algorithm for $3$-star freeness has a query complexity of $\Omega(n^{2/3})$, where $n$ is the number of vertices of the input graph.
\end{mtheorem}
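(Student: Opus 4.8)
The plan is to reduce the helper problem of $3$-value freeness, for which Lemma~\ref{lemma_3wertlowerbound} already supplies an $\Omega(n^{2/3})$ lower bound, to $3$-star freeness, turning any property testing algorithm for the latter into one for the former with essentially the same query complexity. Given a sequence $A$ of length $m$ (with values in $\{1,\dots,l\}$, $l\le m$, each occurring at most three times), I would build a directed graph $G_A$ on $n:=2m$ vertices: vertices $1,\dots,m$ are \emph{position} vertices, vertices $m+1,\dots,2m$ are \emph{value} vertices, and for each $i\in\{1,\dots,m\}$ there is exactly one edge $(i,\,m+A_i)$. Then every position vertex has out-degree $1$ and in-degree $0$, every value vertex has out-degree $0$, the in-degree of value vertex $m+j$ equals the number of occurrences of $j$ in $A$ (hence at most $3$), and vertices $m+l+1,\dots,2m$ are isolated. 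The crucial observation is that the $3$-star vertices of $G_A$ are precisely the value vertices of in-degree $3$, i.e.\ those whose value occurs three times in $A$, and each of them is an incoming $3$-star vertex. Note that $G_A$ is highly disconnected, which is why no connectivity assumption appears in the theorem.

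Two directions are then easy. First, if $A$ is $3$-value free then every value vertex has degree at most $2$, so $G_A$ is $3$-star free. Second, one query to $G_A$ costs at most one query to $A$: a query $f_{G_A}(v,i)$ returns $m+A_v$ when $v\le m$ and $i=1$ (the only case touching $A$) and $+$ otherwise, and whether $v$ is an active or a padding value vertex is irrelevant. Hence any tester for $3$-star freeness with query complexity $q(\cdot)$ yields, by this local simulation, an algorithm for $3$-value freeness making at most $q(2m)$ queries, with the same acceptance behaviour on $G_A$-images of sequences.

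The main obstacle is the soundness direction: I must show that if $A$ is $\epsilon$-far from $3$-value free then $G_A$ is $\epsilon'$-far from $3$-star free for a fixed constant $\epsilon'>0$, i.e.\ that no global edit of the adjacency lists of $G_A$ beats element-wise editing of $A$. Making $A$ $3$-value free requires changing at least $t_3(A)$ elements, where $t_3(A)$ is the number of values occurring exactly three times (reassigning one occurrence of each to a fresh value is necessary and sufficient). On the graph side, let $G'$ be any $3$-star free graph obtained by changing adjacency-list entries of $G_A$. For a value $j$ occurring three times, its three occurrences are three position vertices whose unique out-list entries all point to $m+j$; if none of these three entries were changed, $m+j$ would still have in-degree $\ge 3$ in $G'$, contradicting $3$-star freeness. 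So at least one entry per such triple is changed, and triples for distinct $j$ are pairwise disjoint (a position vertex points to only one value vertex), so at least $t_3(A)$ entries are changed. For $A$ in the class $\calC_B$ from the proof of Lemma~\ref{lemma_3wertlowerbound} this is at least $m/32$, which exceeds $\epsilon'\cdot 3\cdot 2m$ (the threshold for $\epsilon'$-farness of a graph on $2m$ vertices with in/out-degree bounded by $3$) for, say, $\epsilon'=\tfrac1{256}$; hence $G_A$ is $\epsilon'$-far from $3$-star free.

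Putting the pieces together, fix $\epsilon=\tfrac1{64}$ for the classes $\calC_A,\calC_B$ in the proof of Lemma~\ref{lemma_3wertlowerbound} and the constant $\epsilon'$ above. Run a given tester for $\epsilon'$-close versus $\epsilon'$-far $3$-star freeness on $G_A$ via the simulation: it accepts when $A\in\calC_A$ (since $G_A$ is $3$-star free) and rejects when $A\in\calC_B$ (since $G_A$ is $\epsilon'$-far), each with the required probability and using at most $q(2m)$ sequence queries (up to the usual constant-factor success-probability amplification, which does not affect the asymptotics). By Lemma~\ref{lemma_3wertlowerbound}, $q(2m)=\Omega(m^{2/3})$, and with $m=n/2$ this gives $q(n)=\Omega(n^{2/3})$. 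Restricting $m$ to multiples of $32$, as required by $\calC_A,\calC_B$, only changes $n$ by a bounded factor and is therefore harmless.
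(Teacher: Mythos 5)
Your reduction is essentially the paper's own proof: the same star-gadget graph (one edge from each position/outer vertex to its value/central vertex), the same local simulation costing at most one sequence query per graph query, and the same appeal to the $\Omega(n^{2/3})$ lower bound for $3$-value freeness via Lemma~\ref{lemma_3wertlowerbound}. The only differences are cosmetic and if anything in your favour: you use a fixed vertex labelling with a deterministic simulation instead of the paper's on-the-fly randomized role assignment, and you spell out the farness transfer (disjoint triples of adjacency entries, explicit constants) that the paper merely asserts; just note that to quote Lemma~\ref{lemma_3wertlowerbound} as a black box you should either take $\epsilon'\le\epsilon/6$ so the simulated algorithm rejects \emph{all} $\epsilon$-far sequences (your own $t_3$ argument gives this immediately), or argue directly from the proportional-moments argument in its proof, since as stated you only have a distinguisher for $\calC_A$ versus $\calC_B$.
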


\begin{proof}
For a proof by contradiction we assume that there is a property testing algorithm $\calA$ for $3$-star freeness that has a query complexity of $o(n^{2/3})$ for input graphs with $n$ vertices. We will construct an algorithm $\calA'$ that takes as input a sequence $A$ of length $n'=\frac{1}{2}n$ and a proximity parameter $\epsilon'$, and we will prove that $\calA'$ is a property testing algorithm for $3$-value freeness that has a query complexity of $o(n^{2/3})$.

$\calA'$ works by calling $\calA$ and returning the return value of $\calA$; the proximity parameter $\epsilon$ of $\calA$ is set to $2\epsilon'$ and $\calA$ is given oracle access to a graph $G=(V,E)$ with $n=2n'$ vertices that is dynamically constructed for each query of $\calA$. $G$ consists of isolated stars or single vertices and is defined as follows:
\begin{itemize}
	\item $V$ consists of $n'$ central vertices $v_1, \ldots , v_{n'}$ and $n'$ outer vertices $u_1,\ldots ,u_{n'}$.
	\item For $i=1,\ldots ,n'$, there is an edge from $u_i$ to $v_{A_i}$; i.e., the outer vertices for whose numbers the corresponding elements of $A$ have the same value form a star together with a common central vertex.
	\item $G$ does not contain any further vertices and edges; thus, there may be isolated vertices in $G$.
\end{itemize}

Hence, the vertices $u_i$ represent the elements of the sequence $A$ in $G$, and by having a common central vertex in $G$ it is represented that the corresponding elements of $A$ have the same value.

For a query of $\calA$, $G$ can be constructed locally\footnote{The construction method given here does not necessarily create the above vertex permutation, but the resulting graph is isomorphic to $G$. For the sake of simplicity, we will refer to it as $G$.}: When an unknown vertex is queries, it is determined by throwing a coin whether this vertex will be a central vertex or an outer vertex; the probability for a central vertex is $\frac{n'-c}{2n'-c-o}$, if so far $c$ central vertices and $o$ outer vertices are known to $\calA$. If the queried vertex turns out to be an outer vertex, $\calA'$ queries a sequence element that is randomly uniformly distributed among those that have not already been queried; hence the probability for each of the remaining elements is $\frac{1}{n'-o}$ if, again, $o$ is the number of previously queried outer vertices. Basically this is equivalent to drawing from $A$ without replacement.

Now let $i$ be the number of the sequence element that has been selected in this way: $\calA'$ inserts an edge $(u,v_{A_i})$ into the graph in construction. Since $\calA$ can only query the outgoing edges from vertices and all the edges of $G$ are directed from outer to central vertices, every edge that $\calA$ queries can be constructed in this way.

This construction procedure needs at most $1$ query to $A$ for every query of $\calA$; hence the query complexity of $\calA'$ is at most that of $\calA$, which is $o(n^{2/3}) = o((n')^{2/3})$. Furthermore, each value that occurs $i$ times in $A$ corresponds to an $i$-star in the graph $G$ constructed by $\calA'$. Hence, $G$ is $3$-star free if and only if $A$ is $3$-value free, and, by the choices of $\epsilon$ and $\epsilon'$, $G$ is $\epsilon$-far from $3$-star free if and only if $A$ is $\epsilon'$-far from $3$-value free. Thus, $\calA'$ is a property testing algorithm for $3$-value freeness that has a query complexity of $o((n')^{2/3})$, which is a contradiction to Lemma \ref{lemma_3wertlowerbound}. Hence there does not exist a property testing algorithm $\calA$ for $3$-star freeness that has a query complexity of $o(n^{2/3})$.
\end{proof}

We finally discuss lower bounds for the query complexity of testing $k$-star freeness for $k>3$. The property of $k$-value freeness is defined analogously to $3$-value freeness, and the Lemmas \ref{lemma_poisalg} and \ref{lemma_propmom} hold analogously for $k$-star freeness; the reduction from $k$-value freeness to $k$-star freeness can be done analogously to the special case $k=3$. The difficulty is to give classes $\calC_A$ and $\calC_B$ like in the proof of Lemma \ref{lemma_3wertlowerbound} that have $k-1$ proportional moments. This can be easily done for fixed $k>3$, but it is an open problem to give closed formulae for the frequency distributions of the different $i$-stars in graphs of these classes. We close this chapter with the below conjecture:

\begin{conjecture}
In the adjacency list model for directed graphs where algorithms cannot query the incoming edges of a vertex, every property testing algorithm for $k$-star freeness has a query complexity of $\Omega(n^{1-1/k})$, where $n$ is the number of vertices of the input graph.
\end{conjecture}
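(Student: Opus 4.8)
A natural route toward establishing this conjecture is to mirror, step for step, the proof of the $k=3$ case (the last theorem together with Lemma~\ref{lemma_3wertlowerbound}), replacing only the two ingredients that were genuinely special to $k=3$. First I would define $k$-value freeness in the obvious way: a sequence $A$ in which every value occurs at most $k$ times is \emph{$k$-value free} if no value occurs exactly $k$ times, and $\epsilon$-far from $k$-value free if more than $\epsilon m$ of its $m$ entries must be changed to make it so. The reduction from testing $k$-value freeness on a length-$n'$ sequence to testing $k$-star freeness on a $2n'$-vertex graph is then verbatim the one used for $k=3$: the graph has $n'$ central and $n'$ outer vertices with a single edge $u_i\to v_{A_i}$; it can be constructed lazily while answering $\calA$'s outgoing-edge queries (a central vertex has out-degree $0$, an outer vertex's unique out-neighbour is revealed by one query to $A$); a value of multiplicity $j$ becomes a $j$-star, so the graph is $k$-star free, resp.\ $\Theta(\epsilon)$-far from it, exactly when $A$ is $k$-value free, resp.\ $\epsilon'$-far from it, for $\epsilon$ and $\epsilon'$ related by a constant. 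Since each query of $\calA$ costs $\calA'$ at most one query to $A$, a $k$-star-freeness tester of query complexity $o(n^{1-1/k})$ would give a $k$-value-freeness tester of the same order, so it suffices to prove that testing $k$-value freeness needs $\Omega(n^{1-1/k})$ samples.

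For the latter I would transfer the \cite{RRSS09}-machinery. Lemma~\ref{lemma_poisalg} (reduction to a Poisson-$s$ algorithm that only sees the histogram of its samples) uses only that the property is closed under permutation of the sequence, which holds for $k$-value freeness; and Lemma~\ref{lemma_propmom} (two sequences whose frequency variables have $k-1$ proportional moments are indistinguishable by histogram-only Poisson-$s$ algorithms with $s=o(n^{1-1/k})$) is the general form of Corollary~5.7 of \cite{RRSS09}. To legitimise the $o(1)$ error in the histogram reduction one first needs $s=\omega(1)$; I would obtain this, as in Corollary~\ref{kor_3starlower1}, from a preliminary super-constant (ideally $\Omega(n^{1/(k-1)})$) lower bound via a Bender--Ron-style family of orientations of circles with pendant incoming edges, or else handle the $s=O(1)$ regime directly. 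With these two lemmas in hand, the whole problem collapses to: exhibit a class $\calC_A$ of $k$-value-free sequences and a class $\calC_B$ of sequences $\epsilon$-far from $k$-value free, of every admissible length, whose frequency variables $X_A,X_B$ have $k-1$ proportional moments.

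For the construction I would take $\calC_A$ to consist of sequences in which every value has multiplicity exactly $k-1$, so $X_A\equiv k-1$ and $\E[X_A^{\,j}]=(k-1)^{j}$; these are manifestly $k$-value free. For $\calC_B$ I would allow multiplicities in $\{1,\dots,k\}$ with a profile $(\beta_1,\dots,\beta_k)$, a $\beta_i$-fraction of the distinct values having multiplicity $i$, and impose $\beta_k>0$; then a constant fraction of the values hit multiplicity $k$, each such value forces at least one edit (its occurrence-set being disjoint from the others), so the sequences are $\epsilon$-far from $k$-value free for a small enough constant $\epsilon=\epsilon(k)$. The requirement of $k-1$ proportional moments, together with $\sum_i\beta_i=1$ and the length constraint, is a linear feasibility system in the $\beta_i$ with the proportionality constant $\lambda$ playing the role of a one-parameter slack; I would solve it for $\lambda$ in a small interval around the value forcing $\E[X_B^{\,j}]=(k-1)^{j}/\lambda$ and check that for $\lambda$ close enough to $1$ the solution satisfies $\beta_i\ge 0$ for all $i$ and $\beta_k>0$ --- geometrically, that the point $((k-1)/\lambda,\dots,(k-1)^{k-1}/\lambda)$ on the scaled moment curve lies in the relative interior of the simplex spanned by $(i,i^2,\dots,i^{k-1})$, $i=1,\dots,k$, with nonzero barycentric coordinate at $i=k$. (For $k=3$ this reproduces the profile with $\lambda=17/16$ used in Lemma~\ref{lemma_3wertlowerbound}.)

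The step I expect to be the real obstacle is precisely this last verification \emph{uniformly in $k$}. For any individual $k$ it is a finite linear-algebra computation --- this is what the remark preceding the conjecture means by ``easily done for fixed $k$'' --- but the conjecture quantifies over all $k$, and, as the authors note, no closed form for the solved profile $(\beta_1,\dots,\beta_k)$ is known, so one cannot discharge the non-negativity check once and for all. A complete proof would therefore need either an explicit, provably valid family of profiles for every $k$, or a structural argument: for instance a perturbation/continuity argument exploiting that the moment cone of distributions on $\{1,\dots,k\}$ is full-dimensional, together with a quantitative lower bound on the barycentric weight at $i=k$ that does not degenerate as $k\to\infty$. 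Supplying one of these is the heart of the matter; the remainder of the argument is a routine adaptation of the $k=3$ proof, which is why the statement is left as a conjecture.
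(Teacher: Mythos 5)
There is no proof to compare against here: the statement you were asked to prove is stated in the paper as a \emph{conjecture}, precisely because the authors could not close the same gap that you yourself flag at the end of your proposal. Your outline of the would-be argument coincides with the paper's own discussion preceding the conjecture: define $k$-value freeness analogously to $3$-value freeness, observe that Lemmas~\ref{lemma_poisalg} and~\ref{lemma_propmom} carry over (closure under permutation, and the general $k-1$-proportional-moments corollary of \cite{RRSS09}), and reduce $k$-value freeness to $k$-star freeness by the same lazily constructed star graph, so that everything hinges on exhibiting, for every $k$, a class $\calC_A$ of $k$-value-free sequences and a class $\calC_B$ of $\epsilon$-far sequences whose frequency variables have $k-1$ proportional moments. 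The paper says exactly this and notes that the construction ``can be easily done for fixed $k>3$, but it is an open problem to give closed formulae for the frequency distributions,'' which is why the statement is a conjecture and not a theorem.

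The genuine gap, then, is the one you name: your perturbation picture --- that the scaled moment point $\bigl((k-1)/\lambda,\dots,(k-1)^{k-1}/\lambda\bigr)$ lies, for $\lambda$ slightly above $1$, in the convex hull of the moment vectors $(i,i^2,\dots,i^{k-1})$, $i=1,\dots,k$, with barycentric weight at $i=k$ bounded away from zero (and with rational weights of controlled denominator so that finite sequences of every admissible length realize the profile) --- is asserted but not proved uniformly in $k$. For $k=3$ it reproduces the paper's profile ($\lambda=17/16$), and for any fixed $k$ it is a finite feasibility check, but the conjecture quantifies over all $k$, and without a quantitative lower bound on $\beta_k$ (which also governs the $\epsilon$ for which farness holds) the argument does not go through. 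So your proposal is a faithful reconstruction of the intended proof strategy together with an accurate diagnosis of the missing ingredient, but it is not a proof, and the paper contains none either.
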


\section{Testing Strong Connectivity}\label{kap_zusammenhang}

Strong connectivity, i.e., the question whether all pairs of vertices are connected by paths in both directions,  is a very basic property of directed graphs. In this chapter we will give a property testing algorithm for strong connectivity that Property-Testing for this graph property was at first considered by  Bender and Ron \cite{BR02}, who, amongst others, give a lower bound of $\Omega(n^{1/2})$ for any property testing algorithm for this problem under the assumption that the algorithm may only query the incoming edges of a vertex. After first publishing this result at ESA 2012 \cite{HS12} we discovered that there already existed a proof sketch for an algorithm similar to ours, which Oded Goldreich published in the appendix of a survey article about graph property testing \cite{G10}. At the end of this section we will include a detailed discussion about similarities and differences between Goldreich's and our algorithm.


The above-mentioned lower bound construction of Bender and Ron -- more precisely the class of $\epsilon$-far graphs that Bender and Ron give -- can be used for deriving a lower bound of $\Omega(n)$ for any property testing algorithm with \emph{one-sided error} in the same graph model: Let $G$ be a graph with $n$ vertices that consists of a directed circle of $n-\epsilon Dn -1$ vertices, whose edges all have the same direction, and $\epsilon Dn+1$ outer vertices that have exactly one edge to one of the circle vertices; each circle vertex has at most one incoming edge from an outer vertex (see figure \ref{abb_untereschranke}). $G$ is $\epsilon$-far from being strongly connected, since for archieving strong connectivity, every outer vertex needs an additional incoming edge. On the other hand, every subgraph $H$ of $G$ induced by at most $n-\epsilon Dn -2$ vertices can be completed to a strongly connected graph $G'$ with $n$ vertices as follows: There are $\epsilon Dn + 2$ vertices and their adjacent edges to add; for every outer vertex of $G$, add an edge from one of the missing vertices; in particular, since at least one of the circle vertices of $G$ is missing in $H$, this missing circle vertex can be connected to one of the outer vertices. After this, add the remaining edges of $G$ to $G'$, ignoring those that would create an outgoing $2$-star at its source vertex (see again figure \ref{abb_untereschranke}). The resulting graph is strongly connected. Hence, no algorithm that only queries $n-\epsilon Dn-2$ vertices  of $G$ and only its outgoing edges can rule out the possibility that $G$ is strongly connected; hence, every such property testing algorithm for strong connectivity that has one-sided error would have to accept $G$.

\begin{corollary}\label{corollary_connectivityonesidedlowerbound}
Every property testing algorithm for strong connectivity of directed graphs in the adjacency list model with only outgoing edges being visible for the algorithm has a query complexity of $\Omega(n)$.
\end{corollary}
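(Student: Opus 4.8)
The paragraph preceding the corollary already contains the heart of the argument, and the plan is to turn it into a clean proof. Fix a small proximity parameter $\epsilon$ with $\epsilon D<1$ and let $G$ be the Bender--Ron $\epsilon$-far instance described above: a directed cycle $C$ on $n-\epsilon Dn-1$ vertices, all of whose edges point the same way, together with $\epsilon Dn+1$ outer vertices, each with exactly one outgoing edge, going to a distinct cycle vertex. Every vertex of $G$ has outdegree exactly $1$ and every outer vertex has indegree $0$, so repairing strong connectivity forces a fresh incoming edge at each outer vertex; these are $\epsilon Dn+1>\epsilon Dn$ adjacency-list changes, so $G$ is $\epsilon$-far from strongly connected (assuming the degree bound $D\ge 2$, which is needed for $G$ itself).

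Suppose for contradiction that $\calA$ is a one-sided error property testing algorithm for strong connectivity with query complexity $q(n)\le n-\epsilon Dn-2$ on $n$-vertex inputs. The key combinatorial step is a lemma: whenever $\calA$ is run on $G$, the transcript it obtains is consistent with some strongly connected $G'$ on $n$ vertices respecting the degree bound. Indeed, each call to $f_G$ touches exactly one vertex, so the set $\hat Q$ of touched vertices has size at most $q(n)\le n-\epsilon Dn-2$; hence at least $\epsilon Dn+2$ vertices are untouched, among them at least one cycle vertex $w$. I would build $G'$ exactly as in the paragraph: reroute the cycle so that $w$'s successor runs through all untouched outer vertices and back to $w$'s original successor (consistent, since none of these vertices was queried, so none of their outgoing edges was observed), obtaining a Hamiltonian cycle through all cycle vertices and all untouched outer vertices; then attach each touched outer vertex $u$ by keeping its observed edge $u\to v_{A_u}$ and adding, from a private untouched cycle vertex $y_u$ lying on that cycle, an extra edge $y_u\to u$ (legal, since $y_u$ was never queried); a counting check shows enough such private $y_u$ exist. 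Finally restore all remaining original edges of $G$ that do not collide. One then verifies that $G'$ agrees with every answer $\calA$ received --- the only observed edges are outgoing edges of vertices of $\hat Q$, all retained, and no observed empty slot is filled --- that all degrees stay $\le D$, and that $G'$ is strongly connected.

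Given the lemma, the contradiction is routine. Since $\calA$ has one-sided error it accepts every strongly connected graph with probability $1$; hence along every computation branch of $\calA$ on $G$ --- whose transcript, by the lemma, is consistent with some strongly connected $G'$ --- $\calA$ must accept, for otherwise the same branch occurs with positive probability when $\calA$ is run on $G'$ and leads to rejection, contradicting perfect acceptance of $G'$. Thus $\calA$ accepts $G$ with probability $1$, although $G$ is $\epsilon$-far from strongly connected, contradicting soundness. Therefore $q(n)>n-\epsilon Dn-2=\Omega(n)$.

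The main obstacle is the combinatorial lemma, and specifically making the construction of $G'$ robust against \emph{adaptive} choices of $\hat Q$ and against every pattern of observed edges and empty slots: one must check at once that (i) the rerouted cycle together with the attachments is strongly connected no matter which outer vertices land in $\hat Q$, (ii) no observed edge is destroyed and no observed empty adjacency-list slot is filled, and (iii) the adjacency-list orderings can be chosen consistently with the answers already returned. (An equivalent, and perhaps cleaner, way to see the lemma is that the cycle $C$ is the unique minimal subset of $G$ closed under outgoing edges, and a transcript can rule out strong connectivity only by exhibiting such a closed proper subset, which forces exploring all of $C$, i.e.\ $\Omega(n)$ queries.) Everything else --- the farness of $G$ and the one-sided-error reasoning --- is standard.
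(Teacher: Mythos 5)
Your proposal is correct and takes essentially the same route as the paper's own argument: the same Bender--Ron $\epsilon$-far instance, the same counting argument for $\epsilon$-farness, and the same key observation that a view obtained by querying at most $n-\epsilon Dn-2$ vertices (with only outgoing edges visible) can always be completed to a strongly connected $n$-vertex graph within the degree bound, so a one-sided-error tester must accept $G$ with probability $1$, contradicting soundness. Your explicit completion (rerouting the cycle through the untouched outer vertices and attaching each touched outer vertex via a private untouched cycle vertex) is just a more carefully worked-out variant of the completion the paper sketches, and, like the paper, you prove the bound for \emph{one-sided-error} testers, which is what the corollary is meant to assert even though its statement omits that qualifier.
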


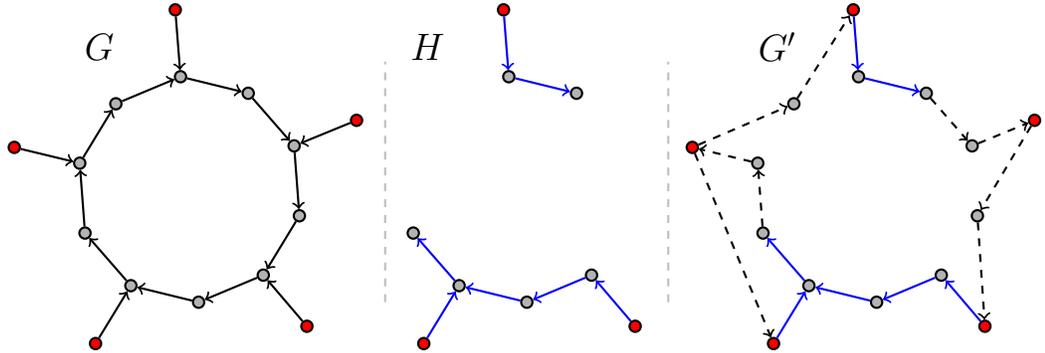
\begin{figure}[tb]
	\centering
	\begin{tikzpicture}[thick]


		\node at (-1.2,1.9) {\Large{$G$}};
		\foreach \i/\x/\y in {1/0.98/-1.14, 2/0.12/-1.50, 3/-0.78/-1.28, 4/-1.39/-0.58, 5/-1.46/0.35, 6/-0.98/1.14, 7/-0.12/1.50, 8/0.78/1.28, 9/1.39/0.58, 10/1.46/-0.35}
			\node (p\i) [draw=black,fill=gray70, minimum size = 0.15cm,inner sep = 0cm,circle] at (\x,\y) {};
		\foreach \i/\j in {1/2,2/3,3/4,4/5,5/6,6/7,7/8,8/9,9/10,10/1}
			\draw[->] (p\i) -- (p\j);

		\foreach \k/\x/\y in {1/1.56/-1.82, 2/-1.25/-2.05, 3/-2.33/0.56, 4/-0.19/2.39, 5/2.22/0.92}
			\node (q\k) [draw=black,fill=red, minimum size = 0.15cm,inner sep = 0cm,circle] at (\x,\y) {};

		\draw[->] (q1) -- (p1); \draw[->] (q2) -- (p3); \draw[->] (q3) -- (p5); \draw[->] (q4) -- (p7); \draw[->] (q5) -- (p9);
		
		\draw[dashed, color=lightgray] (2.6,-1.5) -- (2.6, 1.7);
	\end{tikzpicture}~~
	\begin{tikzpicture}[thick]
		\node at (-1.2,1.9) {\Large{$H$}};
		\foreach \i/\x/\y in {1/0.98/-1.14, 2/0.12/-1.50, 3/-0.78/-1.28, 4/-1.39/-0.58, 7/-0.12/1.50, 8/0.78/1.28}
			\node (p\i) [draw=black,fill=gray70, minimum size = 0.15cm,inner sep = 0cm,circle] at (\x,\y) {};

		\foreach \i/\j in {1/2,2/3,3/4,7/8}
				\draw[->,draw=blue] (p\i) -- (p\j);

		\foreach \k/\x/\y in {1/1.56/-1.82, 2/-1.25/-2.05, 4/-0.19/2.39}
			\node (q\k) [draw=black,fill=red, minimum size = 0.15cm,inner sep = 0cm,circle] at (\x,\y) {};

		\draw[->,draw=blue] (q1) -- (p1); \draw[->,draw=blue] (q2) -- (p3); \draw[->,draw=blue] (q4) -- (p7); 
		\draw[dashed, color=lightgray] (2.0,-1.5) -- (2.0, 1.7);
	\end{tikzpicture}~~
	\begin{tikzpicture}[thick]
		\node at (-1.2,1.9) {\Large{$G'$}};
		\foreach \i/\x/\y in {1/0.98/-1.14, 2/0.12/-1.50, 3/-0.78/-1.28, 4/-1.39/-0.58, 5/-1.46/0.35, 6/-0.98/1.14, 7/-0.12/1.50, 8/0.78/1.28, 9/1.39/0.58, 10/1.46/-0.35}
			\node (p\i) [draw=black,fill=gray70, minimum size = 0.15cm,inner sep = 0cm,circle] at (\x,\y) {};
		\foreach \i/\j in {4/5, 8/9}
			\draw[->, dashed] (p\i) -- (p\j);

		\foreach \i/\j in {1/2,2/3,3/4,7/8}
			\draw[->,draw=blue] (p\i) -- (p\j);

		\foreach \k/\x/\y in {1/1.56/-1.82, 2/-1.25/-2.05, 3/-2.33/0.56, 4/-0.19/2.39, 5/2.22/0.92}
			\node (q\k) [draw=black,fill=red, minimum size = 0.15cm,inner sep = 0cm,circle] at (\x,\y) {};

		\draw[->,draw=blue] (q1) -- (p1); \draw[->,draw=blue] (q2) -- (p3); \draw[->,draw=blue] (q4) -- (p7);  \draw[->, dashed] (q3) -- (p6); \draw[->, dashed] (p9) -- (q5); \draw[->, dashed] (p5) -- (q3); \draw[->, dashed] (p6) -- (q4); \draw[->, dashed] (q5) -- (p10); \draw[->, dashed] (p10) -- (q1); \draw[->, dashed] (q3) -- (q2);
	\end{tikzpicture}

\caption{Graph $G$, example for a subgraph $H$ explored by an algorithm and strongly connected graph $G'$ constructed from $H$}\label{abb_untereschranke}
\end{figure}

In the remainder of this section we will present a property testing algorithm for strong connectivity in directed graphs that has a two-sided error and a query complexity of $\Ovon(n^{1-\epsilon/(3+\alpha)})$; $\alpha>0$ is an arbitrarily small constant. We assume that both the in- and the outdegree of every vertex of the input graph is bounded by a constant $D\in\NN$; hence, the undirected degree of a vertex can at most be $2D$. The total number of edges is still at most $\epsilon Dn$, such that a graph is $\epsilon$-far from another graph, if more than $\epsilon Dn$ adjacency list entries of the first graph have to be modified to obtain the second one.

In the following we can assume $D\geq 2$, since for $D=1$ strong connectivity can be easily tested as follows: If $D=1$, every vertex has at most one incoming edge and one outgoing edge; every weakly connected component in such a graph is either an orientation of a circle or a line-shaped graph, and all the edges in each of these orientations have the same direction. The only strongly connected graph is a circle orientation with all $n$ vertices where all edges have the same direction. Every $\epsilon$-far graph consists of $\Omega(\epsilon n)$ weakly connected components, which means that there are also $\Omega(\epsilon n)$ weakly connected components that have at most $\frac{1}{\epsilon}$ vertices each -- a vertex sample of size $\Theta(1/\epsilon)$ contains one of them with high probability. By the above considerations, starting from an arbitrary vertex of  component, every single of these components can either be completely explored with $\frac{1}{\epsilon}$ queries (in case it is a circle) or a sink vertex can be found within $\frac{1}{\epsilon}$ queries. Hence, for $D=1$, strong connectivity can be tested with a query complexity of $\Theta(\frac{1}{\epsilon^2})$ with one-sided error.

\begin{corollary}
For $D=1$, strong connectivity of a directed graph with both vertex in- and outdegree bounded by $D\in\NN$ can be tested with $\Ovon(\frac{1}{\epsilon^2})$ queries.
\end{corollary}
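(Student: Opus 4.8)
The plan is to turn the argument sketched just before the statement into an explicit one‑sided‑error tester and to check its cost.

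\textbf{Step 1 (structure of the input).} When $D=1$ every vertex has at most one incoming and at most one outgoing edge, so each weakly connected component of $G$ is an orientation either of a simple cycle or of a path, and in both cases the edges are consistently oriented along the component (a mixed orientation would force some internal vertex to have in- or outdegree $2$). Hence a path component has a unique source endpoint (indegree $0$) and a unique sink endpoint (outdegree $0$), a cycle component has neither, and the only strongly connected $n$-vertex graph with $D=1$ is the single directed $n$-cycle. If $n=\Ovon(1/\epsilon)$ the tester can read all $\le n$ edges and decide exactly; so we may assume $n$ is large, say $n>2/\epsilon+2$, which will exceed the walk budget chosen in Step 3.

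\textbf{Step 2 (farness versus the number of components).} Suppose $G$ has $c$ weakly connected components $C_1,\dots,C_c$, fixed in some cyclic order. Redirecting exactly one edge inside each $C_i$ turns $G$ into the directed $n$-cycle: for a cyclic $C_i$, redirect one of its edges so that it points to the source endpoint of $C_{i+1}$ instead of to its successor (this opens the cycle and links it forward at once); for a path or isolated $C_i$, add the edge from its sink endpoint to the source endpoint of $C_{i+1}$. This changes only $c$ adjacency-list entries, so $G$ is $c/n$-close to strong connectivity; consequently, if $G$ is $\epsilon$-far then $c>\epsilon n$. Since $G$ has $n$ vertices, at least $c/2$ of the components have at most $2n/c<2/\epsilon$ vertices; call these the \emph{small} components, so the vertices lying in small components number at least $c/2>\epsilon n/2$.

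\textbf{Step 3 (the tester and its analysis).} Sample $\Theta(1/\epsilon)$ vertices of $G$ independently and uniformly at random, and from each sampled $v$ follow outgoing edges for up to $\lceil 2/\epsilon\rceil+1$ steps --- a single chain of queries, since the outdegree is at most $1$ --- recording the vertices visited; reject as soon as some walk reaches a vertex of outdegree $0$ or revisits an already-visited vertex, and otherwise accept. If $G$ is strongly connected it is the $n$-cycle, which has no vertex of outdegree $0$ and whose only cycle has length $n$, larger than the budget, so no walk triggers a rejection and the tester accepts with probability $1$ (one-sided error). If $G$ is $\epsilon$-far, a single sample lies in a small component with probability at least $\epsilon/2$, so with $\Theta(1/\epsilon)$ samples some sampled vertex lies in a small component with probability at least $2/3$; the walk started there explores that component entirely within the budget and therefore reaches its sink endpoint or closes its cycle. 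Reaching a sink $w$ proves $G$ is not strongly connected because $n\ge2$ and no vertex is reachable from $w$; closing a cycle of length $<n$ proves it because from a vertex of a cycle only the vertices of that cycle are reachable. Thus the tester rejects $\epsilon$-far inputs with probability at least $2/3$, and its query complexity is $\Theta(1/\epsilon)$ walks of $\Ovon(1/\epsilon)$ queries each, i.e.\ $\Ovon(1/\epsilon^2)$.

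The only part that goes beyond routine bookkeeping is Step 2, namely the bound $c>\epsilon n$ on the number of weakly connected components of an $\epsilon$-far graph; the ``open every cyclic component and chain all components into one cycle'' construction makes even this immediate, and I do not anticipate a genuine obstacle --- one only has to notice that a single redirected edge can simultaneously cut a cycle open and attach it to the next component, which can only decrease the edit count.
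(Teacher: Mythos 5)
Your proposal is correct and follows essentially the same route as the paper's own (sketched) argument: classify the $D=1$ components as consistently oriented paths or cycles, show an $\epsilon$-far graph has more than $\epsilon n$ weakly connected components (hence many of size $\Ovon(1/\epsilon)$), sample $\Theta(1/\epsilon)$ vertices and walk $\Ovon(1/\epsilon)$ steps from each to find a sink or a short cycle, giving one-sided error and $\Ovon(1/\epsilon^2)$ queries. Your explicit chaining construction merely fills in the component-count bound that the paper asserts without proof (just make sure to phrase the target of a redirected edge as the designated entry vertex of the next component, since a cyclic $C_{i+1}$ has no source endpoint before its own edge is redirected).
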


For larger values of $D$, testing for strong connectivity turns out to be much harder; one main obstacle is that one cannot rely on identifying whitnesses against strong connectivity for $\epsilon$-far graphs: Such whitnesses would either be sink or source components of the input graph, and while it is easy to identify a sink component if the input graph contains many of them, a source component can never be identified without knowing the whole remaining graph, since it cannot be ruled out that an explored area has incoming edges. On the other hand, Bender and Ron have shown that the total number of dead ends in an $\epsilon$-far graph is large:

\begin{lemma}[\cite{BR02}]\label{lemma_deadendzahl}
Let $G$ be a directed graph with $n$ vertices and both its vertex in- and outdegree bounded by $D\in\NN$ and let $\epsilon<1$ be a proximity parameter. If $G$ is $\epsilon$-far from strongly connected, then $G$ has more than $\frac{1}{3}\epsilon Dn$ dead ends.
\end{lemma}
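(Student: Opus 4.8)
The plan is to argue by contradiction: assume $G$ is $\epsilon$-far from strongly connected but has at most $\frac13\epsilon D n$ dead ends, and then exhibit a strongly connected graph $G'$ obtained from $G$ by at most $\epsilon D n$ modifications of adjacency-list entries, contradicting $\epsilon$-farness. First I would pass to the condensation $\hat G$ of $G$, the DAG whose vertices are the strongly connected components of $G$; the source components of $G$ are exactly the sources of $\hat G$ and the sink components are exactly its sinks, so the dead ends of $G$ are precisely the source-SCCs, the sink-SCCs, and the isolated SCCs (those that are simultaneously a source and a sink). Writing $a$, $b$, $c$ for the numbers of source-only, sink-only and isolated SCCs, the number of dead ends is $a+b+c$, while $\hat G$ has $p:=a+c$ sources and $q:=b+c$ sinks.

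Next I would invoke the classical fact (Eswaran--Tarjan) that a DAG with $p$ sources and $q$ sinks can be turned into a strongly connected digraph by adding $\max(p,q)$ arcs, routed so that the sink-SCCs and source-SCCs are threaded onto a single directed cycle. Lifting this back to $G$: each such arc becomes an arc from a vertex of some sink component to a vertex of some source component, and provided every SCC stays internally strongly connected and $\hat G$ becomes strongly connected, the resulting $G'$ is strongly connected. The only real issue is the degree bound: to add an arc into a source component $S$ we need a vertex of $S$ of indegree below $D$, and to add an arc out of a sink component $T$ a vertex of $T$ of outdegree below $D$. Here I would use the observation that, since a source component has all its incoming arcs internal, $\sum_{v\in S}\indeg(v)$ equals the number of arcs inside $S$, which is at most $D|S|$; equality would force every vertex of $S$ to have indegree and outdegree exactly $D$ with all arcs internal, i.e.\ $S$ would be an isolated SCC. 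So every source component that is not an isolated SCC contains a vertex of indegree $<D$, and symmetrically for sink components; attaching the new arcs at such vertices costs one modification each.

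For an isolated SCC $C$ with at least two vertices the component is ``full'', so here I would do a small surgery: delete one internal arc $u\to v$ of $C$, then route the global cycle so that it enters $C$ at $v$ and leaves $C$ at $u$ (one incoming and one outgoing arc, which now fit). The deletion is harmless inside $C$ because a simple directed path from $v$ to $u$ can never traverse the arc $u\to v$, so in $C\setminus\{u\to v\}$ the vertex $v$ still reaches every vertex of $C$ and every vertex of $C$ still reaches $u$; a singleton isolated SCC trivially has room for one incoming and one outgoing arc. Summing up, the construction uses $\max(p,q)$ new arcs plus at most one internal deletion per full isolated SCC, hence at most $\max(p,q)+c=\max(a,b)+2c\le 3(a+b+c)$ modifications, i.e.\ at most $3\cdot\frac13\epsilon D n=\epsilon D n$ in total, so $G$ would fail to be $\epsilon$-far --- a contradiction. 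I expect the degree-bound bookkeeping just described --- the ``full component is an isolated SCC'' dichotomy together with the surgery that preserves within-component reachability --- to be the main obstacle; the condensation, the Eswaran--Tarjan step, and the counting are routine.
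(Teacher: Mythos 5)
The paper itself gives no proof of this lemma (it is quoted from Bender and Ron \cite{BR02}), so your argument has to stand on its own. Much of it does: the passage to the condensation, the identification of dead ends with source-, sink- and isolated SCCs, the use of Eswaran--Tarjan, the observation that a source component with no free in-slot (i.e.\ $\sum_{v\in S}\indeg(v)=D|S|$) must in fact be an isolated SCC, and the single-arc-deletion surgery with the simple-path argument are all correct, and you are right that the repaired graph must again obey the degree bound $D$, which is exactly why this bookkeeping is needed.

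The genuine gap is in the unbalanced case $p\neq q$, i.e.\ when the numbers of source and sink components of the condensation differ. If all $\max(p,q)$ new arcs go from sink components to source components, as you stipulate, then some component is necessarily incident to more than one of them: for $q>p$ the $q$ arc heads must be spread over only $p$ source components (in the Eswaran--Tarjan construction the $q-p$ surplus sinks all point into a single source component), and symmetrically for $p>q$ some sink component must emit at least two arcs. Your capacity argument guarantees only \emph{one} spare in-slot per non-isolated source component and \emph{one} spare out-slot per non-isolated sink component, and the surgery on a full isolated SCC likewise frees exactly one slot of each kind; these guaranteed slots are already consumed by the $\min(p,q)$ ``cycle'' arcs, so the remaining $|p-q|$ arcs have no attachment points in your accounting. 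Closing this requires an additional idea: either show that extra spare capacity exists at vertices that still reach (resp.\ are reachable from) the core and reroute the surplus arcs there, or create slots by further deletions --- and note that your simple-path argument, as stated, covers only a single deletion per strongly connected component, so a multi-deletion variant (with a proof that connectivity can be re-closed through the newly added arcs within the modification budget) would have to be supplied. Only the balanced case $p=q$ is actually covered by the proposal as written; the overall counting $\max(p,q)+c\le 2(a+b+c)\le\epsilon Dn$ is fine and leaves room for such a repair, but the repair itself is the missing piece.
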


Input graphs that contain many small sink components can be handled in the same way as in the property testing algorithm of Bender and Ron for the model where incomong edges are visible: It suffices to sample $\Ovon(1/\epsilon)$ vertices, explore a small area of size $\Ovon(1/\epsilon)$ vertices around them using breadth first search and reject the input graph, if for one of the sample vertices there are no outgoing edges left during the breadth first search traversal. It can be shown that, if $G$ has many sink components, then $G$ also has many sink components of size $\Ovon(1/\epsilon)$, and hence the above approach will identify one of them with high probability.

\begin{Algorithm}[tb]
\noindent\centering
\shadowbox{
	\begin{minipage}{8cm}
		\begin{tabbing}
			~~~~\= ~~~~\= ~~~~\= ~~~~ \= ~~~~ \= \kill
			\textsc{\textbf{TestSinkFreeness}}($n,G,\beta$)\\
			\> Sample $s_{\ref{alg_senkentest}}=\frac{4}{\beta D}$ vertices of $G$ u.i.d. at random\\
			\> \textbf{Foreach} sampled vertex $v$ {\bf do}\\
			\> \> Start a breadth first traversal at $v$ which stops after exploring $\frac{2}{\beta D}$ vertices\\
			\> \> \textbf{If} the breadth first traverdal completely explores a sink component\\
			\> \> \> \> \textbf{then return false}\\
			\> \textbf{return true}
		\end{tabbing}
	\end{minipage}
}
\caption{{\sc TestSinkFreeness}}\label{alg_senkentest}
\end{Algorithm}

\begin{lemma}\label{lemma_senkentest}
Let $G$ be a directed graph with $n$ vertices both its vertex in- and outdegree bounded by $D\in\NN$ and let $\beta<1$ be a proximity parameter.  If $G$ does not have any sink components, {\sc TestSinkFreeness}($n,G,\beta$) always returns \emph{true}; if $G$ has at least $\beta Dn$ sink components, {\sc TestSinkFreeness}($n,G,\beta$) returns \emph{false} with a probability of at least $1-e^{-2}$. The query complexity and the running time of the algorithm both are $\Ovon\left(\frac{1}{\beta^2 D^2}\right)$.
\end{lemma}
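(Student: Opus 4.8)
The plan is to treat the two implications separately and then read the complexity bound off the parameters.

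First I would dispose of the case that $G$ has no sink components. The algorithm {\sc TestSinkFreeness} returns \emph{false} only when one of its breadth first traversals has completely explored a sink component; if no sink component exists this can never happen, so the algorithm returns \emph{true} with probability $1$, and no probabilistic reasoning is needed here.

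Now suppose $G$ has at least $\beta Dn$ sink components. The core of the argument is a counting step: sink components are pairwise vertex-disjoint and $G$ has only $n$ vertices, so fewer than $\tfrac{1}{2}\beta Dn$ of them can contain more than $\tfrac{2}{\beta D}$ vertices. Hence at least $\tfrac{1}{2}\beta Dn$ of the sink components are ``small'', having at most $\tfrac{2}{\beta D}$ vertices each; since they are disjoint and nonempty, their union contains at least $\tfrac{1}{2}\beta Dn$ vertices. A single uniformly random vertex therefore lies in a small sink component with probability at least $\tfrac{1}{2}\beta D$, so the probability that none of the $s_{\ref{alg_senkentest}}=\tfrac{4}{\beta D}$ independently chosen sample vertices does is at most $(1-\tfrac{1}{2}\beta D)^{4/(\beta D)}\le e^{-2}$. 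It then remains to note that a sample vertex $v$ lying in a small sink component $U$ forces rejection: since $U$ has no outgoing edges the traversal from $v$ never leaves $U$, since $U$ is strongly connected it reaches all of $U$, and since $|U|\le\tfrac{2}{\beta D}$ it terminates within the exploration budget having fully explored the sink component $U$. Combining the two bounds, {\sc TestSinkFreeness} returns \emph{false} with probability at least $1-e^{-2}$.

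Finally, the query complexity and running time are $\Ovon(1/(\beta^2 D^2))$, which follows directly from the sample size $s_{\ref{alg_senkentest}}=\Ovon(1/(\beta D))$ and the fact that each breadth first traversal explores only $\Ovon(1/(\beta D))$ vertices. The only step with any real content is the middle one, where one must see simultaneously that small sink components are plentiful enough to be hit by a uniform sample of size $\Ovon(1/(\beta D))$ and that, once hit, such a component is entirely contained inside a budget-limited outgoing-edge traversal; the component count and the bookkeeping of queries are routine.
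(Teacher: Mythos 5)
Your proposal is correct and follows essentially the same route as the paper: the one-sidedness for sink-free graphs, the counting argument showing at least $\frac{1}{2}\beta Dn$ sink components of size at most $\frac{2}{\beta D}$, the bound $(1-\frac{1}{2}\beta D)^{4/(\beta D)}\le e^{-2}$ on missing all of them, and the observation that a hit forces complete exploration within the budget. You even spell out slightly more carefully than the paper why a traversal started inside a small sink component must fully explore it, but this is the same argument.
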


\begin{proof}
If $G$ does not have any sink components, the condition queried by the if statement in the second but last line of {\sc TestSinkFreeness} can never turn out to be true; hence the algorithm alway accepts the input.

Now assume that $G$ contains at least $\beta Dn$ sink components. Since $G$ has $n$ vertices Knoten and each of them belongs to at most one sink component, at least $\frac{1}{2}\beta Dn$ of the sink components have $\frac{2}{\beta D}$ or less vertices. If a vertex $v$ of such a sink component is sampled, then the breadth first search that starts at $v$ completely explores the sink component. Hence the algorithm correctly returns  \emph{false}. The probability that none of the $s_{\ref{alg_senkentest}}$ sample vertices lie in one of the small sink components is at most $(1-\frac{1}{2}\beta D)^{s_{\ref{alg_senkentest}}} \leq e^{-2} $.

Running time and query complexity of the algorithm arise from the number $s_{\ref{alg_senkentest}}$ of sample vertices taken mulitplied by the maximum number of vertices explored by the breadth first search.
\end{proof}

As discussed above, source components that are not sink components at the same time are much harder to identify as sink components. Indeed, corollary \ref{corollary_connectivityonesidedlowerbound} implies that they cannot be directly identified (for example by a breadth first search) in sublinear time. However, there may be graphs that are $\epsilon$-far from strong connectivity and that have very few sink components, and a property testing algorithm for strong connectivity has to identify such graphs. Hence our algorithm will identify the presence of many source components indirectly: We will at first assume that the input graph $G$ has many small source components and that each of them contains of exactly one vertex; we will show that in this case there is statistical evidence for $G$ not being strongly connected. In the next step we will give a reduction that is locally constructible -- that is, given $G$, the reduced graph can be sampled from and explored with a constant number of queries in $G$ for each query to the reduced graph-- and that converts a graph with many constant-size source components into a graph that has equally many source components with a size of one each.

\begin{Algorithm}[tb]
\noindent\centering\shadowbox{
	\begin{minipage}{8cm}
		\begin{tabbing}
			~~~~\= ~~~~\= ~~~~\= ~~~~\= ~~~~ \= \kill
			\textsc{\textbf{EstimateReachableVertices}}($n,G,D,\epsilon$)\\
			\> \textbf{for} $i\leftarrow D$ \textbf{downto} $1$ \textbf{do}\\
			\> \> Sample each edge of $G$ with a probability of $p_i=a_i n^{-1/i}$,\\
			\> \> \textbf{If} more than $t_i=16Da_i n^{1-1/i}$ edges are sampled \textbf{then return} n\\
			\> \> $\hat c_i\leftarrow$ number of vertices for which exactly $i$ incoming edges have been sampled\\
   			\> \> $\hat n_i \leftarrow \frac{\hat c_i}{p_i^i} - \sum_{i<j\leq d}{j\choose i} \hat n_j (1-p_i)^{j-i}$\\
			\> {\bf return} $\hat m = \frac{\epsilon Dn}{16} + \sum_{1\leq i\leq D}\hat n_i$
		\end{tabbing}
	\end{minipage}
}
\caption{EstimateReachableVertices}\label{alg_erreichbareknoten}
\end{Algorithm}

Hence we reduce the problem of testing a graph for strong connectivity to the problem of approximating the number of vertices that have no incoming edge. We solve the latter problem by computing collision statistics for $i$-way collisions of common target vertices of a set of sampled edges, for $i=1,\ldots ,D$; combining these statistics allows us to measure the number of vertices that do not have an incoming edge and thus cannot appear in any of the collision statistics, affecting their outcomes significantly if there are sufficiently many such vertices. At this moment, we only state the following lemma, the proof will be given at the end of the section\footnote{In \cite{RRSS09}, section $4$, footnote $4$, a similar approach for approximating a distribution with several $i$-way collision statistics as in \textsc{EstimateReachableVertices} is mentioned, but its correctness is not proved there.}:

\begin{lemma}\label{lemma_erreichbareknoten}
Let $G$ be a directed graph with $n$ vertices both its vertex in- and outdegree bounded by $D\in\NN$ and let $\epsilon<1$ be a proximity parameter. Let $m$ be the number of vertices of $G$ that have at least one incoming edge. Then, {\sc EstimateReachableVertices}($n,G,\epsilon$) returns an estimate $\hat m$ for which $$m\leq \hat m\leq m+\frac{\epsilon Dn}{8} $$ holds with a probability of at least $\frac{3}{4}$. The algorithm needs at most $\Ovon\left(\frac{D^{8D+15}\log D}{\epsilon^3}n^{1-1/D}\right)$ queries to $G$.
\end{lemma}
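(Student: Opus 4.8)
\emph{Set-up.} The plan is to estimate, for each $i\in\{1,\dots,D\}$, the number $n_i:=\#\{v\in V:\indeg(v)=i\}$ of vertices of indegree exactly $i$, and then to output $\tfrac{\epsilon Dn}{16}+\sum_{i=1}^D\hat n_i$; note $m=\sum_{i=1}^D n_i$. As prescribed in Section~2, I would treat the abort events ``more than $t_i$ edges sampled in round $i$'' separately: the expected number of edges sampled in round $i$ is at most $p_i\cdot Dn=a_i Dn^{1-1/i}$, so Markov's inequality bounds each of the $D$ abort probabilities by a small constant, a union bound folds this into the overall failure probability, and on the complementary event each edge of $G$ may be assumed to be present in the round-$i$ sample independently with probability $p_i$.

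\emph{Expectations and the noiseless recursion.} Fix a round $i$. A vertex $v$ with $\indeg(v)=j\ge i$ has exactly $i$ of its incoming edges sampled with probability $\binom{j}{i}p_i^i(1-p_i)^{j-i}$, and --- the point on which everything hinges --- these events are mutually independent over $v$, because distinct vertices have disjoint sets of incoming edges. Hence $\hat c_i$ is a sum of at most $n$ independent indicators, with
\[
\mu_i:=\E[\hat c_i]=p_i^i\sum_{j=i}^{D}\binom{j}{i}n_j(1-p_i)^{j-i},\qquad \frac{\mu_i}{p_i^i}=n_i+\sum_{j>i}\binom{j}{i}(1-p_i)^{j-i}n_j .
\]
Replacing $\hat c_i$ by its mean $\mu_i$ in the algorithm's recursion therefore returns exactly $n_i$, by downward induction on $i$: the base case $i=D$ is immediate since there $j$ is forced to equal $D$ and $\mu_D/p_D^D=n_D$, and the inductive step follows by substituting the hypothesis $\hat n_j=n_j$ ($j>i$) into the displayed identity. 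Observe also that $\mu_i/p_i^i\le\binom{D}{i}\sum_j n_j\le 2^Dn$ while $1/p_i^i=n/a_i^i$, so the $i$-way collision counts are small quantities blown up by $n/a_i^i$, exactly as in the $3$-star tester.

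\emph{Error analysis.} Put $\Delta_i:=\hat n_i-n_i$ and $\eta_i:=(\hat c_i-\mu_i)/p_i^i$. Subtracting the noiseless recursion from the one actually computed gives $\Delta_i=\eta_i-\sum_{j>i}\binom{j}{i}(1-p_i)^{j-i}\Delta_j$, hence $|\Delta_i|\le|\eta_i|+\sum_{j>i}\binom{j}{i}|\Delta_j|$. Unrolling this downward recursion bounds $|\Delta_i|$ by a nonnegative combination $\sum_{j\ge i}\gamma_{ij}|\eta_j|$, where the coefficients $\gamma_{ij}$ --- sums over increasing index-chains $i<j_1<\dots$, each weighted by products of binomial coefficients --- are at most a constant $C(D)$ depending only on $D$ (crudely $C(D)\le 2^{O(D^2)}$). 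Thus $\sum_i|\Delta_i|\le C(D)\max_i|\eta_i|$. To control $|\eta_i|$, a Chernoff bound for the sum $\hat c_i$ of at most $n$ independent indicators, in its combined additive/multiplicative form, gives $|\hat c_i-\mu_i|\le\delta\mu_i+O(\log(1/\gamma))$ with probability $1-\gamma$; dividing by $p_i^i$ and using $\mu_i/p_i^i\le 2^Dn$ and $1/p_i^i=n/a_i^i$ yields $|\eta_i|\le\delta\,2^Dn+O(\log(1/\gamma))\,n/a_i^i$. Choosing the sampling constants $a_i$ large enough (a fixed polynomial in $1/\epsilon$ and $C(D)$, up to the logarithmic factor) and $\delta,\gamma$ correspondingly so that $|\eta_i|\le\epsilon Dn/(16\,C(D))$ fails with probability $o(1/D)$ per round, a union bound over the $D$ rounds and the abort events keeps total failure below $\tfrac14$; on the good event $\big|\sum_i\Delta_i\big|\le\sum_i|\Delta_i|\le\epsilon Dn/16$, so $\hat m=\tfrac{\epsilon Dn}{16}+m+\sum_i\Delta_i\in[m,\ m+\tfrac{\epsilon Dn}{8}]$.

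\emph{Query complexity and the main obstacle.} Round $i$ makes at most $t_i=16Da_i n^{1-1/i}$ adjacency-list queries (and $\hat c_i$ is read off by bucketing the sampled edges by target), so the total is dominated by round $i=D$ and is $O(D a_D\,n^{1-1/D})$; substituting the value of $a_D$ that the concentration bounds of the previous paragraph force --- a quantity of the form $D^{\Theta(D)}/\epsilon^{3}$ up to logarithmic factors --- yields the claimed $\Ovon\!\big(\tfrac{D^{8D+15}\log D}{\epsilon^{3}}\,n^{1-1/D}\big)$. The main obstacle is precisely the error-propagation step: peeling off the higher-indegree contributions amplifies each $|\eta_j|$ by the $D$-dependent factor $C(D)$, and it is this amplification that dictates how large the $a_i$ --- equivalently, the expected collision counts $\mu_i$ --- must be in order to keep $\sum_i|\Delta_i|$ below $\tfrac{\epsilon Dn}{16}$ while leaving $a_i$, and hence the query complexity, bounded as above; pinning down a sufficiently good bound on $C(D)$ and then balancing $\delta$, $\gamma$ and the $a_i$ against it is the real content of the argument.
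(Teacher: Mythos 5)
Your proposal is correct and is essentially the paper's own argument: the same downward recursion $\hat n_i=\hat c_i/p_i^i-\sum_{j>i}\binom{j}{i}\hat n_j(1-p_i)^{j-i}$, the same key observation that the per-vertex collision indicators are independent because distinct vertices have disjoint sets of incoming edges, Markov plus a union bound for the abort events, and a query count dominated by the round $i=D$ sample of size $\Theta(Da_Dn^{1-1/D})$. Where you genuinely deviate is the concentration and error-propagation bookkeeping: the paper splits $\hat c_i$ into per-indegree-class counts $Y_{i,j}$, shows $|Y_{i,j}-\E[Y_{i,j}]|\le\delta_i$ for every pair $(i,j)$ by a case distinction (Markov's inequality when $n_j$ is small, a multiplicative Chernoff bound when it is large), and tunes the per-level deviations $\delta_i=\epsilon a^D/(2^{i+4}D^{2i-1})$ so that the induction yields $|\hat n_i-n_i|\le\epsilon n/(2^{i+3}D^{2i-2})$ and hence $\sum_i|\hat n_i-n_i|\le\epsilon Dn/16$ directly; you instead hit the aggregate $\hat c_i$ with a single Bernstein-type bound and absorb the propagation into a crude chain constant $C(D)=2^{O(D^2)}$, which avoids the small/large case split at the price of worse explicit constants. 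Two details need tightening to land exactly on the stated complexity: the additive term in your combined Chernoff form is $O(\log(1/\gamma)/\delta)$ rather than $O(\log(1/\gamma))$ (harmless, since your $\delta$ depends only on $\epsilon$ and $D$), and the largeness requirement your analysis actually produces is on $a_i^i$, so you should set $a_i=a^{D/i}$ as the paper does rather than take every $a_i$ itself polynomial in $C(D)$ and $1/\epsilon$ --- otherwise the $D$-dependence of the query bound degrades from $D^{8D+15}\log D$ to $2^{O(D^2)}$, although the dependence on $n$ and $\epsilon$ is unaffected.
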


The basic proof strategy is to show by induction that, for appropriate choice of the $a_i$, the estimators $\hat n_i$ approximate the number of vertices with exactly $i$ incoming edges with high probability (see algorithm \ref{alg_erreichbareknoten}). Then, by the Union Bound, $\hat m$ satisfies the constraints given in Lemma \ref{lemma_erreichbareknoten} with high probability.

We will now introduce the reduction function $C$. For this purpose, we need the notion of a \emph{compact component}; the reduction will create a graph in which every compact component in the original graph is contracted into a single vertex:

\begin{figure}
	\centering
	\begin{tikzpicture}[thick]
		\node at (-.3,2) {\Large{(a)}};

		\fill[rotate=38, color=blue!20] (.95,-.05) ellipse (1.5cm and 0.99cm);
		\node[color=blue!60] at (-.7,1) {\Large{$U_1$}};

		\fill[rotate=-116.5, color=red!20] (-1.9,2.7) ellipse (0.9cm and 0.33cm);
		\node[color=red!60] at (2.7,1) {\Large{$U_2$}};

		\foreach \i/\x/\y in {1/0/-.5, 2/1/0, 3/0/1, 4/2/1, 5/-1.5/0, 6/2/-1, 7/3/0, 8/3.5/1}
			\node (p\i) [draw=black,fill=gray70, minimum size = 0.15cm,inner sep = 0cm,circle] at (\x,\y) {};


		\foreach \i/\j in {1/2, 2/4, 4/3, 3/5, 3/1, 1/5, 2/6, 1/6, 6/7, 4/7, 7/8, 8/7}
			\draw[->] (p\i) -- (p\j);
	
		\draw[dashed, color=lightgray] (4.2,-1.2) -- (4.2, 1.7);
	\end{tikzpicture}~~~
	\begin{tikzpicture}[thick]
		\node at (-.3,2) {\Large{(b)}};
		\fill[rotate=38, color=blue!20] (.95,-.05) ellipse (1.5cm and 0.99cm);
		\node[color=blue!60] at (-.4,1.1) {\Large{$U$}};

		\foreach \i/\x/\y in {2/1/0, 3/0/1, 5/-1.5/0, 7/3/0}
			\node (p\i) [draw=black,fill=gray70, minimum size = 0.15cm,inner sep = 0cm,circle] at (\x,\y) {};

		\foreach \i/\x/\y in {1/0/-.5, 4/2/1, 6/2/-1}
			\node (p\i) [draw=black,fill=red, minimum size = 0.15cm,inner sep = 0cm,circle] at (\x,\y) {};
		\node[color=black] at (0,-.9) {\Large{$u_2$}};		
		\node[color=black] at (2.45,1) {\Large{$u_1$}};
		\node[color=black] at (2.45,-1) {\Large{$v$}};

		\foreach \i/\j in {1/2, 2/4, 4/3, 3/5, 3/1, 1/5, 6/2}
			\draw[->] (p\i) -- (p\j);
		
		\foreach \i/\j in {6/1, 7/6, 4/7}
			\draw[->, dashed] (p\i) -- (p\j);
	\end{tikzpicture}
	\caption{(a) Compact components $U_1$, $U_2$; (b) for $\frac{3+\alpha}{\epsilon D}= 5$, $U$ is no compact component, since the highlighted path violates the third condition.}\label{abb_kompkomp}
\end{figure}
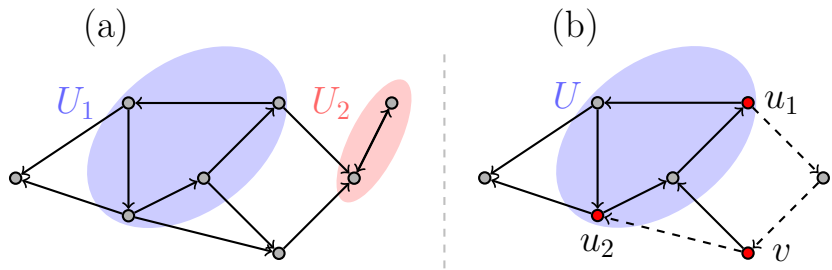

\begin{definition}\label{def_kompaktekomp}
Let $G$ be a directed graph with $n$ vertices both its vertex in- and outdegree bounded by $D\in\NN$ and let $\epsilon<1$ be a proximity parameter; let $\alpha>0$ be a fixed, but arbitrary constant. We call a set of vertices $U\subseteq V$ \emph{compact component} if the following three conditions hold:
\begin{enumerate}
	\item $|U|\leq \frac{3+\alpha}{\epsilon D}$;
	\item The subgraph of $G$ that is induced by $U$ is strongly connected;
	\item There are no vertices $v\in V\backslash U$ and $u_1, u_2\in U$ such that there are paths from $u_1$ tp $v$ and from $v$ to $u_2$, each having a length of at most $\frac{3+\alpha}{\epsilon D}$.
\end{enumerate}
\end{definition}

The following Lemma shows that the compact component of every vertex of a graph is unique:

\begin{lemma}\label{lemma_kompkompeindeutig}
Let $G=(V,E)$, $D$, $\epsilon$ and $\alpha$ be defined as in definition \ref{def_kompaktekomp}. Then, every vertex of $G$ belongs to at most one compact component.
\end{lemma}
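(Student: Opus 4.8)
The plan is to prove uniqueness by contradiction. Suppose a vertex $w$ belonged to two \emph{distinct} compact components $U_1$ and $U_2$. Since $U_1\neq U_2$, at least one of the differences $U_1\setminus U_2$, $U_2\setminus U_1$ is non-empty, and by the symmetry of the two components' roles I may assume there is a vertex $v\in U_1\setminus U_2$; note $v\neq w$, since $w\in U_2$ but $v\notin U_2$.

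The heart of the argument is that conditions (1) and (2) of Definition~\ref{def_kompaktekomp} together force the existence of \emph{short} paths between any two vertices of a compact component. Concretely: since the subgraph of $G$ induced by $U_1$ is strongly connected, there is a simple directed path from $w$ to $v$ and a simple directed path from $v$ to $w$, both staying inside $U_1$. A simple path inside $U_1$ visits at most $|U_1|$ vertices and hence uses at most $|U_1|-1$ edges, which by condition (1) is at most $\frac{3+\alpha}{\epsilon D}-1<\frac{3+\alpha}{\epsilon D}$. I would then invoke condition (3) for $U_2$ with the choices $u_1=u_2=w\in U_2$ and $v\in V\setminus U_2$: the two paths just exhibited are a path from $u_1$ to $v$ and a path from $v$ to $u_2$, each of length at most $\frac{3+\alpha}{\epsilon D}$. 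This directly contradicts condition (3) for $U_2$, so $U_1=U_2$.

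I do not anticipate a real obstacle here; the proof is essentially a one-line application of condition (3) once the short paths inside $U_1$ are observed. The only points needing a moment's care are: (i) noting that $v$ and $w$ are genuinely different vertices, so the paths used are non-trivial; (ii) verifying that a simple path inside a compact component has at most $\frac{3+\alpha}{\epsilon D}$ edges, which is immediate from condition (1); and (iii) the symmetry remark, which lets the single case $v\in U_1\setminus U_2$ also cover the situation $U_1\subsetneq U_2$ (by exchanging the names of $U_1$ and $U_2$ and arguing with a vertex of $U_2\setminus U_1$).
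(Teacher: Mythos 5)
Your proposal is correct and follows essentially the same argument as the paper: take a vertex in the symmetric difference of the two components, use strong connectivity and the size bound of one component to produce short paths in both directions, and contradict condition (3) for the other component. The extra care you take with the path-length bound (a simple path in $U_1$ has at most $|U_1|-1$ edges) is a welcome refinement of a detail the paper leaves implicit.
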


\begin{proof}
Let $v\in V$ be an arbitrary vertex. Assume that $v$ belongs to two distinct compact components $U$ and $W$. Since $U\neq W$, there is at least one vertex that is in one of the components, but not in the other; let, without loss of generality, $u\in U\backslash W$ be such a vertex. Since $U$ is a compact component, $U$ is strongly connected and contains at most $\frac{3+\alpha}{\epsilon D}$ vertices. Hence, since $v,u\in U$, there are paths from $v$ to $u$ and from $u$ to $v$, each having a length of at most $\frac{3+\alpha}{\epsilon D}$. This is, however, a contradiction to the assumption that $W$ is a compact component: Since $v\in W$ and $u\notin W$, the existence of the above paths violates the third condition for compact components. Hence there are not two distinct compact components $v$ belongs to.
\end{proof}

There may be vertices of a graph that do not belong to any compact component by the above definition. In the following we will assume that such vertices form their own compact components. Together with the above Lemma this means that every vertex of a graph can be mapped to a unique compact component:

\begin{definition}
Let $G=(V,E)$, $D$, $\epsilon$ and $\alpha$ be defined as in definition \ref{def_kompaktekomp}. Then, for a vertex $v\in V$, $C(v)$ denotes the compact component that $v$ belongs to, if there exists such a component, or $C(v)=\{v\}$ if $v$ does not belong to a compact component.

Let additionally $C(G)$ be the graph that results from contracting every compact component of $G$.
\end{definition}

Next we show that every vertex of a compact component has the same compact component.

\begin{lemma}
Let $G=(V,E)$, $D$, $\epsilon$ and $\alpha$ be defined as in definition \ref{def_kompaktekomp}. Then, $C(u)=C(v)$ for all vertices $v\in V$ and $u\in C(v)$.
\end{lemma}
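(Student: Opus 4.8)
The plan is to reduce the statement immediately to Lemma~\ref{lemma_kompkompeindeutig} by a short case distinction on whether $v$ itself lies in a genuine compact component.

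First I would treat the trivial case: if $v$ does not belong to any compact component, then by definition $C(v)=\{v\}$, so the only vertex $u\in C(v)$ is $v$ itself, and $C(u)=C(v)$ holds tautologically.

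In the remaining case $v$ belongs to a compact component; call it $U$, so that $C(v)=U$ by definition (the component is unique by Lemma~\ref{lemma_kompkompeindeutig}, so $C(v)$ is well defined). Now fix any $u\in C(v)=U$. Then $u$ is a vertex that belongs to the compact component $U$, so $u$ \emph{does} belong to some compact component, and hence $C(u)$ is, by definition, the compact component containing $u$. Applying Lemma~\ref{lemma_kompkompeindeutig} once more, $u$ belongs to at most one compact component, so this component must be $U$ itself; thus $C(u)=U=C(v)$, as claimed.

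The only point requiring any care is the bookkeeping in the definition of $C(\cdot)$: one must observe that membership of $u$ in the compact component $U$ of $v$ is exactly the hypothesis ``$u$ belongs to a compact component'' needed to invoke the first branch of the definition of $C(u)$, rather than the fallback $C(u)=\{u\}$. Beyond that, there is no real obstacle — the claim is essentially a restatement of the uniqueness of compact components established in Lemma~\ref{lemma_kompkompeindeutig}.
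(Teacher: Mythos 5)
Your argument is correct, and it takes a genuinely shorter route than the paper's. The paper does not simply invoke Lemma~\ref{lemma_kompkompeindeutig}: it assumes $|C(v)|>1$ and $C(u)\neq C(v)$, picks a vertex $w$ in the symmetric difference, and in two cases ($w\in C(u)\setminus C(v)$ and $w\in C(v)\setminus C(u)$) constructs paths of length at most $\frac{3+\alpha}{\epsilon D}$ inside the small strongly connected set containing the relevant pair of vertices, contradicting the third condition of Definition~\ref{def_kompaktekomp} for the other component --- in effect replaying the proof of Lemma~\ref{lemma_kompkompeindeutig} rather than citing it. You instead observe that once $C(\cdot)$ is well defined, the membership $u\in C(v)$ with $C(v)$ a genuine compact component already means that $u$ belongs to a compact component, so the fallback branch $C(u)=\{u\}$ cannot apply, and uniqueness then forces $C(u)=C(v)$; your trivial branch coincides with the paper's $C(v)=\{v\}$ case. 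The one point your route leans on, and which you correctly flag, is the exact wording of the definition of $C$ (that $C(u)$ is \emph{the} compact component containing $u$ whenever one exists); granting that, the lemma really is an immediate corollary of Lemma~\ref{lemma_kompkompeindeutig}, and your proof is a clean simplification, while the paper's longer case analysis buys only independence from that wording.
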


\begin{proof}
We assume $|C(v)|>1$, since otherwise the Lemma holds trivially because of $C(v)=\{v\}$. Furthermore we assume that there os a vertex $u\in C(v)$ such that $C(u)\neq C(v)$; i.e., there exists a vertex that is only contained in either in $C(v)$ or in $C(u)$.

First assume that there exists $w\in C(u)\backslash C(v)$: Then, analogously to the proof of Lemma \ref{lemma_kompkompeindeutig}, there exist paths from $u$ to $w$ and from $w$ to $u$ of length at most $\frac{3+\alpha}{\epsilon D}$ each, and, since $v\in C(v)$ but $w\notin C(v)$, this violates the third condition for compact components for $C(v)$ -- a contradiction to the assumption that $C(v)$ is a compact component.

Now assume that there exists $w\in C(v)\backslash C(u)$ gibt. Analogously to the first case this implies the existence of short paths from $v$ to $w$ and from $w$ to $v$, contradicting with the third condition for compact components for $C(u)$.
\end{proof}

By the above Lemma, the mapping $C$ is well-defined. The fundamental property of this mapping is that it maps every small source component of a graph to a compact component; in $C(G)$, each of these compact components gets contracted to a single vertex, such that $C(G)$ contains many vertices with an indegree of $0$ if $G$ contains many small source components.

\begin{observation}
Let $G=(V,E)$, $D$, $\epsilon$ and $\alpha$ be defined as in definition \ref{def_kompaktekomp} and let $U\subseteq V$ be a source component that contains at most $\frac{3+\alpha}{\epsilon D}$ vertices. Then, $U$ is a compact component.
\end{observation}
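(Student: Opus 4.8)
The plan is to verify the three defining conditions of a compact component (Definition \ref{def_kompaktekomp}) one by one for the source component $U$. Two of them will be essentially immediate, and the third reduces to a one-line observation about why a path cannot enter $U$ from outside.

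First I would dispatch conditions 1 and 2. Condition 1, namely $|U| \le \frac{3+\alpha}{\epsilon D}$, is exactly the hypothesis of the observation. Condition 2, that the subgraph induced by $U$ is strongly connected, holds because by definition a source component is in particular a (strongly connected) component of $G$, hence its induced subgraph is strongly connected.

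The only thing left is condition 3. Here I would argue by contradiction: suppose there are $v \in V \setminus U$ and $u_1, u_2 \in U$ together with a path $P$ from $v$ to $u_2$ of length at most $\frac{3+\alpha}{\epsilon D}$ (the path from $u_1$ to $v$ is irrelevant for deriving the contradiction). The path $P$ starts at the vertex $v \notin U$ and ends at the vertex $u_2 \in U$, so walking along $P$ there must be a first edge $(x,y)$ with $x \notin U$ and $y \in U$. This is an edge from $V \setminus U$ into $U$, which contradicts the assumption that $U$ is a source component. Hence no such path from any $v \in V \setminus U$ to any vertex of $U$ can exist, so in particular no triple $(v,u_1,u_2)$ violating condition 3 exists, and condition 3 holds vacuously. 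Since all three conditions are satisfied, $U$ is a compact component.

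I do not anticipate a real obstacle here; the proof is a routine check of the definition. The only point that deserves explicit mention is the step where the existence of a short path from outside $U$ into $U$ is shown to contradict the source-component property, which is what makes condition 3 automatic.
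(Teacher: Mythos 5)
Your proof is correct and follows essentially the same route as the paper's: conditions 1 and 2 are immediate from the hypothesis and the definition of a source component, and condition 3 holds because the absence of edges from $V\setminus U$ into $U$ rules out any path entering $U$ from outside. You merely spell out the ``first crossing edge'' step that the paper states in one line.
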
 

\begin{proof}
By the assumption of the Lemma, $|U|\leq\frac{3+\alpha}{\epsilon D}$, and by definition of source components, $U$ is strongly connected. Additionally, there is no edge from $V\backslash U$ to $U$, and thus there is not path from any vertex in $V\backslash U$ to a vertex in $U$. Hence the three conditions for compact components are satisfied for $U$.
\end{proof}

For the remainder of this section we assume that $\alpha>0$ is chosen fixed but arbitrary; we will assume an appropriate choice for $\alpha$ implicitly whenever we refer to $C(G)$ for a graph $G$. Analogously, $\epsilon$ is used implicitly there. It remains to show that a compact component of a given vertex can be determined in a constant number of queries and how sampling in a contracted graph $C(G)$ can be realized.

Determining the compact component of a given vertex $v$ can be done with $\Ovon\left(\frac{1}{\epsilon}\cdot D^{\frac{6+2\alpha}{\epsilon D}-1}\right)$ queries as follows: Start a breadth first search at $v$ that explores up to a maximal depth of $\frac{3+\alpha}{\epsilon D}$ in order to find every vertex that can possible be in a compact component that includes $v$. Let $U$ be the maximal set of vertices that have been explored in this way such that $v\in U$ and the subgraph of $G$ induced by $U$ is strongly connected; if $|U|> \frac{3+\alpha}{\epsilon D}$, it holds $C(v)=\{v\}$, elsewise $U$ is the candidate set for the compact component of $v$. We verify $U$ by checking for vertices that violate the third condition for compact components: For this purpose we perform a breadth first search with a maximum depth of $\frac{6+2\alpha}{\epsilon D}$, having $U$ as the set of starting vertices. The depth boundary on this breadth first search traversal makes for the above-mentioned query complexity.

Sampling a vertex in $C(G)$ is realized by at first sampling a vertex of $G$ and then computing the connected component that the vertex belongs to. A vertex sample is accepted with a probability of $1/i$, if $i$ is the number of vertices in the corresponding component; in this case, the vertex of $C(G)$ that represents the sampled vertex of $G$ is returned. Otherwise, the sample is thrown away. This procedure ensures that all vertices have the same sample probability.

For this procedure, several tries may be necessary until a vertex or an edge has been successfully sampled; however, the probability of a successful sample is at least $\frac{\epsilon D}{3+\alpha}$, since this is the inverse of the maximum number of vertices that can be represented by a vertex of $C(G)$; hence, sampling $k$ times yields $k\cdot \frac{\epsilon D}{3+\alpha}$ successful samples in expectation, and by Markov's Inequality, sampling $\frac{k}{p}\cdot \frac{3+\alpha}{\epsilon^2}$ yields $k$ successful samples with a probability of at least $(1-p)$. This leads to the following observation:

\begin{observation}\label{obs_sampelzeit}
Let $G=(V,E)$, $D$, $\epsilon$ and $\alpha$ be defined as in definition \ref{def_kompaktekomp} and let $p\in (0,1)$. Sampling $k$ vertices of $C(G)$ uniformly and independently distributed  and with an error probability of at most $p$ can be done with $\Ovon\left(\frac{k}{\epsilon^2p}\cdot D^{\frac{6+2\alpha}{\epsilon D}-2} \right)$ queries to $G$.
\end{observation}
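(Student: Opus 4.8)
The plan is to analyse the rejection-sampling procedure described just before the statement. One \emph{try} consists of drawing a vertex $v$ of $G$ uniformly at random, computing (by the local procedure described above) its compact component $C(v)$ together with $i:=|C(v)|$, accepting with probability $1/i$ -- in which case the vertex of $C(G)$ representing $C(v)$ is returned -- and discarding $v$ otherwise; tries are repeated until $k$ samples have been accepted, and if this has not happened within a prescribed budget of tries the run is declared a failure. I would prove the observation in three steps: (i) conditioned on being accepted, the output of a try is uniform over the vertex set of $C(G)$, and distinct accepted samples are independent; (ii) with probability at least $1-p$, a budget of $\Ovon\!\left(\frac{k}{p\,\epsilon D}\right)$ tries suffices to accept $k$ samples; (iii) each try costs $\Ovon\!\left(\frac{1}{\epsilon}\cdot D^{\frac{6+2\alpha}{\epsilon D}-1}\right)$ queries to $G$, and multiplying (ii) by (iii) gives the claimed bound.

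For (i), fix a vertex of $C(G)$; it corresponds to a (contracted) compact component $U\subseteq V$ with $|U|=i_U$, and by Lemma \ref{lemma_kompkompeindeutig} the sets $C(v)$, $v\in V$, partition $V$. Hence in a single try the event that $v\in U$ and the try is accepted has probability $\frac{i_U}{n}\cdot\frac{1}{i_U}=\frac1n$, which does not depend on $U$; so conditioned on acceptance the returned vertex is uniform over $C(G)$ (singleton ``components'' $\{v\}$, where $i=1$ and the try is accepted unconditionally, are covered by the same computation). Since different tries use fresh independent random vertices and independent acceptance coins, the accepted samples are i.i.d.

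For (ii), condition $1$ of Definition \ref{def_kompaktekomp} gives $i=|C(v)|\le\frac{3+\alpha}{\epsilon D}$, so $\frac1i\ge\frac{\epsilon D}{3+\alpha}$ and each try is accepted with probability at least $\frac{\epsilon D}{3+\alpha}$ regardless of the previous tries; consequently the number $T$ of tries needed to accept $k$ samples is stochastically dominated by a negative-binomial variable with $\E[T]\le k\cdot\frac{3+\alpha}{\epsilon D}$, and Markov's inequality bounds $\Pr\bigl[T>\tfrac1p\cdot k\cdot\tfrac{3+\alpha}{\epsilon D}\bigr]$ by $p$, i.e.\ $\Ovon\!\left(\frac{k}{p\,\epsilon D}\right)$ tries for fixed $\alpha$; the error event of the observation is exactly the event that this budget is exceeded. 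For (iii), I invoke the cost of the local compact-component procedure described above -- a breadth-first search of depth $\frac{3+\alpha}{\epsilon D}$ to produce the candidate set and one of depth $\frac{6+2\alpha}{\epsilon D}$ from the at most $\frac{3+\alpha}{\epsilon D}$ candidate vertices to test condition $3$, which because $\outdeg\le D$ needs $\Ovon\!\left(\frac{1}{\epsilon}\cdot D^{\frac{6+2\alpha}{\epsilon D}-1}\right)$ queries. Multiplying the bounds of (ii) and (iii) yields $\Ovon\!\left(\frac{k}{p\,\epsilon D}\cdot\frac{1}{\epsilon}\,D^{\frac{6+2\alpha}{\epsilon D}-1}\right)=\Ovon\!\left(\frac{k}{\epsilon^2 p}\cdot D^{\frac{6+2\alpha}{\epsilon D}-2}\right)$.

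The only genuinely delicate point is the uniformity argument in (i): one has to check that the acceptance probability $1/|C(v)|$ exactly cancels the size bias $|U|/n$ with which a given compact component is hit (and that vertices lying in no compact component behave consistently), so that the accepted samples are exactly, not merely approximately, uniform over $C(G)$. Everything else is a routine composition of the previously established per-vertex cost of computing a compact component with a Markov bound on the number of rejection-sampling rounds.
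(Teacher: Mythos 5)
Your proposal is correct and follows essentially the same route as the paper: rejection sampling with acceptance probability $1/|C(v)|$ (which exactly cancels the size bias and gives acceptance probability at least $\frac{\epsilon D}{3+\alpha}$ per try), a Markov bound on the number of tries needed for $k$ accepted samples, multiplied by the previously stated cost of $\Ovon\left(\frac{1}{\epsilon}\cdot D^{\frac{6+2\alpha}{\epsilon D}-1}\right)$ queries for locally computing a compact component. Your version is in fact slightly more careful than the paper's informal argument, since you apply Markov to the waiting time $T$ rather than to the number of successes.
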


For using the algorithm \textbf{TestSinkFreeness} on a contracted graph $C(G)$, an estimate for the number of vertices of $C(G)$ is needed. The algorithm {\sc EstimateVertexNumber} returns such an estimate with high probabilty; in contrast to the approximation algorithm for the number of connected components of an undirected graph by Chazelle et al. (\cite{CRT05}), here the idea is to sample vertices of $G$ and increase the estimate by the inverse of the size of the compact component of the sample vertex.

\begin{Algorithm}[tb]
\noindent\centering
\shadowbox{
	\begin{minipage}{8cm}
		\begin{tabbing}
			~~~~\= ~~~~\= ~~~~\= ~~~~ \= \kill
			\textsc{\textbf{EstimateVertexNumber}}($n,G,\epsilon$)\\
			\> $x\leftarrow 0$\\
			\> Sample $s_{\ref{alg_knotenzahl}}=\frac{3}{\epsilon^{2} d^{2}}$ vertices of $G$ u.i.d. at random\\
			\> {\bf Foreach} sampled vertex $v$ {\bf do} $x\leftarrow x+\frac{1}{|C(v)|}$\\
			\> {\bf return} $\hat n = \frac{n}{t}x$
		\end{tabbing}
	\end{minipage}
}
\caption{{\sc EstimateVertexNumber}}\label{alg_knotenzahl}
\end{Algorithm}

\begin{lemma}\label{lemma_knotenzahl}
Let $G$ be a directed graph with $n$ vertices both its vertex in- and outdegree bounded by $D\in\NN$ and let $\epsilon<1$ be a proximity parameter; let $\tilde n$ be te number of vertices of $C(G)$. Then, for the value $\hat n$ that is returned by {\sc EstimateVertexNumber}($n,G,\epsilon$) it holds 
$$\tilde n-\epsilon Dn\leq \hat n \leq \tilde n+\epsilon Dn $$
with a probabilty of at least $1-2e^{-3}$. The algorithm has a query complexity of $\Ovon\left(\frac{1}{\epsilon^3 D^3}\cdot D^{(6+\alpha)/(\epsilon D)}\right)$.
\end{lemma}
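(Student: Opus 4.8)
The plan is to recognise {\sc EstimateVertexNumber} as an unbiased sampling estimator for $\tilde n$ and then apply a concentration inequality. The crucial structural identity is
\[
\tilde n \;=\; \sum_{v\in V}\frac{1}{|C(v)|}.
\]
Indeed, by the lemmas established above the mapping $v\mapsto C(v)$ is well defined, so the compact components (together with the singleton components $\{v\}$ for vertices in no compact component) partition $V$; each such component $U$ becomes exactly one vertex of $C(G)$ and contributes $\sum_{w\in U}\frac{1}{|C(w)|}=|U|\cdot\frac{1}{|U|}=1$ to the right-hand side. Consequently, for a vertex $v$ drawn uniformly at random from $V$ we have $\E\!\left[\frac{1}{|C(v)|}\right]=\frac{1}{n}\sum_{v\in V}\frac{1}{|C(v)|}=\frac{\tilde n}{n}$.

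First I would set $X:=\sum_{i=1}^{s}\frac{1}{|C(v_i)|}$ for the quantity accumulated in the loop, where $s:=s_{\ref{alg_knotenzahl}}=\frac{3}{\epsilon^2D^2}$. By linearity of expectation $\E[X]=s\cdot\frac{\tilde n}{n}$, hence $\E[\hat n]=\frac{n}{s}\E[X]=\tilde n$. Since every summand lies in $(0,1]$, $X$ is a sum of $s$ independent $[0,1]$-valued random variables, and an additive Chernoff bound gives
\begin{align*}
\Pr\!\left[\,\bigl|\hat n-\tilde n\bigr|>\epsilon Dn\,\right]
&=\Pr\!\left[\,\bigl|X-\E[X]\bigr|> s\epsilon D\,\right]\\
&\le 2\,e^{-2s\epsilon^2D^2}\;=\;2e^{-6}\;\le\;2e^{-3},
\end{align*}
where the equality uses $s=\frac{3}{\epsilon^2D^2}$. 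This establishes $\tilde n-\epsilon Dn\le\hat n\le\tilde n+\epsilon Dn$ with probability at least $1-2e^{-3}$.

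It remains to bound the number of queries. Drawing the $s=\Ovon\!\left(\frac{1}{\epsilon^2D^2}\right)$ vertices uniformly at random costs no queries; for each sampled vertex $v$, however, the algorithm must compute $|C(v)|$ exactly. By the breadth-first-search procedure described above, a compact component can be identified — and hence its size determined — with $\Ovon\!\left(\frac{1}{\epsilon}\,D^{(6+2\alpha)/(\epsilon D)-1}\right)$ queries to $G$; this is well defined since $|C(v)|\le\frac{3+\alpha}{\epsilon D}$ when $v$ lies in a compact component and $C(v)=\{v\}$ otherwise. Multiplying the two quantities yields a query complexity of $\Ovon\!\left(\frac{1}{\epsilon^3D^3}\cdot D^{(6+2\alpha)/(\epsilon D)}\right)$, which is of the form claimed in the lemma after absorbing the constant in the exponent into the arbitrary constant $\alpha$.

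I do not expect a genuine obstacle here: the single idea that makes the proof work is the identity $\tilde n=\sum_{v}1/|C(v)|$, which converts the count of contracted vertices into an average that can be sampled with unit-bounded increments. The only mildly delicate points are verifying that $C(v)$ is actually computable — which rests on the earlier, more involved analysis of the compact-component breadth-first search and on the fact that $C$ is well defined — and the bookkeeping that shows the per-sample cost is independent of $n$, so that the overall complexity stays sublinear.
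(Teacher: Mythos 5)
Your proposal is correct and follows essentially the same route as the paper: the identity $\E[1/|C(v)|]=\tilde n/n$ (which the paper derives by grouping vertices into components of size $i$, exactly your partition argument), linearity of expectation to get $\E[\hat n]=\tilde n$, a Hoeffding/additive-Chernoff bound on the sum of $[0,1]$-valued terms, and the per-sample BFS cost times $s_{\ref{alg_knotenzahl}}$ for the query complexity. The only cosmetic difference is the constant in the exponent of the per-sample cost ($6+2\alpha$ versus $6+\alpha$), a discrepancy that already exists between the paper's text and its own proof of the lemma and is immaterial since $\alpha$ is an arbitrary constant.
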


\begin{proof}
Let, for $i=1,\ldots , \frac{3+\alpha}{\epsilon D}$, $k_i$ be the number of compact components of $G$ that consist of exactly $i$ vertices. Let $X_j$ be a random variable for the contribution of the $j$-the sampe vertex to $x$, $j=1,\ldots , s_{\ref{alg_knotenzahl}}$, i.e., $X_j=\frac{1}{C(v_j)}$. Let $X=\sum_{1\leq j\leq s_{\ref{alg_knotenzahl}}}X_j$. For every $j$ it holds
$$E[X_j]=\sum_{1\leq i\leq \frac{3+\alpha}{\epsilon D}}\frac{1}{i}\cdot \Pr[C(v_j)=i] = \sum_{1\leq i\leq \frac{3+\alpha}{\epsilon D}}\frac{1}{i}\cdot \frac{k_i\cdot i}{n} = \frac{1}{n}\cdot \sum_{1\leq i\leq \frac{3+\alpha}{\epsilon D}} k_i = \frac{\tilde n}{n},$$
since the sum in the first but last step is the number of compact components of $G$ and hence the number of vertices of $C(G)$. Moreover,
$$\E[\hat n] = \frac{n}{s_{\ref{alg_knotenzahl}}}\cdot\E[X] = \frac{n}{s_{\ref{alg_knotenzahl}}}\cdot\sum_{1\leq j\leq s_{\ref{alg_knotenzahl}}}\E[X_j] =  \frac{n}{s_{\ref{alg_knotenzahl}}}\cdot\sum_{1\leq j\leq s_{\ref{alg_knotenzahl}}}\frac{\tilde n}{n} = \tilde n,$$
and since all $X_j$ are between $0$ and $1$, applying the Hoeffding Inequality yields
$$\Pr[|\hat n - \E[\hat n]|>\epsilon dn] = \Pr[|X - \E[X]|>\epsilon ds_{\ref{alg_knotenzahl}}] \leq 2\exp\left(-\frac{\epsilon^2 d^2 s_{\ref{alg_knotenzahl}}^2}{\sum_{1\leq i\leq s_{\ref{alg_knotenzahl}}}(1-0)^2}\right) = 2e^{-3}.$$

By the above considerations, determining the compact component of a vertex of $G$ takes at most $\Ovon\left(\frac{1}{\epsilon D}\cdot D^{(6+\alpha)/(\epsilon D)}\right)$ queries to $G$; since this is done for  $s_{\ref{alg_knotenzahl}}=\frac{3}{\epsilon^{2} d^{2}}$ vertices in the algorithm, the overall query complexity is $\Ovon\left(\frac{1}{\epsilon^3 D^3}\cdot D^{(6+\alpha)/(\epsilon D)}\right)$.
\end{proof}

Now we can state our property testing algorithm {\sc TestStrongConnectivity}; it at first calls {\sc TestSinkFreeness} in order to test whether there are many sink components; if this is not thr case, the algorithm estimates the number of vertices of $C(G)$ by calling {\sc EstimateVertexNumber} and uses this value to measure the number of vertices that have at least one incoming edge by calling {\sc EstimateReachableVertices}. If the estimate returned by the latter call is much smaller than the estimated number of vertices of $C(G)$, then {\sc TestStrongConnectivity} rejects; elsewise it accepts.

\begin{Algorithm}[tb]
	\centering
\shadowbox{
	\begin{minipage}{8cm}
		\begin{tabbing}
			~~~~\= ~~~~\= ~~~~\= ~~~~ \= \kill
			\textsc{\textbf{TestStrongConnectivity}}($n,G,\epsilon$)\\
			\> {\bf if} $\epsilon D\geq 3+\alpha$ {\bf then return true}\\
			\> \textbf{if not}  {\sc TestSinkFreeness}($n,G,\frac{\alpha\epsilon}{6(3+\alpha)}$) \textbf{then return} false\\
			\> $\hat n \leftarrow ${\sc EstimateVertexNumber}$(C(G),\frac{\alpha}{24(3+\alpha)}\epsilon)$\\
			\> $\hat m \leftarrow ${\sc EstimateReachableVertices}$(n,C(G),\frac{3+\alpha}{\epsilon},\frac{\alpha\epsilon^2 D}{6(3+\alpha)^2})$\\
			\> {\bf if} $\tilde m<\hat n-\frac{\alpha}{12(3+\alpha)}\epsilon Dn$ {\bf then return false}\\
			\> {\bf else return true}
		\end{tabbing}
	\end{minipage}
	\caption{\textsc{TestStrongConnectivity}}
}
\end{Algorithm}

\begin{mtheorem}\label{theorem_zusammenhang}
Let $G$ be a directed graph with $n$ vertices both its vertex in- and outdegree bounded by $D\in\NN$ and let $\epsilon<1$ be a proximity parameter. Then, {\sc TestStrongConnectivity}($n,G,\epsilon$) returns \emph{false} with a probability of at least $\frac{2}{3}$, if $G$ is $\epsilon$-far from strongly connected; it returns  \emph{true} with a probability of at least $\frac{2}{3}$, if $G$ is strongly connected. The query complexity of the algorithm is $\Ovon\left(\alpha^{-4}\left(\frac{4}{\epsilon}\right)^{32/\epsilon+23}D^{144/(\alpha\epsilon^2)}n^{1-\epsilon/(3+\alpha)} \log\frac{1}{\epsilon}\right)$.
\end{mtheorem}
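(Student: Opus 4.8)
I would prove the two correctness statements separately and then assemble the query complexity. Two preliminary reductions dispose of degenerate regimes: if $\epsilon D\geq 3+\alpha$ then $\epsilon Dn\geq 3n$, and since any bounded-degree graph becomes strongly connected after at most $3n\leq\epsilon Dn$ adjacency-list changes (delete one out-edge and one in-edge at each vertex to make room, then insert a directed Hamiltonian cycle), no graph is $\epsilon$-far from strongly connected in this regime, so the first line's \emph{true} is correct; and if $n\leq\frac{3+\alpha}{\epsilon D}$ the whole input has $O(1)$ edges and is read exhaustively, so we may assume $n>\frac{3+\alpha}{\epsilon D}$, which forces $C(G)$ to have at least two vertices. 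Now suppose $G$ is strongly connected. Then {\sc TestSinkFreeness} has one-sided error and never rejects; every compact component is a proper subset of $V$ and, since $G$ is strongly connected, has an entering edge from a different compact component, so every vertex of $C(G)$ has indegree $\geq 1$ and $m=\tilde n$, where $m$ is the number of vertices of $C(G)$ having an incoming edge. By Lemma~\ref{lemma_erreichbareknoten} we get $\hat m\geq m=\tilde n$, and by Lemma~\ref{lemma_knotenzahl} we get $\hat n\leq\tilde n+\frac{\alpha\epsilon Dn}{24(3+\alpha)}$; hence $\hat n-\frac{\alpha\epsilon Dn}{12(3+\alpha)}<\tilde n\leq\hat m$, the final test does not fire, and the algorithm accepts.

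Next suppose $G$ is $\epsilon$-far from strongly connected, so Lemma~\ref{lemma_deadendzahl} yields more than $\frac13\epsilon Dn$ dead ends. Split on the number of sink components. If there are at least $\frac{\alpha\epsilon}{6(3+\alpha)}Dn$ of them, then Lemma~\ref{lemma_senkentest}, invoked with $\beta=\frac{\alpha\epsilon}{6(3+\alpha)}$, makes {\sc TestSinkFreeness} reject with probability $\geq 1-e^{-2}$ and we are done. Otherwise $G$ has more than $\bigl(\frac13-\frac{\alpha}{6(3+\alpha)}\bigr)\epsilon Dn$ source components, at most $\frac{\epsilon Dn}{3+\alpha}$ of which can have more than $\frac{3+\alpha}{\epsilon D}$ vertices; using $\frac13-\frac1{3+\alpha}=\frac{\alpha}{3(3+\alpha)}$, more than $\frac{\alpha\epsilon}{6(3+\alpha)}Dn$ source components have size at most $\frac{3+\alpha}{\epsilon D}$, and since a source component of size at most $\frac{3+\alpha}{\epsilon D}$ is a compact component, $C(G)$ has more than $\frac{\alpha\epsilon}{6(3+\alpha)}Dn$ vertices of indegree $0$, i.e.\ $\tilde n-m>\frac{\alpha\epsilon}{6(3+\alpha)}Dn$. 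Since $C(G)$ has in- and outdegree at most $\frac{3+\alpha}{\epsilon}$ and at most $n$ vertices, Lemma~\ref{lemma_erreichbareknoten} with the algorithm's parameters gives $\hat m\leq m+\frac{\alpha\epsilon Dn}{48(3+\alpha)}$ and Lemma~\ref{lemma_knotenzahl} gives $\hat n\geq\tilde n-\frac{\alpha\epsilon Dn}{24(3+\alpha)}$; combining these with the bound on $\tilde n-m$ and using $\frac16-\frac1{24}-\frac1{48}=\frac5{48}>\frac1{12}$ forces $\hat n-\hat m>\frac{\alpha\epsilon Dn}{12(3+\alpha)}$, the final test fires, and the algorithm rejects.

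Two supporting points need care in both cases. First, the three subroutines run not on a stored graph but on $C(G)$, which must be simulated from $G$: a vertex of $C(G)$ is produced by sampling a vertex $v$ of $G$ and keeping it with probability $1/|C(v)|$, and $C(v)$ together with the incident edges of $C(G)$ is computed by the bounded-depth searches from the discussion preceding Observation~\ref{obs_sampelzeit}; the number of retries until enough samples survive is controlled by Markov's inequality exactly as in Observation~\ref{obs_sampelzeit}, contributing a tunable extra failure probability, and replacing the ``number of vertices'' in Lemmas~\ref{lemma_knotenzahl} and~\ref{lemma_erreichbareknoten} by the upper bound $n\geq\tilde n$ is harmless. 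Second, each subroutine has only a constant success probability below $1$, so I amplify each to at least $1-\frac1{12}$ by $O(1)$ independent repetitions (reject if any repetition rejects; take the median for the two estimators), so that a union bound over all subroutine failures together with the Markov bounds for the simulation keeps the total error below $\frac13$ in both cases.

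For the query complexity, {\sc TestSinkFreeness} and {\sc EstimateVertexNumber} cost only functions of $\epsilon$, $D$ and $\alpha$ (times the per-query simulation overhead), and the dominant contribution is {\sc EstimateReachableVertices} on $C(G)$: by Lemma~\ref{lemma_erreichbareknoten} with effective degree $D'=\frac{3+\alpha}{\epsilon}$ it makes $\Ovon\!\bigl(\poly(D',1/\epsilon')\,(D')^{8D'}\,n^{1-1/D'}\bigr)$ queries to $C(G)$, each simulated by $\poly(1/\epsilon)\,D^{O(1/(\epsilon D))}$ queries to $G$; since $1/D'=\epsilon/(3+\alpha)$, the sole $n$-dependent factor is $n^{1-\epsilon/(3+\alpha)}$, and loosely over-estimating the remaining factors (bounding $\frac{3+\alpha}{\epsilon}\leq\frac4\epsilon$, absorbing all $D^{c/(\epsilon D)}$-type terms into $D^{144/(\alpha\epsilon^2)}$, and the shrunk parameters $\epsilon'\sim\alpha\epsilon$ into powers of $\frac1\epsilon$ and $\frac1\alpha$) yields the stated bound. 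The main obstacle is the bookkeeping rather than any single step: the proximity parameters fed to {\sc TestSinkFreeness}, {\sc EstimateVertexNumber} and {\sc EstimateReachableVertices} must be chosen so that the additive slack $\frac{\alpha\epsilon Dn}{6(3+\alpha)}$ from the counting argument and the two approximation errors land on the right side of the single threshold $\frac{\alpha\epsilon Dn}{12(3+\alpha)}$ \emph{simultaneously} in both directions --- the margins $\frac5{48}>\frac1{12}$ in the far case and $\frac1{24}<\frac1{12}$ in the strongly connected case are exactly what make this possible --- while the combined error probability, including the simulation overhead, stays below $\frac13$.
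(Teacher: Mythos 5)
Your proposal is correct and follows essentially the same route as the paper's proof: the same use of line~1 to dispose of $\epsilon D\geq 3+\alpha$, the same split via Lemma~\ref{lemma_deadendzahl} into sink-heavy graphs (handled by Lemma~\ref{lemma_senkentest}) and source-heavy graphs (where small source components become indegree-$0$ vertices of $C(G)$, and Lemmas~\ref{lemma_knotenzahl} and~\ref{lemma_erreichbareknoten} place $\hat m$ and $\hat n$ on the correct sides of the threshold $\frac{\alpha}{12(3+\alpha)}\epsilon Dn$, with the same $\frac{5}{48}$ vs.\ $\frac{1}{24}$ margins), and the same accounting with {\sc EstimateReachableVertices} on $C(G)$ (degree bound $\frac{3+\alpha}{\epsilon}$) dominating the query complexity. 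Your explicit amplification of the subroutines and handling of the trivially small-$n$ regime are harmless additions (indeed, they tighten the paper's raw union bound of $\frac14+2e^{-3}$), not a different approach.
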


\begin{proof}
Let $\tilde n$ be the number of vertices of $C(G)$ and let $\tilde m$ be the number of those vertices that have at least one incoming edge. Let $\tilde D$ be the maximum in- or outdegree of a vertex in $C(G)$: Because of the maximum size of a compact component it holds $\tilde D\leq \frac{3+\alpha}{\epsilon}$.

At first assume that $G$ is $\epsilon$-far from strongly connected. We will show that then the probability that {\sc TestStrongConnectivity} inadvertantly returns \emph{true} is bounded by a constant. Initially note that this cannot happen in the first line of the algorithm: For $\epsilon\geq 3+\alpha$ there is no graph that is  $\epsilon$-far from strongly connectes, since in this case it would be possible to remove all edges of the graph and insert a circle over all vertices instead, not surpassing $\epsilon Dn$ edge modifications; hence, for an $\epsilon$-far graph the condition of the {\bf if}-statement in line 1 will never hold true.

Since $G$ is $\epsilon$-far from strongly connected, $G$ contains more than $\frac{1}{3}\epsilon Dn$ dead ends by Lemma \ref{lemma_deadendzahl}. Hence it contains at least $\frac{\alpha}{6(3+\alpha)}\epsilon Dn$ sink components or at least $\frac{1}{3}\epsilon Dn - \frac{\alpha}{6(3+\alpha)}\epsilon Dn = \frac{6+\alpha}{6(3+\alpha)}\epsilon Dn$ source components. In the first case Lemma \ref{lemma_senkentest} guarantees that {\sc TesteSenkenFeiheit} returns \emph{true} with a probability of at most $e^{-2}<\frac{1}{3}$ \emph{true}.

In the second case at least $\frac{\alpha}{6(3+\alpha)}\epsilon Dn$ of the source components have a size of at most $\frac{3+\alpha}{\epsilon D}$, since they are vertex-disjoint and together contain at most $n$ vertices. Thus the number of vertices without incoming edges is at least $\frac{\alpha}{6(3+\alpha)}\epsilon Dn$, too; i.e., it holds $\tilde m\leq  \tilde n - \frac{\alpha}{6(3+\alpha)}\epsilon Dn$.

We can use this bound on the number of non-reachable vertices of $C(G)$ to gain an upper bound for $\hat m$ that holds with high probability: Lemma \ref{lemma_erreichbareknoten} guarantees that $\hat m\leq \tilde m +\frac{1}{8}\cdot\frac{\alpha\epsilon^2 D}{6(3+\alpha)^2}\cdot \tilde D n$ holds with a probability of at least $\frac{3}{4}$; moreover, by Lemma \ref{lemma_knotenzahl} we have $\hat n\geq \tilde n - \frac{\alpha}{24(3+\alpha)}\epsilon Dn$ with a probability of at least $1-2e^{-3}$. These two facts combined with the bound of $\tilde D\leq D\cdot\frac{3+\alpha}{\epsilon D} = \frac{3+\alpha}{\epsilon}$ for the in- and outdegree of a vertex of $C(G)$ yield
\begin{align*}
	\hat m & \leq \tilde m +\frac{1}{8}\cdot\frac{\alpha\epsilon^2 D}{6(3+\alpha)^2}\cdot \tilde D n \leq \tilde n - \frac{\alpha}{6(3+\alpha)}\epsilon Dn + \frac{1}{8}\cdot\frac{\alpha}{6(3+\alpha)}\epsilon Dn\\
	& \leq \hat n +  \frac{1}{4}\cdot \frac{\alpha}{6(3+\alpha)}\epsilon Dn - \frac{7}{8}\cdot \frac{\alpha}{6(3+\alpha)}\epsilon Dn < \hat n - \frac{\alpha}{12(3+\alpha)}\epsilon Dn.
\end{align*}
Hence, by the Union Bound, {\sc TestStrongConnectivity}($n,G,D,\epsilon$) returns \emph{false} with a probability of at least $1-\frac{3}{4}-2e^{-3}>\frac{2}{3}$ in this case.

Finally we consider the case that $G$ is strongly connected, i.e., there are no dead ends in $G$. By Lemma \ref{lemma_senkentest}, the call to {\sc Teste\-Senken\-Frei\-heit} will hence return true \emph{true}. Particularly, there are no source components in $G$, and thus there are no vertices without incoming edges in $C(G)$ -- every such vertex in $C(G)$ would correspond to a source component in $G$. Hence it holds $\tilde m=\tilde n$. Moreover, Lemma \ref{lemma_erreichbareknoten} guarantees that $\hat m\geq \tilde m$ holds with a probability of at least $\frac{3}{4}$ and by Lemma \ref{lemma_knotenzahl} we have $\hat n\leq \tilde n + \frac{\alpha}{24(3+\alpha)}\epsilon Dn$ with a probability of at least $1-2e^{-3}$. Altogether we can conclude $$\hat m \geq \tilde m = \tilde n \geq \hat n - \frac{\alpha}{24(3+\alpha)}\epsilon Dn>\hat n - \frac{\alpha}{12(3+\alpha)}\epsilon Dn,$$
and hence by the Union Bound {\sc TestStrongConnectivity}($n,G,D,\epsilon$) returns \emph{true} in this case with a probability of at least $\frac{2}{3}$.

The query complexity of the algorihtm is dominated by the query complexity of the call {\sc EstimateReachableVertices}$(n,C(G),\frac{3+\alpha}{\epsilon},\frac{\alpha\epsilon^2 D}{6(3+\alpha)^2}$): Since the degree bound of $C(G)$ is $\frac{3+\alpha}{\epsilon}$ von $C(G)$, for $\alpha<1$ this call queries $C(G)$
\begin{align*}\Ovon & \left(\left(\frac{\alpha\epsilon^2 D}{6(3+\alpha)^2}\right)^{-3}\left(\frac{3+\alpha}{\epsilon}\right)^{8(3+\alpha)/\epsilon+15}n^{1-\epsilon/(3+\alpha)}\log \frac{3+\alpha}{\epsilon}\right)\\ & =\Ovon\left(\alpha^{-3}\left(\frac{4}{\epsilon}\right)^{32/\epsilon+21}n^{1-\epsilon/(3+\alpha)} \log\frac{1}{\epsilon}\right)
\end{align*}
times. Because each of these queries induces $$\Ovon\left(\frac{6(3+\alpha)^2}{\alpha\epsilon^2 D^2}\cdot D^{6(6+\alpha)(3+\alpha)^2/(\alpha\epsilon^2 D^2)}\right)=\Ovon\left(\epsilon^{-2}\alpha^{-1} D^{144/(\alpha\epsilon^2)} \right)$$ queries to $G$, the overall query complexity is
$$\Ovon\left(\alpha^{-4}\left(\frac{4}{\epsilon}\right)^{32/\epsilon+23}D^{144/(\alpha\epsilon^2)}n^{1-\epsilon/(3+\alpha)} \log\frac{1}{\epsilon}\right).$$
\end{proof}

It remains to show \ref{lemma_erreichbareknoten}, i.e., that the estimate $\hat m$ returned by algorithm {\sc EstimateReachableVertices} satisfies $\tilde m\leq \hat m \leq \tilde m + \frac{1}{8}\epsilon Dn$ with a probability of at least $\frac{3}{4}$, if $\tilde m$ is the number of reachable vertices in the input graph $C(G)$, and that this algorithm has a query complexity of $\Ovon(\epsilon^{-3}D^{8D+15}n^{1-1/D}\log D)$.

\begin{proof}[Beweis (Lemma \ref{lemma_erreichbareknoten}).]
We will show by induction that the basic estimators $\hat n_i$ are a good approximation for the number $n_i$ of vertices that have exactly $i$ incoming edges; we then conclude that $\hat m$ is a good estimate for $\tilde m$. At first note that, by Markov's Inequality, the probability that the algorithm aborts in the third line (and possibly returns the wrong value) is at most $\frac{1}{16D}$; the probability that this happens in at least one of the $D$ iterations of the {\bf for}-loop is at most $\frac{1}{16}$ by the union bound hence. If the input graph $G$ is a contracted graph, we can set the error probability in Observation \ref{obs_sampelzeit} to $\frac{1}{16D}$; the probability that in at least one iteration of the {\bf for}-loop the required number of samples is not attained is at most $\frac{1}{16}$ then because of the union bound. For the remainder of this proof we will assume that neither of these events occurs.

Now let $a_i:=a^{D/i}$ für $i=1,\ldots, D$; we will determine the exact value of the constant $a$ later. For all these values for $i$ let $V_i$ be the set of vertices of $G$ that have exactly $i$ incoming edges; we have $n_i=|V_i|$ hence.

We define the random variable $Y_{i,j}$ as the number of  $i$-way collisions on vertices in $V_j$ in the iteration of the {\bf for}-loop that has the loop counter $i$; an $i$-way collision is the event that for a vertex $v$ exactly $i$ of its incoming edges are contained in the edge sample taken in the second line of the algorithm. Moreover let $X_{v,i}$ be an indicator random variable for the event that in the iteration that has the loop counter $i$ the vertex $v$ has an $i$-way collision. For every vertex $v\in V_j$ it holds
\begin{align*}
\E[X_{v,i}] & =\Pr[X_{v,i}=1] = {j \choose i}p_i^i(1-p_i)^{j-i}\\
		&=  {j \choose i}(a_in^{-1/i})^i(1-p_i)^{j-i} = \frac{1}{n}{j \choose i}a^D(1-p_i)^{j-i},
\end{align*}
since the events for the incoming edges of $v$ being in the sample set are independent from each other. Thus the number of incoming edges of $v$ that are contained in the sample set is binomially distributed. Hence we can conclude
$$\E[Y_{i,j}]=\sum_{v\in V_j}\E[X_{v,i}]=\frac{n_j}{n}{j \choose i}a^D(1-p_i)^{j-i}.$$
We can now bound the probability that $Y_{i,j}$ deviates from its expectation by too much. The maximum deviation that we want to allow is $\delta_{i}:=\frac{\epsilon a^D}{2^{i+4}D^{2i-1}}$. Since a Chernoff Bound can only be effectively used if the expected value is relatively large, we distinguish two cases and use a Chernoff Bound, if $n_j$ is relatively large, and Markov's Inequality, if $n_j$ is small.

At first consider the latter case and assume $n_j\leq \frac{\epsilon n(1-p)^{j-i}}{{j\choose i}2^{i+7}D^{2i+1}}$, that is, it holds $\E[Y_{i,j}]\leq \frac{\epsilon a^D(1-p_i)^{2j-2i}}{2^{i+7}D^{2i+1}}$ and hence in particular $\E[Y_{i,j}]<\delta_i$. The latter fact implies that $Y_{i,j}$ can deviate from its expectation by more than $\delta_i$ only by exceeding it. Hence we can conclude
$$\Pr[|Y_{i,j}-\E[Y_{i,j}]|>\delta_i] = \Pr[Y_{i,j}-\E[Y_{i,j}]>\delta_i] < \Pr[Y_{i,j}>\delta_i]\leq \frac{\E[Y_{i,j}]}{\delta_i}\leq \frac{1}{8D^2},$$
applying Markov's Inequality.

Now consider the first of the two cases, i.e., $n_j>\frac{\epsilon n(1-p)^{j-i}}{{j\choose i}2^{i+7}D^{2i+1}}$; this means that $\E[Y_{i,j}]>\frac{\epsilon a^D(1-p_i)^{2j-2i}}{2^{i+7}D^{2i+1}}$. Since $n_j\leq n$, we additionally have $\E[Y_{i,j}]\leq {j \choose i}a^D(1-p_i)^{j-i}\leq \frac{2^{i+4}D^{2i-1}{j\choose i}(1-p)^{j-i}}{\epsilon} \cdot\delta_i$. Since the events $X_{v,i}=1$ are independent for all vertices $v$, we can conclude
\begin{align*}
\Pr[|Y_{i,j}-\E[Y_{i,j}]|>\delta_i] & \leq \Pr\left[\left| Y_{i,j}-\E[Y_{i,j}]\right| > \frac{\epsilon}{2^{i+4}D^{2i-1}{j\choose i}(1-p)^{j-i}}\E[Y_{i,j}]\right]\\
	&\leq 2\exp\left( - \frac{\epsilon^2}{3\cdot 2^{2i+8}D^{4i-2}{j\choose i}^2(1-p)^{2j-2i}}\E[Y_{i,j}] \right)\\
	&< 2\exp\left( - \frac{\epsilon^3a^D}{3\cdot 2^{3i+15}D^{6i-1}{j\choose i}^2} \right)
\end{align*}
by a multiplicatice Chernoff Bound; hence, for sufficiently large $a=\Ovon\left(\frac{D^{8+13/D}}{\epsilon^{3/D}}\log^{1/D} D\right)$ this probability does not exceed $\frac{1}{8D^2}$.

Basically we are only interested in $i$-way collisions that occur on vertices of $V_i$, since this is the set whose size we want to estimate in the iteration of the {\bf for}-loop that has the loop counter $i$. However, for vertices with a degree if $j>i$, $i$-way collisions can also happen, which can distort the estimate for $n_i$. Hence we have to estimate the number of $i$-way collisions $Y_{i,j}$ on vertices in $V_j$, $j>i$, that will happen in the iteration of the {\bf for}-loop that has a loop counter of $i$. We can then subtract these estimates from the number of collisions measured in this iteration and in this way gain an estimator for $n_i$. For estimating the $Y_{i,j}$ we can simply use the estimators $n_j$ from the previous iterations of the {\bf for}-loop.

In what follows we will assume that $|Y_{i,j}-\E[Y_{i,j}]|\leq \delta_i$ holds for $i=1,\ldots D$ and $j\geq i$; by the Union Bound, the probability for this to happen is at least $\frac{7}{8}$, since the probability that a single $Y_{i,j}$ deviates from its expectation by more than $\delta_i$ is, as shown above, at most $\frac{1}{8D^2}$ ist. We will now show by induction over $i$ that, under these assumptions, $\hat n_i$ deviates from $n_i$ by at most $\frac{\epsilon n}{2^{i+3}D^{2i-2}}$.

For the base case we chose the iteration of the {\it for}-loop that has $i=D$ as the loop counter. It holds $$\E[\hat n_D]=\frac{\E[\hat c_D]}{p_D^{D}}=\frac{n}{a_D^{D}}\E[Y_{D,D}]=\frac{n}{a^{D}}\cdot \frac{n_D}{n}{D \choose D}a^D(1-p_D)^{D-D}=n_D.$$ Moreover we have assumed that $Y_{D,D}$ deviates from its expectation by at most $\delta_D=\frac{\epsilon a^D}{2^{D+4}D^{2D-1}}$. Hence $\hat n_D$ deviates by at most $\frac{\delta_Dn}{a^D}=\frac{\epsilon n}{2^{D+4}D^{2D-1}}<\frac{\epsilon n}{2^{D+3}D^{2D-2}}$ from its expectation $\E[\hat n_D]=n_D$.

Now let $i<D$ be an arbitrary value of the loop counter. As induction hypothesis assume that, for all $j>i$, $\hat n_j$ deviates from $n_j$ by at most $\frac{\epsilon n}{2^{j+3}D^{2j-2}}$. Since $\E[Y_{i,j}]=\frac{n_j}{n}{j \choose i}a^D(1-p_i)^{j-i}$ holds for all $j>i$, we can conclude
\begin{align*}
	\E[\hat n_i] = & \E\left[ \frac{\hat c_i}{p_i^i} - \sum_{i<j\leq d}{j\choose i} \hat n_j (1-p_i)^{j-i} \right] = \frac{\E[\hat c_i]}{p_i^i} - \sum_{i<j\leq d}{j\choose i} \E[\hat n_j] (1-p_i)^{j-i} \\ = & \frac{\E[\hat c_i]}{p_i^i} - \sum_{i<j\leq d}{j\choose i} n_j (1-p_i)^{j-i} = \frac{\E[\hat c_i]}{p_i^i} - \sum_{i<j\leq d}\frac{n}{a^D}\E[Y_{i,j}]\\ = & \frac{1}{p_i^i}\left( \E[\hat c_i]-\sum_{i<j\leq d}E[Y_{i,j}] \right) = \frac{E[Y_{i,i}]}{p_i^i} = n_i.
\end{align*}

It remains to bound the maximum deviation of $\hat n_i$ from its expected value. There are two types of deviation that can occur: The first one is that, for $j>i$, $\hat n_j$ may deviate from $\E[\hat n_j]$; due to the induction hypothesis, this deviation is at most $\frac{\epsilon n}{2^{j+3}D^{2j-2}}$, and since $\hat n_j$ gets multiplied by ${j\choose i}(1-p_i)^{j-i}$ when computing $\hat n_i$, the contribution of $\hat n_j$ to the overall deviation is at most $$\alpha_j:={j\choose i}\cdot\frac{\epsilon n}{2^{j+3}D^{2j-2}}\leq D^{j-i}\cdot\frac{\epsilon n}{2^{j+3}D^{2j-2}}=\frac{\epsilon n}{2^{j+3}D^{j+i-2}}\leq \frac{\epsilon n}{2^{i+4}D^{2i-1}}.$$
The second type of error results from the deviation of at most $\delta_i$ that, for $j\geq i$, may occur between $Y_{i,j}$ and $\E[Y_{i,j}]$; due to this deviation, $\hat c_i$ may deviate from $\E[\hat c_i]$. The contribution to the overall deviation is at most
$$\beta_j:=\frac{\delta_i}{p_i^i}=\frac{n}{a^D}\cdot\frac{\epsilon a^D}{2^{i+4}D^{2i-1}} = \frac{\epsilon n}{2^{i+4}D^{2i-1}}.$$
Adding the contributions of both types of deviation, we can bound the overall deviation by
$$|\hat n_i-n_i| = |\hat n_i-\E[\hat n_i]| \leq \sum_{i< j\leq D}\alpha_j + \sum_{i\leq j\leq D}\beta_j < \sum_{i\leq j\leq D}\alpha_j + \beta_j \leq \frac{2D\epsilon n}{2^{i+4}D^{2i-1}} = \frac{\epsilon n}{2^{i+3}D^{2i-2}},$$
which completes the induction step.

Altogether we therefor have
$$\sum_{1\leq i\leq D}|\hat n_i-n_i| \leq \sum_{1\leq i\leq D}\frac{\epsilon n}{2^{i+3}D^{2i-2}}\leq \sum_{1\leq i\leq D}\frac{\epsilon n}{2^{4}} = \frac{\epsilon}{16} Dn,$$
and hence $\hat m$ deviates from $\tilde m+\frac{\epsilon}{16}Dn$ by at most $\frac{\epsilon}{16} Dn$, if the above assumptions on the tightness of the $Y_{i,j}$ hold, i.e.,
$$\tilde m \leq \hat m \leq \tilde m + \frac{\epsilon}{8}Dn.$$

The success probability of the algorithm results from the probabilities for successfully sampling and for all $Y_{i,j}$ being tight to their expected values; by the Union Bound this leads to a success probability of at least $1-\left(\frac{1}{16}+\frac{1}{16}+\frac{1}{8}\right) = \frac{3}{4}$.

The query complexity of the algorithm results from the number of edges that are sampled; since these are at most $\Ovon(Da_in^{1-1/i}) = \Ovon(\epsilon^{-3}D^{8D+14}n^{1-1/i}\log D)$ in the iteration of the {\bf for}-loop that has the loop counter  $i$, the overall number of samples is at most $\Ovon(\epsilon^{-3}D^{8D+15}n^{1-1/D}\log D)$.
\end{proof}

At the end of the section we give a brief overview over the proof sketch for a property testing algorithm for strong connectivity that Oded Goldreich published earlier and independently from our result  (\cite{G10}, Appendix of the  survey article \emph{Introduction to Testing Graph Properties}). Goldreich at first notes that the graph classes that Bender and Ron use for their lower bound of $\Omega(\sqrt{n})$ on testing strong connectivity \cite{BR02} also yield a lower bound of $\Omega(n)$ for testing strong connectivity with a one-sided error; our proof at the beginning of this section also makes use of this observation, considering only the class of $\epsilon$-far graphs Bender and Ron give.

For testing strong connectivity, Goldreich's approach is the same as ours, except for his choice of a reduction function: In particular, he also proposes to at first solve the problem for graphs in which all source components have size $1$ by computing  $i$-way collision statistics for the target vertices of sets of sampled edges for $i=1,\ldots, D$.

Goldreich's reduction function then has a similar effect than ours: Goldreich considers directed circles with a length of at most $s=\left\lceil\frac{4}{\epsilon D}\right\rceil$, and, for each vertex $v$, the set $C_v$ of all those vertices that lie on such a circle together with $v$. If $C_u =C_v$ for all vertices $u\in C_v$, then $C_v$ is contracted. This approach guarantees that all sufficiently small source components are contracted to a single vertex in the resulting graph, and the reduction function can be computed locally.

For Goldreich's reduction, the maximal number of vertices in a contracted component is at most $s^2$ and hence the maximum indegree of a vertex in the resulting graph is $t=s^2\cdot D \approx \frac{16}{\epsilon^2D}$. This leads to a slightly worse query complexity of $\Ovon(n^{1-1/t})\approx \Ovon(1-\frac{\epsilon^2D}{16})$. Goldreich only gives a brief sketch of his result, leaving open any technical details.

\newcommand{\Proc}{In: Proc. of the\ } 
\newcommand{\Journalon}{Journal on\ } 
\newcommand{\Symp}{Symp.\ } 
\newcommand{\Found}{Foundations\ } 
\newcommand{\Techn}{Technology } 
\newcommand{\CompSci}{Computer Science\ } 
\newcommand{\Theor}{Theoretical\ } 
\newcommand{\Ann}{}
\newcommand{\Alg}{Algorithms\ } 
\newcommand{\CompGeom}{Computational Geometry} 
\newcommand{\ALENEX}{Workshop on Algorithm Engineering and Experiments (ALENEX)}
\newcommand{\BEATCS}{Bulletin of the European Association for \Theor \CompSci (BEATCS)}
\newcommand{\CCCG}{Canadian Conference on \CompGeom (CCCG)}
\newcommand{\CIAC}{Italian Conference on \Alg and Complexity (CIAC)}
\newcommand{\COCOON}{\Ann International Computing Combinatorics Conference (COCOON)}
\newcommand{\COLT}{\Ann Conference on Learning Theory (COLT)}
\newcommand{\COMPGEOM}{\Ann ACM \Symp on \CompGeom}
\newcommand{\DCGEOM}{Discr. \& \CompGeom}
\newcommand{\ECCC}{Electronic Colloquium on Computational Complexity (ECCC)}
\newcommand{\ESA}{\Ann European \Symp on \Alg (ESA)}
\newcommand{\FOCS}{IEEE \Symp on \Found of \CompSci (FOCS)}
\newcommand{\FSTTCS}{\Found of Software \Techn and \Theor \CompSci (FSTTCS)}
\newcommand{\ICALP}{\Ann International Colloquium on Automata, Languages and Programming (ICALP)}
\newcommand{\ICCCN}{IEEE International Conference on Computer Communications and Networks (ICCCN)}
\newcommand{\ICDCS}{International Conference on Distributed Computing Systems (ICDCS)}
\newcommand{\IJCGA}{International J. of \CompGeom\ and Applications}
\newcommand{\INFCTRL}{Information and Computation}
\newcommand{\INFOCOM}{IEEE INFOCOM}
\newcommand{\IPCO}{International Integer Programming and Combinatorial Optimization Conference (IPCO)}
\newcommand{\ISAAC}{International \Symp on \Alg and Computation (ISAAC)}
\newcommand{\ISTCS}{Israel \Symp on \TheoryOfComp and Systems (ISTCS)}
\newcommand{\JACM}{J. of the ACM}
\newcommand{\JALGORITHMS}{J. of \Alg}
\newcommand{\JCSS}{J. of Computer and System Sciences}
\newcommand{\LNCS}{Lecture Notes in \CompSci}
\newcommand{\PODS}{ACM SIGMOD \Symp on Principles of Database Systems (PODS)}
\newcommand{\RANDOM}{International Workshop on Randomization and Approximation Techniques in \CompSci (RANDOM-APPROX)}
\newcommand{\RSA}{Random Structures and Alg.}
\newcommand{\SICOMP}{SIAM \Journalon Comp.}
\newcommand{\SIJDM}{SIAM \Journalon Discrete Mathematics}
\newcommand{\SoCG}{\COMPGEOM}
\newcommand{\SODA}{\Ann ACM-SIAM \Symp on Discr. \Alg (SODA)}
\newcommand{\SPAA}{\Ann ACM \Symp on Parallel \Alg and Architectures (SPAA)}
\newcommand{\STACS}{\Ann \Symp on \Theor Aspects of \CompSci (STACS)}
\newcommand{\STOC}{\Ann ACM \Symp on the Theory of Computing (STOC)}
\newcommand{\SWAT}{Scandinavian Workshop on Algorithm Theory (SWAT)}
\newcommand{\TALG}{ACM Transactions on \Alg}
\newcommand{\TCS}{\Theor \CompSci}
\newcommand{\UAI}{Conference on Uncertainty in Artificial Intelligence (UAI)}
\newcommand{\WADS}{Workshop on \Alg and Data Structures (WADS)}

\end{document}